  \providecommand\BibTeX{{%
    \normalfont B\kern-0.5em{\scshape i\kern-0.25em b}\kern-0.8em\TeX}}}
\newcommand{\pdom}{X}
\newcommand{\salg}{\Sigma}
\newcommand{\bsalg}{\Sigma_B}
\newcommand{\prob}{P}
\newcommand{\pre}[1]{{#1}^{-1}}
\newcommand{\db}{\mathbf{D}}
\newcommand{\rel}{\mathbf{R}}
\newcommand{\schemrel}{\mathcal{R}}
\newcommand{\schemdb}{\mathcal{S}}
\newcommand{\OMIT}[1]{}
\newtheorem{theorem}{Theorem}
\newtheorem{corollary}[theorem]{Corollary}
\newtheorem{lemma}[theorem]{Lemma}
\newcommand{\arity}{\texttt{arity}}
\newcommand{\Name}{\mathsf{Attr}}
\newcommand{\Num}{\mathbb{R}}
\newcommand{\Nulls}{\mathsf{Null}}
\newcommand{\Vals}{\mathbb{R}}
\newcommand{\dom}{\mathsf{dom}}
\newcommand{\bbR}{\mathbb{R}}
\newcommand{\bbN}{\mathbb{N}}
\newcommand{\cS}{\mathcal{S}}
\newcommand{\bt}{\bar{t}}
\newcommand{\ba}{\bar{a}}
\newcommand{\nulls}{\mathsf{Null}}
\newcommand{\sem}[1]{\llbracket{#1}\rrbracket}
\newcommand{\op}{\,\mathtt{op}\,}
\newcounter{example}[section]
\newenvironment{example}[1][]{\refstepcounter{example}\par\medskip
   \noindent \textbf{Example~\theexample. #1} \rmfamily}{\medskip}
\newcommand{\df}{:=}
\newcommand{\App}{\textsc{Apply}}
\newcommand{\Group}{\textsc{SumAgg}}
\newcommand{\barN}{\bar{N}}
\newcommand{\bara}{\bar{a}}
\newcommand{\brt}{\bar{t}}
\newcommand{\true}{\normalfont\textbf{t}}
\newcommand{\false}{\normalfont\textbf{f}}
\newcommand{\isempty}{\mathtt{empty}}
\newcommand{\all}{\mathtt{all}}
\newcommand{\any}{\mathtt{any}}
\newcommand{\eq}{=}
\newcommand{\sqlra}{\ensuremath{\textsc{RA}^*}}
\newcommand{\bagleft}{ \{\hspace{-3pt} \{ }
\newcommand{\bagright}{  \} \hspace{-3pt} \}}
\newcommand{\bag}[1]{ \bagleft {#1} \bagright }
\newcommand{\set}[1]{\{ {#1} \}}
  \newcommand{\condworld}{\mathcal C}
 \newcommand{\semcondworld}[1]{{#1}({\condworld})}
 \newcommand{\adom}{\mathsf{adom}}
 \newcommand{\adb}{\mathbf{A}}
 \newenvironment{repeatresult}[2]
 {\vskip0.5em\par\textsc{#1 #2.}\em}
 {\vskip1em}
 \newcommand{\RAT}{\textrm{RAT}}
\newcommand{\err}{\varepsilon}
\newcommand{\idb}{\db}
\newcommand{\cA}{\mathcal{A}}
\newcommand{\cB}{\mathcal{B}}
\newcommand{\cD}{\mathcal{D}}
\newcommand{\cP}{\mathcal{P}}
\newcommand{\cG}{\mathcal{G}}
\newcommand{\cQ}{\mathcal{Q}}
\newcommand{\cR}{\mathcal{R}}
\newcommand{\cF}{\mathcal{F}}
\newcommand{\cV}{\mathcal{V}}
\newcommand{\ndis}{\mathtt{N}}
\newtheorem{claim}{Claim}
\newcommand{\bc}{\bar{c}}
\newcommand{\be}{\bar{e}}
\newcommand{\bx}{\bar{x}}
\newcommand{\by}{\bar{y}}
\newcommand{\bz}{\bar{z}}
\newcommand{\bw}{\bar{w}}
\begin{document}

\title{Querying Incomplete Numerical Data:\\ Between Certain and Possible Answers}

\author{Marco Console}
\affiliation{%
  \institution{Sapienza, U.~of Rome}
  \city{}
  \country{}
  }
\email{console@diag.uniroma1.it}

\author{Leonid Libkin}
\affiliation{%
  \institution{U.~Edinburgh / RelationalAI / ENS}
  \city{}
  \country{}
  }
\email{libkin@inf.ed.ac.uk}

\author{Liat Peterfreund}
\affiliation{%
  \institution{CNRS / LIGM, U.~Gustave Eiffel}
  \city{}
 \country{}
}
\email{liat.peterfreund@u-pem.fr}

\begin{abstract}
Queries with aggregation and arithmetic operations, as well as incomplete data, are common in real-world database, but we lack a good understanding of how they should interact. On the one hand, systems based on
SQL provide ad-hoc rules for numerical nulls, on the other, theoretical research largely concentrates on the standard notions of certain and possible answers. 
In the presence of numerical attributes and aggregates, however, 
these answers are often meaningless, returning either too little or too much. 
Our goal is to define a principled framework for databases with numerical nulls and answering queries with arithmetic and aggregations over them.

Towards this goal, we assume that missing values in numerical attributes are given by probability distributions associated with marked nulls. This yields a model of probabilistic bag databases  in which tuples are not necessarily independent since nulls can repeat.
We provide a general compositional framework for query answering and then concentrate on queries that resemble standard SQL with arithmetic and aggregation. We show that these queries are measurable,
and  their outputs have a finite representation. Moreover, since the classical forms of answers provide little information in the numerical setting, we look at the probability that numerical values in output tuples belong to specific intervals. Even though their exact computation is intractable, we show 
efficient approximation algorithms to compute such probabilities. 
\end{abstract}
\settopmatter{printfolios=true}

\maketitle
%\tableofcontents

\newcommand{\sqlnull}{\mathtt{Null}}

\section{Introduction}
 
Handling incomplete data is a subject of key importance in databases,
but the practical side of it has many well-known deficiencies
\cite{date2005,datedarwen-sql,date-sigmodr}, and the state-of-the-art
in theoretical research has a rather narrow focus in terms of its
applicability (see, e.g., \cite{pods2020} for a recent survey). In
particular, systems based on SQL follow a rather specific approach to
handling incomplete data where all types of incompleteness are
replaced with a single null value, and specific rules are applied to
those nulls depending on their usage (e.g., 3-valued logic in
selection conditions with {\em true} tuples retained; 3-valued logic
in constraint conditions with {\em non-false} constraints satisfied,
syntactic equality of nulls for grouping and set operations). With
these rules frequently leading to undesired behaviors, many practical
applications attempt to impute incomplete data in a statistically
meaningful way, i.e., resembling the distribution of other data values
in a database \cite{imputation-book}. 
While standard query evaluation
can be used on the now complete data, the knowledge that the original
data was incomplete is lost for the users, and they may take the
answers as correct without the healthy dose of scepticism they
deserve.

There is little that database research can offer to mitigate this
problem, especially for numerical attribute values where imputation is
most commonly used. For them, typical queries use arithmetic
operations, aggregate functions, and numerical constraints. In fact,
queries of this kind are very common in practice and form the absolute
majority of standard benchmarks for database systems (e.g., TPC-H or
TPC-DS). For these queries, though, we have little theoretical
knowledge to rely on.

To explain why this is so, we observe that most of the existing
theoretical research on answering queries over incomplete data
concentrates on the notions of {\em certain} and {\em possible}
answers. The former is often viewed as the holy grail of query
answering, while the latter is often a fallback position in case we do
not have enough certain answers to show.  Both possible and certain
answers, however, are often meaningless for queries with arithmetic
and aggregation. Consider, for example, the following SQL query $q$

\begin{center}
\texttt{SELECT A, SUM(B)
  FROM R GROUP BY A WHERE B}$\ \ge 2$
\end{center}
on a relation $R(A,B)$ with tuples $(1,1)$ and $(1, \sqlnull)$, where
$\sqlnull$ denotes the SQL null value. If all we know about the
attribute $B$ is that it is an integer, then no answer is certain for
$q$ (in fact, as long as there is any uncertainty about the value that
$\sqlnull$ may take, the certain answer is empty). As for possible
answers, then {\em every} tuple $(1,c)$ for $2\le c\in\mathbb{N}$ is a
possible answer, in cases where $\sqlnull$ is interpreted as
$c$ (since the tuple $(1,1)$ is excluded due to the WHERE clause). 
Hence, the usefulness of such answers is limited.  A recent
attempt to provide proper substitutes for these notions was made in
\cite{KR20b} but the shape of answers presented is quite complex for
the user to comprehend. Another approach, specifically focused on
numerical attributes, was presented in \cite{pods20CHL}, but it does
not take into account aggregation and prior knowledge on the missing data.  As
for imputation, it would choose some value for $\sqlnull$, say $2$
based on the other data in the database, and then simply return the
tuple $(1,2)$ as the definitive truth -- without any hint to the user
as to the doubts of the veracity of the answer.% \marco{Edited up to
  % this point}

But what about mixing the probabilistic approach of imputation with
the classical certain/possible answers approach? For example, suppose
that, in the above example, the possible values for $\sqlnull$
are % an instance of a random
% variable
distributed according to the normal distribution with mean $2$ and
standard deviation $0.5$. Then, with probability approximately
$0.682$, the value of the SUM aggregate will fall into the
$[2.5, 3.5]$ interval. Unlike returning the value $(1,3)$, giving no
hint that data was incomplete in first place, this new type of answer
makes it very clear that our knowledge is only partial. Also, it
provides much more information than an empty certain answer (nothing is
certain) and an infinite possible answer (absolutely everything is
possible)

The goal of this paper is to define a model of incomplete numerical
data and a framework for query answering that naturally support
answers like the one above. As a first theoretical model of this kind,
our approach will not cover the entire SQL Standard and all the
numerical attributes in it. However, we shall cover a significant
portion of the query language (essentially, the usual
SELECT-FROM-WHERE-GROUP BY-HAVING queries) and use the real numbers
$\bbR$ as an abstraction for numerical attributes, for the ease of
reasoning about probability distributions.

Our first task will be to define the data model. The example above
hints that we want to treat nulls as random variables, according to
some distribution. In our model, for generality, we allow a
real-valued random variable for each null, % to be associated with
% each null
though, in a more down-to-earth approach, we could associate them with
individual attributes (say, the height of a person that is usually
assumed to follow a normal distribution \cite{height-book}). Thus,
unlike the usual probabilistic databases \cite{probabilistic-book},
our model deals with %probabilistic databases defined by 
{\em
  uncountably infinite} probability
spaces. % , since nulls are some random variables
% over $\bbR$.
Such databases were the focus of recent studies \cite{GroLin22,
  pods21groheetal,sigreg21GroheLindner}, which mainly focused on the
crucial issue of measurability of query mappings.  Those papers,
however, left open a crucial issue: how to construct suitable
probabilistic databases from real-world incomplete numerical data. As
our first task, we fill this gap and provide such a model. Our
construction is general enough so that queries could be viewed as
measurable maps between probability spaces given by infinite
probabilistic databases.

Then, we instantiate our framework to the case of databases with
marked (or repeating) nulls \cite{IL84,AKG91}. The fact that nulls can
repeat in different tuples will take us out of the realm of
tuple-independent probabilistic databases that have dominated the
field of probabilistic data so far. In fact, the need for a model of
``tuple-dependent'' databases even transcends our immediate
goal. Consider, for example, the  query

\begin{center}
\texttt{SELECT *
  FROM R JOIN S ON R.A = S.A}
\end{center}

\noindent 
with $R(A,B)$ having one fact $(1,\sqlnull)$ and $S(A,C)$ containing
$(1,2)$ and $(1,3)$. The end result has tuples $(1,\sqlnull,2)$ and
$(1,\sqlnull,3)$. Since $\sqlnull$ is defined by the same random
variable in both tuples, the probability that a tuple, say
$(1, 1, 2)$, is in the answer depends on the prior knowledge that
another tuple, say $(1, 1, 3)$, is in the answer. %probability $1$, and
%$(1, 2, 3)$, probability $0$. 
Thus, if we want to have
a compositional  query language, i.e., the ability to query the
results of other queries (taken for granted in languages like SQL), then we
cannot rely on the tuple-independent model, as query results will be
not be tuple-independent.

We next instantiate a query language. Following the long line of work
on algebras with aggregates \cite{jacm82klug,
  HLNW01,tcs03libkin,jcss97libkinwang,jcss96grumbachmilo} we provide a
simple yet powerful extension of relational algebra that captures key
features of SQL queries such as arithmetic and aggregation. We then
show that answers for these queries define measurable sets in our
probabilistic model. Thus, a meaningful form of answers for queries in
our extended algebra, and SQL, can be defined in our framework.

With that, we move to query answering. As hinted above, we first look
at computing the probability that values in the output tuples belong
to given intervals. We formulate this intuition as a computation
problem as well as the corresponding decision problem, and study their
complexity.  For the computation problem, we prove that results are
often transcendental, and even for simple distributions, that are
guaranteed to yield rational numbers, the problem is
$\sharp$P-hard \cite{complexity-book}. For the computation problem, we show that it is NP-hard.

This means that one needs to resort to approximations. We first show that the
standard FPRAS (Fully Polynomial Randomized Approximation Scheme
\cite{complexity-book}) approximation does not exist modulo some
complexity-theoretic assumptions. We do show however that, if we allow
additive error guarantees, approximations exist and are computable in
polynomial time. This is a reasonable relaxation since numbers we
compute are probabilities, hence in the interval $[0,1]$.

This development leaves two loose ends with respect to the feasibility
of our approach. First, when a user asks a query, that user expects an
output, not necessarily going tuple by tuple and asking for
probabilities. Can there be a compact representation of the infinite
answer space? We give the positive answer and explain how to construct
finite encodings of infinite probabilistic databases that capture
information in them up to a difference that has probability zero.

Second, our approximation algorithm creates several samples of the
missing data by instantiating nulls in the input database. Even
creating one such sample would be infeasible in real life, as it
amounts to, essentially, creating a copy of the entire input data just
to ask a query over it. To mitigate this issue, we show that sampling
can be done smartly, i.e., encoded in a query that is expressible in
our relational algebra. Thus, the sampling approximation algorithm can
be expressed by a single query ran over the input database.

The remainder of the paper is organized as follows.  We start by
recalling relevant notions in databases and probability theory in
Section~\ref{sec:prelims}. Then, in Section~\ref{sec:framework}, we
present our framework of probabilistic databases and its instantiation
by instances with marked nulls attached with probability distributions. In
Section~\ref{sec:sqlra}, we introduce the query language $\sqlra$ for
databases with numerical entries. Answers for $\sqlra$ in our
framework are then introduced in Section~\ref{sec:qa} where we also
study their measurability and complexity of exact and approximate
computation. We expand on approximation in Section~\ref{sec:apx}, 
where we present an efficient approximation algorithm for $\sqlra$
queries over probabilistic databases and show how it can be encoded within $\sqlra$ queries. 
Finally,
Section~\ref{sec:finiterepoutput} shows that answers for $\sqlra$ queries
can be represented finitely. In
Section~\ref{sec:conclusion} we conclude.

\section{Background}
\label{sec:prelims}

\subsection{Incomplete Databases}

\paragraph{Schemas and instances}
We assume a countable set 
%$\Name$ of attributes \emph{names}, and an infinite disjoint set 
$\schemrel$ of \emph{relation schemas}. 
Each {relation schema} $R\in \schemrel$ has an associated \emph{arity} denoted by $\arity(R)$.
A \emph{schema} $\schemdb$ is a set of relation schemas.

We work with the model of marked nulls, i.e., we assume a countable set $\Nulls\df\{\bot_1,\bot_2,\ldots \}$ of \emph{nulls}. The entries in databases come 
from the set  $\Vals\cup \Nulls$. That is, we assume that numerical values come from 
$\Vals$; since non-numerical values can be enumerated in an arbitrary way, e.g., as $1,2,\ldots$, we assume without loss of  generality that all non-null entries come from 
$\mathbb{R}$.

A \emph{tuple} of arity $k$ is an element in $(\Vals \cup \Nulls)^k$. 
We follow SQL's bag semantics, that is, 
a \emph{relation} $\rel$ over a relation schema $\schemrel$ is a finite multi-set (bag) of tuples with arity $\arity(R)$.
%A relation instance is said to be \emph{incomplete} if it has at least entry in $\Nulls$; Otherwise, it is said to be \emph{complete}.
An \emph{incomplete database} $\db$, or just \emph{database} for short, over a schema $\schemdb$ consists of relations $\mathbf R^{\db}$ over each relation schema $\schemrel$ in $\schemdb$. 
We define $\adom(\db)$ as the set of all entries of $\db$, and $\nulls(\db)$ as all entries of $\db$ that are also in $\Nulls$. 
A database $\db$ is \emph{complete} if all its relations instances are \emph{complete}, that is, do not include nulls; Or, alternatively, if $\nulls(\db) = \emptyset$. 
%To emphasize, we denote complete databases
%by $\db$, and incomplete ones by $\idb$.

A \emph{database query} $q$ over database schema $\cS$ is a
function from databases over $\cS$ into databases over
$\cS'$, for some schema $\cS'$ that consists of a single relation symbol. We write
$q : \cS \to \cS'$ when schemas need to be specified explicitly.

\paragraph{Valuations}
%To transform incomplete instances into complete ones, 
Valuations assign values to nulls. Formally, 
a \emph{valuation} $v:\Nulls \rightarrow \Vals$ is a partial mapping, whose \emph{domain}, i.e., the subset of $\Nulls$ on which it is defined, is denoted by $\dom(v)$. 
%We say that $v$ is \emph{a valuation for $\dom(v)$}.
%whose domain is denoted by $\dom(v) \subseteq \Nulls$.
We lift valuations to databases by
defining $v(\idb)$ as the database that is obtained from $\idb$ by replacing every null entry $\bot$ with $v(\bot)$.

\subsection{Probability Theory}
We present the basic notions of probability we use.

\paragraph{Probability spaces}
A \emph{$\sigma$-algebra} on a set $\pdom$ is a family $\salg$ of subsets of $\pdom$ such that $\pdom\in \salg$ and $\salg$
is closed under complement and countable unions. 
If $\salg'$ is a family of subsets of $\pdom$, then
the $\sigma$-algebra \emph{generated by} $\salg'$, denoted by $\sigma(\salg')$, is the smallest $\sigma$-algebra on $\pdom$ containing $\salg'$. 
A \emph{measurable space} is a pair $(\pdom,\salg)$ with $\pdom$ an arbitrary set and $\salg$ a $\sigma$-algebra on $\pdom$. 
Subsets of $\pdom$ are called \emph{$\salg$-measurable} if they belong to $\salg$. 
A \emph{probability measure} on $(\pdom,\salg)$ is a function $P:\salg \rightarrow[0,1]$ with $P(\emptyset) = 0$, $P(X) = 1$ and $P(\cup_{i=1}^{n} {\sigma}_i ) = \sum_{i=1}^n P({\sigma}_i)$ for every sequence 
${\sigma}_1,\ldots, {\sigma}_n$ of disjoint measurable sets.
A measurable space equipped with a probability measure is called a \emph{probability space}.
When $\pdom \df \mathbb{R}$, we implicitly assume that $\Sigma$ is the Borel $\sigma$-algebra $\bsalg
$ (i.e., the one generated by open intervals $(a,b)$ for $a<b \in \mathbb{R}$), and hence, we only specify the density function that suffices to determine the probability space.

\paragraph{Measurable mappings}
If $(\pdom,\salg)$ and $(\pdom',\salg')$ are measurable spaces, and $\phi: \pdom \rightarrow \pdom'$ is a mapping then we say that $\phi$ is $(\salg,\salg')$-\emph{measurable}
(or simply measurable if $\salg,\salg'$ are clear from context) 
if the preimage $\pre{\phi}$ under $\phi$
of every $\salg'$-measurable set is $\salg$-measurable.

For additional details, we refer the
reader to \cite[Section 5]{probability-book}.

\section{Querying and Instantiating Probabilistic Databases}
\label{sec:framework}
Our model of probabilistic databases describes all possible data
configurations together with a probability measure defining how likely
they are to occur.  This intuition is captured by the following notion.

\begin{definition}
  \label{def:pdb}
  A \emph{probabilistic database (PDB)} $\cD$ over a database schema
  $\schemdb$ is a probability space
whose domain is a set of complete databases over $\schemdb$.
\end{definition}

Notice that, similarly to \cite{GroLin22}, the domain  of a
PDB $\cD$ may be uncountable due to the fact that $\cD$
uses reals as entries. However, in most
models of probabilistic databases known in the literature, one assumes \emph{tuple independence}, i.e., the presence of
each tuple is an independent event.
% from the others. 
This is not the case for the PDBs presented in
Definition~\ref{def:pdb} where no restriction is imposed on the
probability distribution.

\subsection{Querying PDBs}
To define queries for PDBs, we start from the standard notion of
queries for complete databases.  Let $q$ be a query over a schema
$\schemdb$, and let $\cD \df (\pdom_\cD, \salg_\cD, \prob_\cD)$ be a
probabilistic database over $\schemdb$. % where $\pdom_\cD$ is the set
% of all complete databases over $\schemdb$, $\salg_\cD$ is a
% $\sigma$-algebra, and $\prob_\cD$ a probability measure.
For a query
$q$, we write $\pdom_{q,\cD}$ to denote the set
$\{q(\db)~\mid~ \db \in \pdom_\cD\}$ of all possible answers for $q$
over $\pdom_\cD$.  We equip $\pdom_{q,\cD}$ with the $\sigma$-algebra
$\salg_{q,\cD}$ generated by the family
$\{ A \subseteq \pdom_{q,\cD}~\mid~ \pre{q}(A) \in \salg_\cD\} $.
% ,
%i.e.,
%$\salg_{q,\cD}\df \sigma \{ A \subseteq \pdom_{q,\cD}~\mid~ \pre{q}(A)
%\in \pdom_\cD\}$. 
In other words, $\salg_{q,\cD}$ is the $\sigma$-algebra generated by
those images of $q$ that are measurable in $\cD$. For each element
$\sigma$ of $\salg_{q, \cD}$, we set
$\prob_{q,\cD}(\sigma) \df \prob_\cD(\pre{q}(\sigma))$, provided the latter is
defined.

\sloppy
\begin{proposition}
  \label{pr:query-measurability}
  Given a PDB $\cD$ over a schema $\schemdb$, and a query $q$ over
  $\schemdb$, 
  \begin{itemize}
      \item $q$ is $(\salg_{\cD},\salg_{q,\cD})$-measurable; and 
      \item $\prob_{q,\cD}$ is defined over every element of $\salg_{q, \cD}$ and is a probability measure for
        $(\pdom_{q,\cD}, \salg_{q,\cD})$.
  \end{itemize}
  \end{proposition}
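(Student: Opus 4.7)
The plan is to observe that the family $\mathcal{F} \df \{A \subseteq \pdom_{q,\cD}~\mid~\pre{q}(A) \in \salg_\cD\}$ that generates $\salg_{q,\cD}$ is in fact already a $\sigma$-algebra on $\pdom_{q,\cD}$, so that $\salg_{q,\cD} = \mathcal{F}$. Once this is established, measurability of $q$ is immediate from the defining condition of $\mathcal{F}$, and the two properties of $\prob_{q,\cD}$ follow by transporting the corresponding properties of $\prob_\cD$ along preimages of $q$.

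To show that $\mathcal{F}$ is a $\sigma$-algebra, I would verify the three closure conditions in turn. First, $\pdom_{q,\cD} \in \mathcal{F}$ because $q$ takes values in $\pdom_{q,\cD}$ by construction, so $\pre{q}(\pdom_{q,\cD}) = \pdom_\cD \in \salg_\cD$. Next, preimages commute with complements and countable unions, namely $\pre{q}(\pdom_{q,\cD} \setminus A) = \pdom_\cD \setminus \pre{q}(A)$ and $\pre{q}(\bigcup_i A_i) = \bigcup_i \pre{q}(A_i)$, so closure of $\mathcal{F}$ under these two operations reduces to the corresponding closure of $\salg_\cD$. Consequently $\mathcal{F} = \sigma(\mathcal{F}) = \salg_{q,\cD}$, and the first bullet is precisely the statement ``$A \in \salg_{q,\cD}$ implies $\pre{q}(A) \in \salg_\cD$'', which is the defining condition for membership in $\mathcal{F}$.

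For the second bullet, for every $\sigma \in \salg_{q,\cD}$ we now have $\pre{q}(\sigma) \in \salg_\cD$, so $\prob_{q,\cD}(\sigma) \df \prob_\cD(\pre{q}(\sigma))$ is well-defined throughout $\salg_{q,\cD}$. The probability axioms then reduce to preimage bookkeeping: $\prob_{q,\cD}(\emptyset) = \prob_\cD(\pre{q}(\emptyset)) = \prob_\cD(\emptyset) = 0$; $\prob_{q,\cD}(\pdom_{q,\cD}) = \prob_\cD(\pdom_\cD) = 1$; and for a sequence $\sigma_1, \sigma_2, \ldots$ of pairwise disjoint elements of $\salg_{q,\cD}$, the preimages $\pre{q}(\sigma_i)$ are pairwise disjoint in $\salg_\cD$, so additivity of $\prob_{q,\cD}$ is inherited directly from additivity of $\prob_\cD$ via the identity $\pre{q}(\bigcup_i \sigma_i) = \bigcup_i \pre{q}(\sigma_i)$.

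I do not expect a genuine obstacle here; the substantive content is the single observation that the generating family $\mathcal{F}$ is already closed under the $\sigma$-algebra operations, which sidesteps any Carath\'eodory-style extension argument. In essence, what we are constructing is the standard push-forward $\sigma$-algebra and push-forward measure from measure theory, specialised to the setting of PDBs and database queries; the proposition is really a sanity check confirming that the definition of $\salg_{q,\cD}$ and $\prob_{q,\cD}$ behaves in the expected way.
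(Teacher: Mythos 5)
Your proposal is correct, and it takes a slightly different (and in fact sharper) route than the paper. The paper proves the first bullet by noting that $q$ pulls back every \emph{generator} of $\salg_{q,\cD}$ into $\salg_\cD$ and then invoking the standard lemma that measurability on a generating family implies measurability on the generated $\sigma$-algebra; for the second bullet it identifies $\prob_{q,\cD}$ as the push-forward of $\prob_\cD$ under the (now measurable) map $q$ and cites the general fact that a push-forward of a measure under a measurable map is a measure (Proposition~\ref{pr:pushforward}). You instead observe that the generating family $\{A \subseteq \pdom_{q,\cD} \mid \pre{q}(A) \in \salg_\cD\}$ is already a $\sigma$-algebra — because preimages commute with complements and countable unions and $\pre{q}(\pdom_{q,\cD}) = \pdom_\cD$ — so $\salg_{q,\cD}$ coincides with it, making measurability of $q$ definitional and the well-definedness of $\prob_{q,\cD}$ on all of $\salg_{q,\cD}$ immediate; you then verify the probability axioms by hand rather than quoting the push-forward lemma. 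Both arguments are sound and rest on the same preimage identities; what yours buys is self-containedness and the stronger structural fact that $\salg_{q,\cD}$ is exactly the finest $\sigma$-algebra making $q$ measurable (no extension or generator argument needed), while the paper's buys brevity by delegating to two textbook results. One cosmetic remark: the paper's definition of a probability measure only demands finite additivity over disjoint families, so your verification of countable additivity is more than enough.
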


In view of Proposition \ref{pr:query-measurability}, we can define the
probability space for the answers to $q$ over $\cD$ as
follows.

\begin{definition}\label{def:ansspace}
  The \emph{answer space of $q$ over $\cD$} is defined as the probabilistic database
  $q(\cD) \df (\pdom_{q,\cD}, \salg_{q,\cD}, \prob_{q,\cD})$.
\end{definition}

We point out next  
that, contrary to tuple-independent probabilistic databases, our model
preserves composability of query languages.\footnote{Note that even the simplest operations, e.g., computing union of a relation with itself, results in a violation of tuple-independence.}  
We say that a pair $(q,q')$ of
queries is \emph{composable} if
$q : \schemdb \to \schemdb'$ and $q' : \schemdb' \to
\schemdb''$. In this case, we define their \emph{composition} $(q' \circ q)
%\liat{I changed the order of the notation to make it less confusing, you can revert if you object}
(\db) \df q'\left( q (\db) \right)$.  
Can we lift composability to PDBs?
%Is composability of
%queries preserved in our probabilistic model?  The answer % to this
% question
%is positive as the following proposition shows.
The following proposition shows that the answer to this question is positive.
\begin{proposition}
  \label{pr:composability}
  Let %$q_1 : \schemdb_1 \to \schemdb_2$ and
  %$q_2 : \schemdb_2 \to \schemdb_3$ 
  $(q,q')$ be composable queries. Then  $(q' \circ q)(\cD) = q'(q(\cD))$ for every PDB $\cD$ and query $q$ for $\cD$.
\end{proposition}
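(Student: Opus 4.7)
The plan is to unfold the two sides as probability spaces and verify, one component at a time, that they share the same carrier set, the same $\sigma$-algebra of measurable sets, and the same probability measure. Writing the three components of $(q' \circ q)(\cD)$ as $(\pdom_{q'\circ q, \cD}, \salg_{q'\circ q, \cD}, \prob_{q'\circ q, \cD})$ and of $q'(q(\cD))$ as $(\pdom_{q', q(\cD)}, \salg_{q', q(\cD)}, \prob_{q', q(\cD)})$, I expect equality of domains to fall out immediately by unfolding: $\{(q'\circ q)(\db) : \db \in \pdom_\cD\} = \{q'(q(\db)) : \db \in \pdom_\cD\} = \{q'(\db') : \db' \in \pdom_{q,\cD}\}$.

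For the $\sigma$-algebras, the pivotal observation I would prove first is that the generating family $\cF_q \df \{B \subseteq \pdom_{q,\cD} : \pre{q}(B) \in \salg_\cD\}$ is already a $\sigma$-algebra, hence coincides with $\salg_{q,\cD}$ itself without any further closure. This follows from the standard fact that preimages commute with complements and countable unions, together with $\pre{q}(\pdom_{q,\cD}) = \pdom_\cD$. Using the identity $\pre{(q'\circ q)}(A) = \pre{q}(\pre{q'}(A))$, I can then translate back and forth: a set $A \subseteq \pdom_{q', q(\cD)}$ satisfies $\pre{(q'\circ q)}(A) \in \salg_\cD$ iff $\pre{q'}(A) \in \salg_{q,\cD}$, i.e.\ iff $A$ lies in the generator of $\salg_{q', q(\cD)}$. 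Thus the two generating families coincide, and so do the $\sigma$-algebras they generate.

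Once the $\sigma$-algebras are identified, both probability measures are defined on the same domain, and each evaluates an arbitrary measurable $\sigma$ to $\prob_\cD(\pre{q}(\pre{q'}(\sigma)))$: for $\prob_{q'\circ q, \cD}$ this is the definition applied directly, while for $\prob_{q', q(\cD)}$ it is the definition applied twice, first expanding via $\prob_{q,\cD}(\pre{q'}(\sigma))$ and then via $\prob_\cD(\pre{q}(\cdot))$. Hence the measures agree pointwise on every measurable set.

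The main delicate point, and the place where I would spend most care, is the step asserting that $\salg_{q,\cD}$ equals $\cF_q$ rather than being strictly larger. Without it, one only gets the inclusion $\salg_{q', q(\cD)} \subseteq \salg_{q'\circ q, \cD}$ from the measurability half of Proposition~\ref{pr:query-measurability}, and the reverse inclusion would be missing. Once this observation is nailed down, composability becomes essentially a reshuffle of the identity $\pre{(q'\circ q)} = \pre{q}\circ \pre{q'}$ applied at the level of both measurable sets and measures.
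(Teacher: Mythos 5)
Your proof is correct, and it takes a cleaner route than the paper on the one genuinely delicate point, the equality of $\sigma$-algebras. The paper proves the same three facts in the same order (equal domains, equal $\sigma$-algebras, equal measures), but it establishes $\salg_{q'\circ q,\cD}=\salg_{q',q(\cD)}$ by chasing the two inclusions separately: starting from $A$ measurable on one side it takes the preimage into $\salg_\cD$ via the measurability part of Proposition~\ref{pr:query-measurability}, then asserts the existence of intermediate sets ($C\in\salg_{q,\cD}$ with $\pre{q}(C)$ equal to that preimage, etc.) and recovers $A$ using surjectivity of the query maps onto their answer spaces. Your key lemma --- that the generating family $\{A\subseteq\pdom_{q,\cD}\mid \pre{q}(A)\in\salg_\cD\}$ is already a $\sigma$-algebra, because preimages commute with complements and countable unions and $\pre{q}(\pdom_{q,\cD})=\pdom_\cD$, so that $\salg_{q,\cD}$ is exactly this family (the final/pushforward $\sigma$-algebra) --- makes those intermediate existence claims unnecessary: the two $\sigma$-algebras are then literally the same family of sets, by the identity $\pre{(q'\circ q)}=\pre{q}\circ\pre{q'}$, and the argument needs no appeal to surjectivity. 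In effect you isolate and prove the fact that the paper's phrase ``from the definition of answer spaces'' silently relies on, which is exactly the step you flag as the place needing care; this buys a shorter and more robust argument. The treatment of the domains and of the measures (both sides evaluate $\sigma$ to $\prob_\cD(\pre{q}(\pre{q'}(\sigma)))$) is the same in both proofs.
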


Note that composability is a consequence of the  answer space definition, and is independent of the query language.
\OMIT{
A family $\cQ$ of queries is closed under composition if, for every
pair of composable queries $(q,q')$ from $\cQ$, also $q' \circ q$ is
in $\cQ$. Note that \sqlra, when restricted to complete instances, is
indeed closed under composition.
A further consequence of Proposition~\ref{pr:composability} is that
closure under composition is preserved over PDBs. In fact, if $\cQ$ is
closed under composition, for every composable $(q, q')$ in $\cQ$,
$q' \circ q \in \cQ$ defines the composition of $(q, q')$ also over
PDBs.
}

\subsection{PDBs from Incomplete Databases}
% Our definition of PDBs is
% very general. 
% So general, in fact, that we may not be able to
% represent probabilistic databases in the model in a finite and compact
% form. 
% \liat{add a ref to grohe's saying}
% In order to define databases that can actually arise from
% practical use cases, we specialize our model to databases defined via
% valuations of nulls.

% To this end, we first enrich our model of incomplete instances with
% marked nulls with probabilities. Intuitively, each null symbol in the
% model comes with a real random variable that defines how likely it is
% for the null to take specific values.

We now show how the PDBs we introduced in Definition~\ref{def:pdb}
can be instantiated naturally via incomplete databases.

To associate an incomplete database $\idb$ with a PDB, we  equip each
$\bot_i \in \nulls$
with a probability density function $P_i$. 

We assume without loss of generality that $\nulls(\idb) \df \{\bot_1,\ldots, \bot_n \}$, and
we then define the
%For an incomplete database $\idb$ we define 
\emph{measurable space of valuations} $(\pdom_V, \salg_V)$ of
$\nulls(\idb)$ such that $\pdom_V$ is the set of all valuations $v$
with $\dom(v) = \nulls(\idb)$,
%that are compatible with $\nulls(\idb)$
 and $\salg_V$ is the $\sigma$-algebra generated
by the family
 % \[\{V(\sigma_1, \ldots, \sigma_n)~\mid~\sigma_i \in \bsalg, \textit{
 %     each } i = 1 \ldots, n\}\]
consisting of the sets
 \[
  V(\sigma_1, \ldots, \sigma_n) \df \{v~\mid~ v(\bot_i)\in \sigma_i  \}
 \]
 for each $\sigma_1,\ldots, \sigma_n$  in the Borel $\sigma$-algebra
 $\bsalg$.  Notice that the range of the valuations of each such set
 is a Cartesian product of sets in $\bsalg$. The events defined by
 these sets, i.e., $v(\bot_i) \in \sigma_i$, for each
 $i = 1, \ldots, n$, are assumed to be mutually independent.

\OMIT{
To associate an incomplete database $\idb$ with a PDB, we  equip each
$\bot_i \in \nulls$
with a probability density function $P_i$. 
We then define the
%For an incomplete database $\idb$ we define 
\emph{measurable space of valuations} $(\pdom_V, \salg_V)$ of
$\nulls(\idb)$ such that $\pdom_V$ is the set of all valuations $v$
with $\dom(v) = \nulls(\idb)$
%that are compatible with $\nulls(\idb)$
, and $\salg_V$ is the $\sigma$-algebra generated
by the family
 % \[\{V(\sigma_1, \ldots, \sigma_n)~\mid~\sigma_i \in \bsalg, \textit{
 %     each } i = 1 \ldots, n\}\]
consisting of the sets
 \[
  V(\sigma_1, \ldots, \sigma_n) \df \{v~\mid~ v(\bot_i)\in \sigma_i  \}
 \]
 for each $\sigma_1,\ldots, \sigma_n$ is in the Borel $\sigma$-algebra
 $\bsalg$.  Notice that the range of the valuations of each such set
 is a Cartesian product of sets in $\bsalg$. The events defined by
 these sets, i.e., $v(\bot_i) \in \sigma_i$, for each
 $i = 1, \ldots, n$, are assumed to be mutually independent.
}

To reflect this, we would like to define a probability function $\prob_V$ for which the following requirement holds:
\begin{equation}
  \label{eq:product}
  \prob_V(V(\sigma_{1}, \ldots, \sigma_{n})) = \Pi_{i=1}^n \prob_{i}(\sigma_{i})
\end{equation}
As it turns out, this requirement 
%in Equation~\ref{eq:product} is enough 
%to define a
determines
a unique probability measure for $(\pdom_V, \salg_V)$.
\begin{theorem}
  \label{th:val-space-unique}
  There exists a unique probability measure $\prob_V$ for the
  measurable space of valuations of $\nulls(\idb)$ that satisfies
  Equation~(\ref{eq:product}).
\end{theorem}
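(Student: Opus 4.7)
The plan is to reduce this statement to the classical existence and uniqueness theorem for finite product measures on $(\bbR^n, \bsalg^n)$. First, I would identify $\pdom_V$ with $\bbR^n$ through the natural bijection $v \mapsto (v(\bot_1), \ldots, v(\bot_n))$; under this identification, each set $V(\sigma_1,\ldots,\sigma_n)$ becomes the measurable rectangle $\sigma_1 \times \cdots \times \sigma_n$, and the generating family for $\salg_V$ becomes precisely the family of Borel rectangles in $\bbR^n$, which is well-known to generate the product Borel $\sigma$-algebra $\bsalg^n$. This reframes the statement as: there exists a unique probability measure on $(\bbR^n, \bsalg^n)$ that assigns $\prod_{i=1}^n \prob_i(\sigma_i)$ to every Borel rectangle.

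For existence, I would invoke the standard product measure construction: starting from the probability measures $\prob_1, \ldots, \prob_n$ on $(\bbR, \bsalg)$, define a set function on the algebra of finite disjoint unions of measurable rectangles by the product formula, check that it is a well-defined, countably additive pre-measure (using, for finite products, Fubini-type reasoning or an inductive application of the two-factor product measure theorem), and then extend uniquely to $\bsalg^n$ via the Carathéodory/Hahn--Kolmogorov extension theorem. Since each $\prob_i$ is a probability measure, the resulting extension is automatically a probability measure on $(\pdom_V, \salg_V)$.

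For uniqueness, I would appeal to Dynkin's $\pi$-$\lambda$ theorem. The family of rectangles $\{V(\sigma_1,\ldots,\sigma_n) : \sigma_i \in \bsalg\}$ is closed under finite intersections (intersect componentwise: $V(\sigma_1,\ldots,\sigma_n) \cap V(\tau_1,\ldots,\tau_n) = V(\sigma_1 \cap \tau_1, \ldots, \sigma_n \cap \tau_n)$), so it is a $\pi$-system generating $\salg_V$. Any two probability measures satisfying Equation~(\ref{eq:product}) coincide on this $\pi$-system by construction, hence they coincide on all of $\salg_V$.

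The main subtlety, rather than a deep obstacle, is justifying countable additivity of the pre-measure on the algebra generated by rectangles; this is the routine-but-nontrivial heart of the Fubini/product-measure machinery, and I would simply cite it from a standard reference such as \cite{probability-book} rather than reprove it. The rest of the argument is bookkeeping to translate between valuations and tuples in $\bbR^n$.
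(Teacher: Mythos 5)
Your proposal is correct and follows essentially the same route as the paper: identify valuations with points of $\bbR^n$, obtain existence from the classical product-measure theorem (the paper transports it by a push-forward along the measurable bijection rather than by redoing the Carath\'eodory construction), and derive uniqueness from the fact that the generating rectangles form a $\pi$-system, which is exactly the paper's use of the $\pi$-system uniqueness result and its Claim that the generators are intersection-closed. No gaps.
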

From now on, we refer to $(\pdom_V, \salg_V, \prob_V)$ as the
%\emph{probability space of valuations} (or simply 
\emph{valuation space} of
$\nulls(\idb)$.
%
 %\subsubsection{Probabilistic Databases From Valuations}
%%%%%%
%Intuitively, the valuation space of $\idb$ defines how likely specific
%valuations are according to the random variables defining
%$\nulls(\idb)$. 
The valuation space, however, does not define a PDB, since multiple
valuations may define the same complete database.\footnote{For
  example, given a Unary relation that consists of two nulls, for any
  valuation there is a symmetric one (that replaces one nulls with the
  other) that defines the same database.}
%To overcome this gap, we present a way to transform
%valuation spaces into database spaces.%  based on the notion of
% \emph{quotient measurable spaces}.

% A \emph{probabilistic database} $\cD$ for a schema $\schemdb$ is a
% probability space $(\pdom_\cD, \salg_\cD, \prob_\cD)$ where
% $\pdom _\cD$ is a set of databases over $\schemdb$.
% Intuitively, given $\idb$ as above, our goal is to define a
% probabilistic database that behaves according to the valuation space
% of $(\pdom_V, \salg_V, \prob_V)$ of $\idb$.
%
To overcome this, we define the PDB of $\idb$ as follows. Let $\pdom_\idb$ be
the set of all possible instantiations of $\idb$, i.e.,\sloppy{
  $\pdom_\idb \df \{ v(\idb) ~\mid~% v %\textit{ is a valuation for }
 \dom(v) = \nulls(\idb) \}$,}
and let 
%We define the mapping 
$\chi : \pdom_V \to \pdom_\idb$ be the mapping such that
$\chi(v) \df v(\idb)$. We set
\[\salg_\idb \df \{A ~\mid~ \pre{\chi}(A) \in \salg_V\}.\] 
%In simple words,
%$\salg_\idb$ contains all the sets of databases whose preimage under
%$\chi$ is measurable in $\salg_\cV$. % the valuation space of $\idb$.

\begin{proposition}
  \label{pr:sigma-db}
  $\salg_\idb$ is a $\sigma$-algebra on $\pdom_\idb$ for every incomplete database $\idb$.
\end{proposition}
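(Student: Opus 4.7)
The plan is to verify directly that $\salg_\idb$ satisfies the three defining properties of a $\sigma$-algebra, exploiting the elementary fact that preimages under a total function commute with complements and with arbitrary unions. Indeed, since $\chi : \pdom_V \to \pdom_\idb$ is total and surjective onto $\pdom_\idb$ by construction, for every family $\{A_i\}_{i \in I}$ of subsets of $\pdom_\idb$ we have $\pre{\chi}(\bigcup_i A_i) = \bigcup_i \pre{\chi}(A_i)$ and $\pre{\chi}(\pdom_\idb \setminus A_i) = \pdom_V \setminus \pre{\chi}(A_i)$. These identities reduce every check on $\salg_\idb$ to the corresponding property of $\salg_V$.

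First, I would show $\pdom_\idb \in \salg_\idb$: since $\chi$ is total, $\pre{\chi}(\pdom_\idb) = \pdom_V$, and $\pdom_V \in \salg_V$ because $\salg_V$ is a $\sigma$-algebra (by Theorem~\ref{th:val-space-unique} and the preceding construction). Next, for closure under complement, given $A \in \salg_\idb$ we have $\pre{\chi}(A) \in \salg_V$, hence $\pre{\chi}(\pdom_\idb \setminus A) = \pdom_V \setminus \pre{\chi}(A) \in \salg_V$, so $\pdom_\idb \setminus A \in \salg_\idb$. Finally, for closure under countable unions, given $A_1, A_2, \ldots \in \salg_\idb$, each $\pre{\chi}(A_i)$ lies in $\salg_V$, and therefore $\pre{\chi}(\bigcup_{i \in \bbN} A_i) = \bigcup_{i \in \bbN} \pre{\chi}(A_i) \in \salg_V$, yielding $\bigcup_{i \in \bbN} A_i \in \salg_\idb$.

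There is no real obstacle here: the construction is an instance of the standard \emph{final $\sigma$-algebra} (or pushforward) induced by a map into a set, and the whole content of the proposition is the three-line verification above. The only subtlety worth flagging is the use of totality of $\chi$ (equivalently, that $\pdom_\idb$ is defined exactly as the image of $\chi$), which is what makes the preimage identity for complements hold on the nose; this is immediate from the definition $\pdom_\idb \df \{v(\idb) \mid \dom(v) = \nulls(\idb)\}$.
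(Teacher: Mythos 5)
Your proof is correct and follows essentially the same route as the paper's: both verify the $\sigma$-algebra axioms for $\salg_\idb$ by reducing them, via the identities $\pre{\chi}(\pdom_\idb\setminus A)=\pdom_V\setminus\pre{\chi}(A)$ and $\pre{\chi}(\bigcup_i A_i)=\bigcup_i\pre{\chi}(A_i)$, to the corresponding properties of $\salg_V$. Your explicit check that $\pdom_\idb\in\salg_\idb$ and the remark that this is the standard final/pushforward $\sigma$-algebra are fine additions but do not change the argument.
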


With Proposition~\ref{pr:sigma-db} in place, 
%we define the
%\emph{measurable space of databases} for $\idb$ as the measurable
%space $(\pdom_\idb, \salg_\idb)$. To define a probability measure for
we can move to define a probability measure for  the measurable
space $(\pdom_\idb, \salg_\idb)$.
%such space, we use the probability defined for the valuation space of
%$\idb$. 
To this end, we define $\prob_\idb : \salg_\idb \to [0,1]$ %be the function 
%such
%that 
by setting, for each
$A \in \salg_\idb$, 
\[\prob_\idb(A) \df \prob_V(\pre{\chi}(A))\]  
%such that $\pre{\chi}(A) \in \salg_V$.
\begin{proposition}
  \label{pr:prob-dbs}
$\prob_\idb$ is a probability measure for  $(\pdom_\idb, \salg_\idb)$ 
%the
%  measurable space of databases for $\idb$.
for every incomplete database $\idb$.
\end{proposition}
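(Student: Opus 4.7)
The plan is to derive each axiom of a probability measure for $\prob_\idb$ directly from the corresponding property of $\prob_V$ by pushing it through the preimage operator $\pre{\chi}$. The key observations I will use throughout are that (i) $\chi$ is surjective (every complete database in $\pdom_\idb$ arises as $v(\idb)$ for some valuation $v$ with $\dom(v)=\nulls(\idb)$, by the definition of $\pdom_\idb$), and (ii) for any function, preimages commute with arbitrary unions, intersections, and complements, and send disjoint families to disjoint families. Together with the definition $\salg_\idb=\{A\mid \pre{\chi}(A)\in\salg_V\}$, this guarantees that every $\prob_V(\pre{\chi}(A))$ on the right-hand side of $\prob_\idb(A)\df\prob_V(\pre{\chi}(A))$ is well defined, so there is no well-definedness gap to bridge.

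First I would verify the two boundary conditions. Since $\pre{\chi}(\emptyset)=\emptyset\in\salg_V$ and $\prob_V(\emptyset)=0$ by Theorem~\ref{th:val-space-unique}, we get $\prob_\idb(\emptyset)=0$. For the normalization, surjectivity of $\chi$ gives $\pre{\chi}(\pdom_\idb)=\pdom_V$, so $\prob_\idb(\pdom_\idb)=\prob_V(\pdom_V)=1$.

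Next I would establish additivity. Let $A_1,A_2,\ldots$ be a pairwise disjoint sequence in $\salg_\idb$. Because $\chi$ is a function, the family $\{\pre{\chi}(A_i)\}_i$ is also pairwise disjoint, and each $\pre{\chi}(A_i)$ lies in $\salg_V$ by the definition of $\salg_\idb$. Using that preimages commute with unions,
\[
\pre{\chi}\!\left(\bigcup_i A_i\right)\;=\;\bigcup_i \pre{\chi}(A_i),
\]
so applying additivity of $\prob_V$ yields
\[
\prob_\idb\!\left(\bigcup_i A_i\right)=\prob_V\!\left(\bigcup_i \pre{\chi}(A_i)\right)=\sum_i \prob_V(\pre{\chi}(A_i))=\sum_i \prob_\idb(A_i).
\]
Finally, since $\prob_V$ takes values in $[0,1]$, so does $\prob_\idb$.

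There is no real obstacle here; the argument is essentially the standard pushforward-measure construction, specialized to the surjection $\chi\colon\pdom_V\to\pdom_\idb$. The only point that requires a moment of care is confirming that $\salg_\idb$ has been defined precisely so that $\pre{\chi}$ sends $\salg_\idb$ into $\salg_V$, which is exactly Proposition~\ref{pr:sigma-db}; once this is in hand, every step above is bookkeeping on preimages.
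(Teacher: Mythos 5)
Your proof is correct and follows essentially the same route as the paper: the paper observes that $\prob_\idb$ is the pushforward of $\prob_V$ under $\chi$, notes that $\chi$ is measurable by the very definition of $\salg_\idb$, and cites the general pushforward result, whereas you simply unfold that standard argument by verifying the axioms directly via preimage bookkeeping (your surjectivity remark for normalization is fine, though $\pre{\chi}(\pdom_\idb)=\pdom_V$ already holds for any map into $\pdom_\idb$).
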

As a consequence of the previous results, we now have a well-defined concept of the probabilistic space of $\idb$. We record this in the following definition.
\begin{definition} \label{def:pdbofidb}
  The \emph{PDB of $\idb$} is $(\pdom_\idb, \salg_\idb, \prob_\idb)$.
\end{definition}
With a slight abuse of notation, from now on, we shall sometimes refer to the PDB of $\idb$, simply as $\idb$.

\section{Query Language: Extended Relational Algebra}\label{sec:sqlra}

\newcommand{\att}[1]{\$#1}

\newcommand{\figterms}{
\[t \ \df \  \, c \, |\, \att{i} \, |\,  t + t \,\mid\, t - t \,\mid\, t \cdot t \,\mid\, t \div t, 
\,\mid\,
\, |\, \exp{(t)} \, |\, \log{(t)} 
|\,f(t_1,\cdots,t_k)
\,\,\,\,\,\,\, c \in \Num
\circ \in \{ +,-,\times, \div 
\}
\]
%where $
%c \in \Num,\, \, i\in \mathbb{N}^+, \,\,
%\circ \in \{ +,-,\times, \div \}
%,\,\, f\in \{ \}
%$
}
\newcommand{\figcond}{
%\textbf{
%Atomic 
%Conditions:}
\[\theta\, \df \,
%\true \, | \, \false \, | \,
 %\isnul(t)  \, | \,
  \att{i} \eq    \att{j} 
 \,\, \, | \ \ \,
 % \att{i} \eq   c
 %\,\, \, | \ \ \,
     \att{i} < \att{j}
 %    \,\, \, | \ \ \,
  %  \att{i} < c
   %  \,\, \, | \ \ \,
%    c < \att{i}
 %\brt\in q \, | \,  
 %\isempty(q) 
%\, | \, 
%\brt\, \omega\, \any(q)\, | \, 
%\brt\, \omega\, \all(q) 
%\text{ s.t. }
%\omega\in\{<,
%>,\le,\ge,
%\eq\}
%,\neq 
\] 
%where $\omega\in\{<,
%>,\le,\ge,
%\eq
%,\neq 
%\}$\\

%\textbf{Composite Conditions:}
%\[
%\theta \df  \theta \vee \theta % \, | \, \neg \theta \, |  \, \theta \wedge \theta .
%\]
}
\newcommand{\oldfigqueries}{\noindent
\begin{tabular}{ rll} 
		$q:=$ & $R$ & (base relation) \\ 
		 & $\pi_{\att{i_1},\ldots,\att{i_k}}(q)$ & (projection)
	\\ 
			 & 
			 %$\pi_{ \att{i_1},\ldots,\att{i_k}, t }(q)$ 
			$ \App_t(q)$
			 & (function application)
	\\ 
		 & $\sigma_{\theta}(q)$ & (selection)  \\ 
 		 & $q\times q$ & (product)  \\ 
 		 & $q\cup q$ & (union)  \\ 
 		 %& $q \cap q$ & (intersection)  \\ 
 		 & $q \setminus q$ & (difference)  \\ 
 	 %	 & $\epsilon(q)$ & (duplicate elimination)  \\ 
 	 	 & $
 	 	\Group_{\att{i_1},\ldots,\att{i_k}, 
 	 	%\langle 
 	 	\sum (\att{i_{k+1}}) %\rangle
 	 	}(q)$
 	 	 %\langle F_1(\att{j_1})
 	 	 %,\cdots, F_m(\att{j_m})
 	 	 %\rangle}(q)
 	 	 %$
 	 	 & (grouping / summation) 
 	 	 %with optional renaming
 	 	 %)
	\end{tabular}
}

\newcommand{\previousfigqueries}{
\centering
		$\begin{array}{rcl} q & := & 
		    R \ \mid\ 
		 \pi_{\att{i_1},\ldots,\att{i_k}}(q) \ \mid\ 
 \App_f(q) \ \mid\ 
 \sigma_{\theta}(q)   \ \mid\ 
 	\sum_{\att{i_1},\ldots,\att{i_k}}^{\att j} 
 	 	(q)  
 \\ && \multicolumn{1}{r}{ \ \mid\  
 q\times q \ \mid\ 
 q\cup q \ \mid\ 
 q \setminus q }
 \\
 \theta & := & \$i = \$j \ \mid\ \$i < \$j \\
 f & \in & \mathbb{R}[\$1,\$2,\ldots]
 \end{array}$
}

\newcommand{\figqueries}{
\centering
		$\begin{array}{rcl} q & := & 
		    R \ \mid\ 
		 \pi_{\att{i_1},\ldots,\att{i_k}}(q) \ \mid\ 
 \sigma_{\theta}(q)   \ \mid\ q\times q \ \mid\ 
 q\cup q \ \mid\ 
 q \setminus q 
 	 \\ && \multicolumn{1}{r}{ \ \mid\  \App_f(q) \ \mid\ 
 \sum_{\att{i_1},\ldots,\att{i_k}}^{\att j} 
 	 	(q)  
 }
 \\
 \theta & := & \$i = \$j \ \mid\ \$i < \$j \\
 f & \in & \RAT[\$1,\$2,\ldots],  \quad k\ge 0
 \end{array}$
}

\begin{figure}
\fbox{\figqueries}
    \caption{\sqlra\ Queries}
    \label{fig:queries}
\end{figure}

To analyze our framework, we need to focus on 
a concrete query language.
Our choice is a compact language that has sufficient power to express standard SQL queries with arithmetic and aggregation, similar to~\cite{tcs03libkin}.
This language, \sqlra\, is an extended form of relation algebra
and its
syntax is  presented in Figure~\ref{fig:queries}. Due to space limitations, the full formal semantics is in the Appendix; here we introduce the main constructs and explain the expressiveness of the language. 

Like SQL, the queries of 
\sqlra\ are interpreted under 
\emph{bag semantics}, i.e., relations may have multiple occurrences of the same tuple.

The first line of the Figure~\ref{fig:queries} contains the standard operations of Relational Algebra. We use the unnamed perspective \cite{AHV95} and refer to attributes by their positions in the relation, i.e., $\$1,\$2$, etc, for the first, second, etc attribute of a relation.
Projection and selection keep duplicates; Cartesian product $\times$ multiplies occurrences, $\cup$ as SQL's UNION ALL adds them, and $\setminus$ as SQL's EXCEPT ALL subtracts the number of occurrences, as long as it is non-negative. 

The operation $\App_f(q)$ takes a function $f\in\RAT[{\att 1},{\att 2},\ldots]$, which is a rational function over variables $\att{i}$, $i \in \mathbb{N}$, i.e., a ratio of two
polynomials with real coefficients. It then adds a column to a relation with the result of this function for each tuple in the result of $q$. For example, if $f(\$1,\$2)=(\$1)^2/\$2$, then $\App_f(R)$ turns a binary relation $R$ into a ternary relation that consists of tuples $(a,b,a^2/b)$ for every occurrence of $(a,b)$ in $R$. 
%$\bag{(a,b,a^2/b)\mid (a,b) \in R}$.

The operation $\sum_{\att{i_1},\ldots,\att{i_k}}^{\att j}(q)$ is grouping with summation aggregation; intuitively, this is SQL's SELECT ${\att{i_1},\ldots,\att{i_k}}$, SUM($\att j$) FROM $q$ GROUP BY ${\att{i_1},\ldots,\att{i_k}}$. When $k=0$, we group by the empty set, thus obtaining a query that returns a unique value who is the sum of all values in column $j$.

While we have chosen a rather minimalistic language for the convenience of presentation and proofs, \sqlra\ actually packs considerable expressive power. For example, it can express all of the following:
\begin{itemize}
    \item Arbitrary conditions $f \ \omega\ g$ where $\omega\in\{=,\neq,<,>,\leq,\geq\}$ and $f,g$ are polynomials in variables $\att 1, \att 2$, etc. Indeed values of $f,g$ can be attached as new attributes, and those can be compared with $=$ and $<$ comparisons; other comparisons are expressed by means of set operations. 
    \item All SQL aggregates MIN, MAX, AVG, SUM, COUNT. Indeed, MIN and MAX can be expressed in standard relational algebra. To express count, we can attach an attribute with constant value $1$ and add up those values. For example, SQL's SELECT A, COUNT(B) FROM R GROUP BY A for a two-attribute relation  is expressed as $\sum_{{\att 1}}^{{\att 3}}\big(\App_1(R)\big)$.
    To express multiple aggregates, one needs to take the product of the input relation with the result of the first aggregation to restore duplicates, and then perform another aggregation. And finally average can be expressed as the ratio of SUM and COUNT.
    \item Duplicate elimination is encoded in grouping: one needs to add an aggregate and project out grouping attributes. For example, to eliminate duplicates from a single-attribute relation $R$, one would write $\pi_{\att 1}\Big(\sum_{\att 1}^{\att 2}\big(\App_1(R)\big)\Big)$.
    \item Conditions involving subqueries, as in IN and EXISTS conditions used in SQL's WHERE. This is already doable in relational algebra \cite{sql2ra,vldb17GL}. 
\end{itemize}

Thus, we see that despite its simplicity, the language is significantly expressive, as it captures the essence of SELECT-FROM-WHERE-GROUP BY-HAVING queries of SQL.

\newcommand{\out}{\mathsf{out}}
\newcommand{\FO}{\mathtt{FO}}
\newcommand{\VAR}{\mathsf{Var}}
\newcommand{\spc}{\mathtt{SPC}}

\newcommand{\otlname}{\textsc{Likelihood}}
\newcommand{\otldef}[1]{\otlname[#1]}
\newcommand{\otlprob}{\otldef{q,\circ, \mathtt k}}

\newcommand{\otlunifdef}[1]{\otlname_{int}[#1]}
\newcommand{\otlunif}{\otlunifdef{q,\circ, \mathtt k}}

\newcommand{\totlname}{\textsc{Threshold}}
\newcommand{\totldef}[1]{\totlname[#1]}
\newcommand{\totlprob}{\totldef{q,\circ, \mathtt k, \delta}}

\newcommand{\totlunifdef}[1]{\totlname_{int}[#1]}
\newcommand{\totlunif}{\totlunifdef{q,\circ, \mathtt k, \delta}}

\section{Query Answering}
\label{sec:qa}

With the query language in place, we now turn to query answering.
To do so, we  focus on specific subsets of the answer space; essentially we ask, as was outlined in the introduction, whether values in output tuples belong to specific intervals. We want to know the probability of such events. In this section we show that such probabilities are well defined, and look at the complexity of their exact computation (which will be high, thus leading us to approximation algorithms in the following section).

\subsection{Target Sets of the Answer Space}

As we already explained, when dealing with numerical values, returning certain or possible tuples of
constants is not very informative. Instead we want to settle for providing {\em intervals} of values.
These intervals may be fixed, say $[1,2]$, or even defined by functions on nulls in a database, say $[\$1+3,\$1+4]$.

Formally, we define \emph{interval tuples} as tuples $\bar a$ whose entries are open, or closed, or half-open half-closed intervals of one of the forms $(x,y), [x,y], [x,y),(x,y]$ where $x,y\in \RAT[\bot_1,\bot_2,\ldots] \cup \set{\pm\infty}$. We denote by $\nulls(\bar a) $ the set of all nulls that appear in the entries of $\bar a $, and by $v(\bar a)$ the tuple obtained from $\bar a$ by replacing each $\bot_i$ with $v(\bot_i)$. In such a tuple, all intervals are grounded, i.e. their endpoints are real numbers.

We say that a tuple of constants $\bar t \df (t_1,\ldots, t_n)$ is \emph{consistent } with  $v (\bar a)$ if  $t_i \in v(a_i)$ for each $i$, that is, $t_i$ is a value within the interval $v(a_i)$.
% by a slight abuse of notation, 
For the rest of the section we use the symbol $\circ$ to denote one of the usual comparisons, i.e.,  $<, =$, or $>$. We write
$\#(v (\ba),A) \circ \mathtt k$ when the number of tuples in the bag $A$ that are consistent with $v(\ba)$ is 
$\circ \mathtt k$. % greater than, equal to, or
% lesser than $k$.
With this, we can move to the following key definition.
\begin{definition}
\[
 \out_{q,\idb, \circ}{(\ba, \mathtt k)} \df \left\{ q(v(\idb)) \in \pdom_{q,\idb}\,  \middle\vert \begin{array}{l}
   \dom(v) = \Nulls(\idb), \quad \\
     \#(v (\ba), q(v(\idb))) \circ \mathtt k
  \end{array}\right\}
\]
\end{definition}
Intuitively, $\out_{q,\idb, \circ}{(\ba, \mathtt k)}$
 is the set that consists of all those databases in $\pdom_{q,\idb}$ such  that each contains
$\circ \mathtt k$ tuples consistent with $\ba$. %%%%%%%
Note that $\pdom_{q,\idb}$ is the domain  of the answer space of $q$ over $\idb$ as defined in Definition~\ref{def:ansspace}.
We call $\out_{q,\idb, \circ}(\cdot\ , \cdot)$ the 
 \emph{target mapping} for $q$ over $\idb$. Target mappings capture information about the query answer space, and we next see how to compute them.

 \subsection{Measurability of Target Sets}
The target mapping can provide us with information about the answer space.
\OMIT{ To answer $q$ over $\idb$, one can use the target sets connected to
 some relevant combination of $\ba$ and $k$ and compute their
 probability.}
In particular, we can compute how likely it is that tuples consistent with $\ba$ appear in the answer (at most, at least or exactly) $\mathtt k$ times. 
This is captured by the following computational problem.

 \begin{framed}
\begin{center}
  \begin{tabular}{ll}
    \multicolumn{2}{c}{ $\otlprob$ }\\
    \textsc{Parameters}:  &  A query $q\in\sqlra$,  $\mathtt k \ge 0$, \\ &  and $\circ \in \{<, =, >\}$ \\
    \textsc{Input}: & An incomplete database $\idb$, \\ & and an interval tuple $\ba$  \\
    \textsc{Problem}: &  Compute $P_{q,\idb}( 
                        \out_{q,\idb, \circ}{(\ba, \mathtt k)}
                        )$\\
  \end{tabular}
  \end{center}
\end{framed}

Is this problem even well-defined? In other words, is it always the case that \sloppy{
$P_{q,\idb}( \out_{q,\idb, \circ}{(\ba, \mathtt k)} )$ exists? }
The answer is
not immediate since not every subset of $\pdom_{q, \idb}$ is
measurable. We settle this issue with the following claim.
    
\begin{theorem}
\label{th:measurability}
For every $q \in \sqlra$, incomplete database $\idb$, interval
tuple $\ba$, $\mathtt k \ge 0$, and $\circ \in \{<, =, >\}$, it holds that
\[\out_{q,\idb, \circ}{(\ba, \mathtt k)}\in \salg_{q,\idb}\]
%
%This remains true when replacing $\ba$ with an interval tuple $\ba^*$.
\end{theorem}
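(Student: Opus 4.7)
The plan is to reduce the claim to Borel measurability in $\bbR^n$, and then prove that by structural induction on $q$. By the definition of $\salg_{q,\idb}$ it suffices to show $q^{-1}(\out_{q,\idb,\circ}(\ba,\mathtt k)) \in \salg_\idb$; by the definition of $\salg_\idb$ via the map $\chi : v \mapsto v(\idb)$, this reduces in turn to showing $\chi^{-1}(q^{-1}(\out_{q,\idb,\circ}(\ba,\mathtt k))) \in \salg_V$. Identifying $\pdom_V$ with $\bbR^n$ for $n = |\Nulls(\idb)|$ via $v \mapsto (v(\bot_1),\ldots,v(\bot_n))$, Theorem~\ref{th:val-space-unique} identifies $\salg_V$ with the product Borel $\sigma$-algebra on $\bbR^n$. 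The task is therefore to show that
\[
E \;:=\; \{v \in \bbR^n \mid \#(v(\ba),\, q(v(\idb))) \circ \mathtt k\}
\]
is Borel.

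The key technical lemma I would establish is a uniform rational description of query outputs. Namely: for every \sqlra\ subquery $q'$ of $q$ there exist a finite Borel partition $\bbR^n = S_1 \sqcup \cdots \sqcup S_m$ and, for each piece $S_j$, a finite list of pairs $(\tau_{j,1},\mu_{j,1}),\ldots,(\tau_{j,\ell_j},\mu_{j,\ell_j})$, where each $\tau_{j,i}$ is a tuple of elements of $\RAT[\bot_1,\ldots,\bot_n]$ whose denominators do not vanish on $S_j$ and each $\mu_{j,i}$ is a positive integer, such that for every $v \in S_j$ the real tuples $\tau_{j,1}(v),\ldots,\tau_{j,\ell_j}(v)$ are pairwise distinct and $q'(v(\idb))$ is the bag assigning multiplicity $\mu_{j,i}$ to $\tau_{j,i}(v)$. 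Every partition refinement I introduce is cut out by finitely many equalities and strict inequalities between rational functions in $\RAT[\bot_1,\ldots,\bot_n]$, so every piece is a Boolean combination of algebraic sets and hence Borel.

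The base case (relation symbols) is immediate using the trivial one-piece partition. For the inductive step each \sqlra\ constructor is handled in turn: union pools templates; Cartesian product pairs them; selection prunes by the Borel condition that a template satisfies $\theta$; projection drops coordinates and then refines by which of the projected templates coincide, so that multiplicities are summed uniformly; difference likewise refines along template equalities, so that multiplicity cancellation happens uniformly on each piece; $\App_f$ appends $f$ applied to the template and further refines away the zero set of the denominator, which is Borel. The grouping operator $\sum^{\att j}_{\att{i_1},\ldots,\att{i_k}}$ requires the most care: we refine the partition so that on each piece the equivalence classes of input templates under equality of their grouping coordinates are fixed, which lets us express the aggregate on each group as a finite sum of well-defined rational functions of the nulls.

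Given the lemma, I conclude as follows. On each piece $S_j$, the event that a particular output template $\tau_{j,i}(v)$ is consistent with $v(\ba)$ is a conjunction of (strict or non-strict) inequalities between rational functions, hence a Borel subset of $S_j$. Refining $S_j$ by these conditions for every $i$ yields a finer Borel partition on which the count $\sum\{\mu_{j,i} \mid \tau_{j,i}(v)\text{ is consistent with }v(\ba)\}$ is constant; collecting the atoms where this constant stands in relation $\circ$ to $\mathtt k$ exhibits $E$ as a finite union of Borel sets. The hard part is unquestionably the grouping step: preserving the invariant that the templates $\tau_{j,i}$ produce pairwise distinct real tuples forces a potentially large partition refinement whenever aggregation collapses groups, and propagating this invariant through nested aggregations, projections, and differences is where the combinatorial bookkeeping is most delicate.
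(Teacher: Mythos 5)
Your proof is correct in outline, but it takes a genuinely different route from the paper's. The paper reduces the statement, exactly as you do, to showing that the set of valuations (identified with $\bbR^n$) satisfying the counting condition is Borel; it then encodes, by induction on the structure of $q$, the multiplicity function $\#(\bar t,\, q(v(\idb)))$ as a first-order formula over the ordered real field $(+,\cdot,<)$ — with a preliminary lemma bounding the active domain of query answers so that projection and grouping can be expressed by quantifying over a fixed finite number of tuple variables — and concludes measurability by citing cell decomposition for sets definable in that theory (o-minimality). You instead prove by structural induction a piecewise description of the output bag: a finite Borel partition of $\bbR^n$ on each piece of which the output is a fixed list of tuples of functions from $\RAT[\bot_1,\ldots,\bot_n]$ with fixed multiplicities. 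This is more self-contained — you never need the logical encoding or the cell-decomposition theorem, since your partition plays that role explicitly — at the price of the bookkeeping you acknowledge for projection, difference and grouping; it is in fact very close in spirit to the paper's later conditional-world construction (the finite-representation section), where $q(\condworld)$ furnishes exactly such a piecewise arithmetic description of the answer space. Two small repairs are needed, neither affecting soundness: the base case is \emph{not} handled by the trivial one-piece partition, because distinct entries of a stored relation (two different nulls, or a null and a constant) can evaluate to equal tuples under some valuations, so the distinctness invariant already forces a refinement by the equality type of the base tuples — the same refinement you use for projection, and the same remark applies to union; and the identification of $\salg_V$ with the Borel $\sigma$-algebra of $\bbR^n$ follows from the definition of the generators of $\salg_V$ (products of Borel sets), not from Theorem~\ref{th:val-space-unique}, which concerns uniqueness of the probability measure rather than the $\sigma$-algebra.
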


To obtain Theorem~\ref{th:measurability}, we first observe that
$\out_{q,\idb, \circ}{(\ba,\mathtt k)}$ is the union of some countable family
of sets $\out_{q,\idb, =}{(\ba, \mathtt k)}$.
%
% study the sets $O_p = \{ v ~\mid~ \#(v (\ba), q(v\idb)) = p \}$,
% with $p \in \bbN$, that consists of those valuations $v$ for $\idb$
% for which $q(v(\idb))$ contains exactly $k$ tuples consistent with
% $v(\ba)$.
Since each such set is definable by a first-order formula in the
theory of the reals with $|\nulls(\idb)| = n$ free variables, we can
conclude that each set is the finite union of open sets in $\bbR^n$,
which are Borel-measurable $\bbR^n$, and sets of smaller dimensions
which have measure zero in $\bbR^n$ (\cite{ominimal-book}).  The
claim follows from the fact that
% $\out_{q,\idb}{(\ba, k)}$ is equal to the contuable union of the
% family $\{O_p~\mid~ p>k \in \bbN\}$, and the fact that
the PDB defined by $\idb$ is Borel-isomorphic to $\bbR^n$, that is,
there exists a bijection between the two spaces (see,
e.g.,~\cite[Section 1]{durrett2019probability} for more details).

\subsection{Computing $\otlprob$ Exactly}

% Now that we know that the $\otlprob$ is well
% defined, we study its complexity.
Solving the computational problem $\otlprob$ exactly is a very challenging task that is
unfeasible in practice even for simple queries. Let $\spc$ be the
fragment of $\sqlra$ that allows only basic relations, selection,
projection, and Cartesian product. We can prove that the $\otlprob$ is
not rational even for queries in $\spc$ and nulls defined by basic
distributions. An \emph{exponentially distributed} database is an
incomplete database whose null values are annotated with exponential probability 
distributions $\lambda e^{-\lambda x}$, where
$\lambda$ is rational.

\begin{proposition}
  \label{pr:transc}
  For each $\circ \in \{<, =, >\}$, there exists a Boolean query
  $q \in \spc$ and a family of exponentially distributed databases
  $\{\idb_{\mathtt k}~\mid~\mathtt{k}>0 \in \bbN\}$ such that $\otlprob(\idb_{\mathtt k}, ())$ is a
  transcendental number.
\end{proposition}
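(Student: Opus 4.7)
The plan is to exhibit an explicit Boolean $\spc$ query together with a family of exponentially distributed instances, compute each target probability in closed form as a non-trivial polynomial in $e^{-1}$, and then invoke Hermite's theorem on the transcendence of $e$ to conclude. A single query and a single family will work simultaneously for all three choices of $\circ$.

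Concretely, I would fix a schema with two unary relations $R, S$ and take
\[
q \;\df\; \pi_{\emptyset}\bigl(\sigma_{\att{1} < \att{2}}(R \times S)\bigr) \;\in\; \spc.
\]
For each $\mathtt{k} > 0$, let $\idb_{\mathtt{k}}$ have $S = \{(1)\}$ and $R = \{(\bot_1), \ldots, (\bot_{\mathtt{k}+1})\}$, where every $\bot_i$ is annotated with the density $e^{-x}\mathbf{1}_{x \ge 0}$ (rate $\lambda = 1$). Since the empty interval tuple $()$ is vacuously matched by every arity-$0$ tuple, $\#(v(()), q(v(\idb_{\mathtt{k}})))$ is simply the cardinality of the output bag, which under any valuation $v$ equals the number of indices $i$ with $v(\bot_i) < 1$. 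As the $\bot_i$ are mutually independent with $\Pr[\bot_i < 1] = p \df 1 - e^{-1}$, this count is $\mathrm{Bin}(\mathtt{k}+1, p)$-distributed, yielding
\begin{align*}
\prob_{q,\idb_{\mathtt{k}}}\!\bigl(\out_{q,\idb_{\mathtt{k}},=}((),\mathtt{k})\bigr) &= (\mathtt{k}+1)(1-e^{-1})^{\mathtt{k}}\,e^{-1},\\
\prob_{q,\idb_{\mathtt{k}}}\!\bigl(\out_{q,\idb_{\mathtt{k}},>}((),\mathtt{k})\bigr) &= (1-e^{-1})^{\mathtt{k}+1},\\
\prob_{q,\idb_{\mathtt{k}}}\!\bigl(\out_{q,\idb_{\mathtt{k}},<}((),\mathtt{k})\bigr) &= 1 - (1-e^{-1})^{\mathtt{k}}(\mathtt{k}\, e^{-1} + 1).
\end{align*}

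Each of these has the shape $P(e^{-1})$ for a fixed nonzero polynomial $P \in \mathbb{Q}[x]$ of degree $\mathtt{k}+1 \ge 2$, as is visible from the explicit formulas (the leading coefficient does not vanish in any of the three cases). The main obstacle is the transcendence step, which is brief: if $P(e^{-1})$ were algebraic, then $e^{-1}$ would be a root of the nonzero polynomial $P(x) - P(e^{-1}) \in \overline{\mathbb{Q}}[x]$ and hence algebraic, contradicting Hermite's theorem that $e$ (and thus $e^{-1}$) is transcendental. This simultaneously handles all three choices of $\circ$, establishing the proposition.
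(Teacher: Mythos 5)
Your proof is correct and follows essentially the same route as the paper: exponentially distributed nulls compared against the constant $1$, so that the computed probability is a nonconstant rational polynomial evaluated at the transcendental number $e^{-1}$, whence transcendence of $e$ finishes the argument. The only difference is in the gadget: the paper places a single \emph{repeated} null in all tuples (so the answer is directly $1-e^{-1}$ or $e^{-1}$, with slightly different instances for the three comparisons), whereas you use $\mathtt{k}+1$ independent nulls and a binomial computation, which handles all three choices of $\circ$ with one family and has the side benefit that your query is genuinely Boolean and stays within the official $\spc$ syntax (comparing attributes rather than a constant).
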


In other words, Proposition~\ref{pr:transc} implies that
it may not be possible to even write down the exact output of $\otlprob(\idb_{\mathtt k}, ())$.
 Is it due to the density functions we chose?
It turns out, that even
when we consider very simple and well-behaved distributions the
problem remains computationally challenging. %In particular, we can
%show that $\otlprob$ is hard in the counting class $\# P$ (see, e.g.,
%\cite{complexity-book}) that is at least as hard as the whole
%polynomial hierarchy (\cite{toda-pp}). The latter remain true even if
%we only allow very simple distributions for the nulls.  
Formally, a
\emph{uniform interval database} is an incomplete database whose nulls are
annotated  with uniform interval distributions with rational
parameters.  That is, their density function is a rational constant $1/(u-l)$ over an interval $[l, u]$, with
$l < u$ both rational, and $0$ everywhere else. We denote by $\otlunif$ the
restriction of $\otlprob$ to input databases from the class of uniform interval databases.

 \begin{theorem}
\label{th:shp-hardness}
For each $\circ \in \{<, =, >\}$, there exists a Boolean query
$q \in \spc$ such that $\otlunif$ is $\# P$-hard, for each
$\mathtt k\ge 0$.
\end{theorem}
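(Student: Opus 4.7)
The plan is to reduce from the $\#P$-hard problem of counting linear extensions of a finite poset (Brightwell--Winkler). I would fix once and for all the Boolean $\spc$ query
\[
q \;=\; \pi_{\emptyset}\bigl(\sigma_{\att{2}<\att{1}}(E)\bigr),
\]
where $E$ is a binary relation symbol. Under bag semantics, $q(v(\idb))$ is a bag that contains only the empty tuple $()$, with multiplicity equal to the number of pairs of $E$ whose first component is valued strictly above the second; taking the interval tuple $\ba=()$, this multiplicity is exactly $\#((),q(v(\idb)))$, which is the quantity that $\out_{q,\idb,\circ}(\ba,\mathtt k)$ compares against $\mathtt k$.

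Given a poset $P$ on elements $\{1,\dots,n\}$, I would build an incomplete database with $n$ fresh marked nulls $\bot_1,\dots,\bot_n$, each annotated with the uniform distribution on $[0,1]$, and populate $E$ with the tuple $(\bot_i,\bot_j)$ for every cover pair $i\prec_P j$. The marked-null assumption is essential so that an element appearing in several cover pairs is assigned a single, consistent value. Since the nulls are jointly independent and continuously distributed, ties happen with probability zero, and the event that no tuple of $E$ is violating is precisely the event that $(v(\bot_1),\dots,v(\bot_n))$ realises a linear extension of $P$. By symmetry among the $n!$ strict orderings of the $\bot_i$, this event has probability $L(P)/n!$, where $L(P)$ is the number of linear extensions of $P$. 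Hence one call to $\otlunif$ with $\circ=\,=$ and $\mathtt k=0$ yields $L(P)$ after multiplication by the polynomial factor $n!$, which is $\#P$-hard to compute.

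To extend this to every $\mathtt k\ge 0$ and every $\circ\in\{<,=,>\}$ while keeping the same fixed query $q$, I would pad $E$ with a controlled number of always-violating constant tuples such as $(1,0)$. Then $\#((),q(v(\idb)))$ equals (number of padding tuples) $+$ (number of null-induced violations), and the first summand is deterministic. Placing $\mathtt k$ padding tuples handles $\circ=\,=$ (probability of ``count $=\mathtt k$'' is $L(P)/n!$) and $\circ=\,>$ (probability $1-L(P)/n!$); placing $\mathtt k-1$ handles $\circ=\,<$ when $\mathtt k\ge 1$. The main obstacle is getting a single Boolean $\spc$ query to carry all three comparisons uniformly: $\spc$ offers neither union nor difference with which to post-process the violation count, so the trick is to let the bag multiplicity of the empty tuple expose the count and let the input database (rather than the query) shift that count via constant padding. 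Measurability of the relevant events follows from Theorem~\ref{th:measurability}, so the only remaining item in the full proof is the combinatorial identity $\Pr[\text{linear extension}]=L(P)/n!$, which is immediate from the exchangeability of i.i.d.\ uniforms together with the fact that ties have measure zero.
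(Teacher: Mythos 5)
Your proposal is correct, but it takes a genuinely different route from the paper. The paper reduces from $\#$3-DNF: each propositional variable becomes a null uniform on $(0,2)$, each clause becomes a $7$-ary tuple built from pairs $(1,\bot)$ or $(\bot,1)$, the query is $\pi_{\emptyset}$ of three nested comparisons, and the target probability is $\#\phi/2^{n}$; general $\mathtt{k}$ is handled exactly as you do, by padding the instance with $\mathtt{k}$ constant tuples that always pass the selection. You instead reduce from counting linear extensions (Brightwell--Winkler), encoding the cover pairs of a poset as tuples $(\bot_i,\bot_j)$ over a single binary relation and using the one-comparison query $\pi_{\emptyset}(\sigma_{\att{2}<\att{1}}(E))$, so the event ``zero violating tuples'' has probability $L(P)/n!$ by exchangeability of i.i.d.\ uniforms. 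Both reductions rely on the same two structural ingredients — repeated (marked) nulls inducing correlations across tuples, and the empty projection exposing a tuple count as the multiplicity of $()$ — but your source problem lets the query and instance be noticeably simpler, and a single fixed query covers all three comparisons and all $\mathtt{k}$ (correctly setting aside the trivial case $\circ=\,<$, $\mathtt{k}=0$, which the paper's statement technically sweeps in even though it cannot be hard). What the paper's choice buys, and what you should still tighten, is the recovery step of the Turing reduction: the paper's probabilities are dyadic, $\#\phi/2^{|Vars(\phi)|}$, so it can argue explicitly that the oracle's output has polynomially many bits in binary floating point and that multiplying by $2^{|Vars(\phi)|}$ is a polynomial-time bit shift. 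Your value $L(P)/n!$ is rational but generally not dyadic, so under that same output convention it is not exactly representable; you should either stipulate an exact rational encoding (numerator/denominator, which here has $O(n\log n)$ bits, so $L(P)=n!\cdot p$ is computable in polynomial time), or argue that polynomially many bits of $p$ suffice to recover $L(P)$ by rounding $n!\cdot p$ to the nearest integer since the true value is an integer multiple of $1/n!$. With that representational point spelled out, your argument goes through.
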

We recall that hard problems in the class $\# P$
(\cite{complexity-book}) are at least as hard as the whole polynomial
hierarchy \cite{toda-pp}.
 From a practical perspective,
it is often enough to check whether $\otlprob$ exceeds a given
threshold. We thus define the decision
version of $\otlprob$:
 \begin{framed}
\begin{center}
  \begin{tabular}{ll}
    \multicolumn{2}{c}{ $\totlprob$}\\
    \textsc{Parameters}:  &  A query $q\in\sqlra$,  $\mathtt k\ge0$,\\
                          &  a \emph{threshold $0 \le \delta \le 1$}, and $\circ \in \{<, =, >\}$ \\
    \textsc{Input}: & An incomplete database $\idb$,  \\
                          &and a interval tuple $\ba$\\
    \textsc{Problem}: &  Decide whether $P_{q,\idb}( 
                        \out_{q,\idb,\circ}{(\ba, \mathtt k)}
                        )> \delta $\\
  \end{tabular}
  \end{center}
\end{framed}
The case $\totldef{q, <, 0, \delta}$ is trivial, and it will not be considered in what follows.
We call $\totlunif$ the restriction of $\totlprob$ to uniform interval
databases only. Even this problem is not simple to solve, as the
following theorem suggests.

\begin{theorem}
\label{th:thresholdnphardnew}
For each $\circ \in \{<, =, >\}$, there exists a Boolean query
$q\in \sqlra$ such that $\totlunifdef{q, \circ, \mathtt{k}, 0}$ is NP-hard.
\end{theorem}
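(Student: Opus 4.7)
The plan is to reduce \textsc{3-SAT} to each of the three variants of $\totlunifdef{q,\circ,\mathtt{k},0}$ via a data-complexity argument: the $\sqlra$ query $q$ and the integer $\mathtt k$ are fixed once and for all (depending only on $\circ$), and a given formula $\phi$ is encoded into the incomplete database $\idb_\phi$ together with the empty interval tuple $\ba=()$.

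Given a 3-CNF formula $\phi$ with $n$ variables and $m$ clauses, I would construct $\idb_\phi$ from three relations: a unary $C$ listing the clause identifiers $1,\ldots,m$; a ternary $L$ containing a tuple $(j,i,1)$ for each positive occurrence of $x_i$ in $C_j$ and $(j,i,0)$ for each negative occurrence; and a binary $V$ containing the tuples $(i,\bot_i)$, where every $\bot_i$ is a fresh null with the uniform distribution on $[0,1]$. A valuation $v$ then induces a Boolean assignment $\alpha_v$ by declaring $x_i$ true iff $v(\bot_i)>1/2$, and the set of valuations collapsing to any fixed $\alpha$ is an axis-aligned box of measure exactly $2^{-n}>0$ (the measure-zero boundary $v(\bot_i)=1/2$ can safely be ignored).

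Next, I would build a fixed query $q^+\in\sqlra$ whose output on $v(\idb_\phi)$ is non-empty iff $\alpha_v$ satisfies $\phi$. Using $\App_{1/2}$ on $V$ followed by selections of the form $\sigma_{\att{i}<\att{j}}$, I would extract the unary relations of ``true'' and ``false'' variable identifiers; joining these into $L$ (with polarity filtered by selection) would produce the bag $S$ of satisfied clause identifiers; and $U\df C\setminus S$ would then contain each unsatisfied clause with multiplicity $0$ or $1$. The query $q^+$ returns the empty tuple precisely when $U$ is empty, which I can realise by attaching a constant column $1$ via $\App$, summing it via the empty-grouping aggregation, and comparing the resulting value to $0$ via a further $\App$ and a selection --- all constructions explicitly noted as expressible in Section~\ref{sec:sqlra}. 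Let $q^-$ be the dual fixed query whose output is non-empty iff $U$ itself is non-empty, obtained by duplicate elimination followed by $\pi_{\emptyset}$.

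To conclude, I would pair $q^{\pm}$ with $\mathtt k$ tailored to $\circ$: for $\circ={>}$ use $q=q^+$ and $\mathtt k=0$; for $\circ={=}$ use $q=q^+$ with the output normalised to $0$ or $1$ copies of the empty tuple, and $\mathtt k=1$; for $\circ={<}$ use $q=q^-$ and $\mathtt k=1$. In every case the event of interest is the union, over Boolean assignments $\alpha$ satisfying $\phi$, of the $2^{-n}$-boxes defining $\alpha$, so it has positive probability iff $\phi$ is satisfiable; the encoding is clearly polynomial-time computable from $\phi$. I expect the main obstacle to be packaging the satisfaction check into a single formula-independent $\sqlra$ query that both works across an arbitrary input encoding of $\phi$ and flattens the answer into a predictable Boolean cardinality, so that all three choices of $(\circ,\mathtt k)$ simultaneously deliver the promised equivalence; this rests squarely on the emptiness-test and duplicate-elimination expressiveness observations of Section~\ref{sec:sqlra}, which must be wielded carefully under bag semantics.
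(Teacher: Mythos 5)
Your proposal is correct and follows essentially the same route as the paper's proof: a reduction from 3SAT in which uniform nulls are thresholded into Boolean assignments, each assignment corresponds to a box of probability $2^{-n}>0$ in the valuation space, a fixed $\sqlra$ query checks clause satisfaction (the paper encodes literal polarity structurally as pairs $(1,\bot_i)$/$(\bot_i,1)$ and compares a count of satisfied clauses to the total, while you encode polarity as data and test emptiness of the unsatisfied-clause bag, but these are interchangeable details), and the threshold $\delta=0$ turns satisfiability into positive probability. The only substantive difference is that the paper pads the input database with $k$ constant tuples (the relation $W_k$) so the argument works for every value of the parameter $\mathtt{k}$, whereas you fix one convenient $\mathtt{k}$ per $\circ$ --- harmless under the theorem's literal wording, but you would need that padding if $\mathtt{k}$ is meant to range over all values as in the preceding $\# P$-hardness theorem.
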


This has
consequences also on 
approximation algorithms for $\otlprob$. %Specifically, assume a function problem
%$P$ that asks for the value of a function $f : X \to Y$. 
A
\emph{Fully-Polynomial Randomised Approximation Scheme} (FPRAS) for $f : X \to Y$~\cite{complexity-book} is an algorithm $A$ with the following guarantees, for
input $x \in X$ and $\err \in (0,1]$, :
\begin{itemize}
\item $A$ runs in time polynomial in the size of $x$ and
  $\frac{1}{\err}$; and
\item $A$ returns a random variable $A(x, \err)$ such that
  \[P(|A(x, \err) - f(x)| \le \err \cdot \left|f(x)\right|) \ge 0.75.\]
\end{itemize}
Of course 0.75 can be replaced by any constant > 0.5; here we follow the traditional definition of 0.75 as the chosen constant.

\begin{corollary}
  \label{cr:no-fpras}
  Unless $RP = NP$, there exists $q\in\sqlra$ such $\otlunif$ admits no FPRAS.
\end{corollary}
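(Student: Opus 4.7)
The plan is to show that an FPRAS for $\otlunif$ would collapse $NP$ into $RP$, exploiting the NP-hard threshold problem $\totlunifdef{q, \circ, \mathtt{k}, 0}$ provided by Theorem~\ref{th:thresholdnphardnew}. First I would fix a query $q$, a comparison $\circ$, and a value $\mathtt{k}$ for which this threshold problem is NP-hard (such a choice exists by the theorem, avoiding the trivial case $\totldef{q, <, 0, 0}$). The goal is to solve the decision question ``is $P_{q,\idb}(\out_{q,\idb,\circ}(\ba,\mathtt{k})) > 0$?'' in randomized polynomial time using the hypothetical FPRAS $A$, placing this NP-hard problem in $RP$.

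The pivotal observation is the standard ``FPRAS behavior at zero'' phenomenon. Let $f(\idb, \ba) \df P_{q,\idb}(\out_{q,\idb,\circ}(\ba,\mathtt{k}))$. When $f(\idb, \ba) = 0$, the multiplicative guarantee $|A(x,\err) - f(x)| \le \err \cdot |f(x)|$ degenerates to $|A(x,\err)| \le 0$, which forces the output to equal $0$ exactly with probability at least $0.75$. Conversely, when $f(\idb, \ba) > 0$ and $\err < 1$, with probability at least $0.75$ we have $A(x,\err) \ge (1-\err) f(\idb, \ba) > 0$. Hence the sign of the FPRAS output already distinguishes the two regimes with one-sided error.

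From here the reduction is essentially mechanical. On input $(\idb, \ba)$, I would run $A$ with a fixed constant $\err$, say $\err = 1/2$, and answer ``yes'' iff the returned value is strictly positive. By the two cases above, this procedure never errs when the true answer is ``no'' and is correct with probability at least $0.75$ when the true answer is ``yes''; because $\err$ is a fixed constant, the running time is polynomial in $|\idb| + |\ba|$. Thus $\totlunifdef{q,\circ,\mathtt{k},0}$ lies in $RP$, and combined with the standard containment $RP \subseteq NP$ we obtain $RP = NP$, contradicting the hypothesis.

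The only step with any real substance is the boundary analysis at zero: the multiplicative error guarantee becomes vacuous precisely when $f$ vanishes, and it is this degeneracy (not any additional combinatorial argument) that enables one-sided distinguishability. Everything else is routine bookkeeping, namely instantiating Theorem~\ref{th:thresholdnphardnew} to extract the hard query and parameters and fixing $\err$ to any constant strictly below $1$.
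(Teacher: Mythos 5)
Your overall strategy---instantiate Theorem~\ref{th:thresholdnphardnew} to obtain an NP-hard zero-threshold problem over uniform interval databases, and exploit the degeneracy of the multiplicative guarantee at $f=0$---is exactly the route the paper takes (its proof is a two-line sketch: an FPRAS for $\otlunif$ would yield $NP \subseteq BPP$, which is known to imply $RP=NP$). However, there is a genuine error in your middle step: the claim that the procedure ``never errs when the true answer is `no'\,''. The FPRAS guarantee does not give this. When $f(\idb,\ba)=0$, the condition $P(|A(x,\err)-f(x)|\le \err\cdot|f(x)|)\ge 0.75$ only forces $A(x,\err)=0$ with probability at least $0.75$; on the remaining probability mass (up to $0.25$) the output may be strictly positive, so your sign test can answer ``yes'' on a ``no'' instance. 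Consequently your reduction places $\totlunifdef{q,\circ,\mathtt{k},0}$ in $BPP$, not in $RP$, and the sentence ``Thus \dots lies in $RP$'' does not follow: membership in $RP$ requires acceptance probability exactly $0$ on ``no'' instances, and no amplification of the FPRAS (e.g., taking medians of repeated runs) removes that failure mode, since the bad event on ``no'' instances always retains positive probability.

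The conclusion is still reachable, and this is how the paper closes: the zero-threshold problem is NP-hard under Karp reductions (that is the notion used in the proof of Theorem~\ref{th:thresholdnphardnew}), so its membership in $BPP$ gives $NP\subseteq BPP$, and the known implication that $NP\subseteq BPP$ entails $RP=NP$ (cited in the paper) finishes the argument---for instance via search-to-decision self-reducibility of SAT, where a candidate assignment found with the $BPP$ oracle is verified deterministically, yielding the required one-sided error. So the repair is to replace your direct $RP$ claim by the $BPP$ containment together with this standard result; the rest of your reduction (the boundary analysis at zero and the choice of a constant $\err<1$) is correct and matches the paper's intent.
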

The class $RP$ consists of decision problems that can be solved
efficiently by a randomized algorithm with a bounded one-sided error
see, e.g., \cite{complexity-book}. It is commonly believed that
$RP \neq NP$, since, otherwise, problems in $NP$ would admit efficient
practical solutions.

Are there other approximation schemes that can lead to better results? We answer this in the next section.

\newcommand{\alg}[1]{\mathtt{#1}}
\newcommand{\vsalg}{\alg{ValSampler[\idb]}}
\newcommand{\dsalg}{\alg{Direct\textit{-}Sample[q]}}
\newcommand{\daalg}{\alg{LikeApx[\mathit{q},\circ, k]}}

\newcommand{\nsampalg}[1]{\alg{Sample_{#1}}}

\section{Approximation}
\label{sec:apx}

In this section, we present an efficient approximation scheme with
additive error for $\otlprob$, hence answering positively the question
posed at the end of the previous section.

To deal with the numerical nature of our problem, in what follows we
assume an extension of the RAM model of computation with the following
assumptions: values in $\bbR$ are represented with arbitrary precision
in one single register, and basic arithmetic operations
($+, -, \cdot, \frac{\cdot}{\cdot}$) are performed in constant
time. Note that this model 
%of computation 
is commonly used for numerical
problems (e.g.,~\cite{blum1998complexity}).

An \emph{Additive Error Fully-Polynomial Randomised
  Approximation Scheme} (AFPRAS) for $f : X \to Y$ is an algorithm $A$ with the following guarantees, for
input $x \in X$ and $\err \in (0,1]$, :
\begin{itemize}
\item $A$ runs in time polynomial in the size of $x$ and
  $\frac{1}{\err}$; and
\item $A$ returns a random variable $A(x, \err)$ such that
  \[P(|A(x, \err) - f(x)| \le \err) \ge 0.75.\]
\end{itemize}
The difference between FPRASs discussed previously and AFPRASs is in
that the latter is allowed an additive rather than multiplicative
error, i.e., the difference between the value an AFPRAS returns and
the actual value it approximates is fixed rather than proportional.

\subsection{Sampling the Valuation Space}
\label{ssec:dsamp}

Our approximation scheme is obtained by sampling valuations of the
input database using \emph{samplers}, i.e., algorithms that return
%an independent 
random variables distributed according to some predefined distribution
(see, e.g.,~\cite{blum2020foundations}). For commonly used probability
distributions, samplers can be obtained using different techniques
(see, e.g.,
\cite{box-muller, non-unif}).

Formally, a \emph{sampler} for a probability space
$\cS = (\pdom_S, \salg_S, \prob_S)$ is a randomized algorithm $A$ that
returns a random variable $X$ such that
$P(X \in \sigma) = \prob_S(\sigma)$, for each
$\sigma \in \salg_{\cS}$. % We say that $A$ is \emph{efficient} if it
% runs in time polynomial in $\err^{-1}$.
%, for each input $\err$.
%
Given an incomplete database $\idb$, we say that $\idb$ is
\emph{efficiently sampled} (E.S.) if, for every
$\bot_i\in \nulls(\idb)$ defined by the probability space $\ndis_i$,
there exists a efficient sampler for $\ndis_i$ that runs in time
polynomial in the size of $\idb$. In what follows, we will use
$\nsampalg{i}$ for the result of an efficient sampler for $\ndis_i$.

Using the notions introduced so far, Algorithm~\ref{alg:direct-samp}
defines an approximated sampler for the space of valuation of
incomplete databases that is E.S.

\begin{algorithm}
\begin{algorithmic}
  \STATE {\bf  Parameter:} An E.S. database $\idb$ with $|\nulls(\idb)| = n$ \\
%  \STATE {\bf  Input:} A value $\err \in (0, 1]$\\
  \STATE  {\bf Output:} A valuation $v$ for $\nulls(\idb)$\\
  % \STATE {Let $\gamma = \frac{\err}{(en) \cdot 2}$}
  \FORALL {
    $\bot_i\in\nulls(\idb)$}{
%    \STATE Let $(\bbR, \bsalg, P_i)$ be the probability space that defines $\bot_i$;\\
  %  \STATE  { Let  $S_i$ be a R.V.  s.t. $|P(S_i \in \sigma) - P_i(\sigma)| \le \gamma$, for
    %each $\sigma \in \bsalg$;}\\
    \STATE {$s_i := \nsampalg{i}$ }}\\ \ENDFOR
  \STATE Let $v : \nulls(\idb) \to \bbR$ s.t. $v(\bot_i) = s_i$, for each $\bot_i \in \nulls(\idb)$;\\
  \STATE {\bf return} $v$;\\
   \end{algorithmic}
   \caption{$\vsalg$}
           \label{alg:direct-samp}
\end{algorithm}
The following claim formalizes the relevant properties of $\vsalg$.

\begin{lemma}
  \label{lm:db-sampler}
  For every incomplete database $\idb$, $\vsalg$ is a sampler for the
  valuation space of $\nulls(\idb)$. Moreover, $\vsalg$ runs in time
  polynomial in the size of $\idb$.
\end{lemma}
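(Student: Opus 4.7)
The plan is to verify the two parts of the lemma separately: the correctness of the distribution of the output, and the polynomial runtime. For correctness, the key tool is Theorem~\ref{th:val-space-unique}, which says that $P_V$ is the \emph{unique} probability measure on $(\pdom_V, \salg_V)$ satisfying the product identity (\ref{eq:product}). So it suffices to show that the random valuation $v$ produced by $\vsalg$ induces a measure on $(\pdom_V, \salg_V)$ that satisfies this same identity; by uniqueness, it must then coincide with $P_V$, which is exactly what it means for $\vsalg$ to be a sampler for the valuation space.

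First I would argue measurability of the output: enumerating $\nulls(\idb) = \{\bot_1, \ldots, \bot_n\}$, the algorithm returns the valuation $v$ defined by $v(\bot_i) = s_i$, where $s_i$ is the random variable returned by $\nsampalg{i}$. The map $(s_1, \ldots, s_n) \mapsto v$ (with $v(\bot_i) = s_i$) is a measurable bijection between $\bbR^n$ (equipped with its Borel $\sigma$-algebra) and $(\pdom_V, \salg_V)$, essentially by the very definition of $\salg_V$ as the $\sigma$-algebra generated by the cylinder sets $V(\sigma_1, \ldots, \sigma_n)$. Hence the output is a bona fide random element of $(\pdom_V, \salg_V)$.

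Next I would compute the probability of the generating events. For any Borel sets $\sigma_1, \ldots, \sigma_n$, the event $\{v \in V(\sigma_1, \ldots, \sigma_n)\}$ is exactly the event $\{s_i \in \sigma_i \text{ for every } i = 1, \ldots, n\}$. Because the $n$ invocations of the samplers $\nsampalg{i}$ are made independently (each is a distinct call to an independent randomized subroutine), the variables $s_1, \ldots, s_n$ are mutually independent, and by assumption each $s_i$ is distributed according to $\ndis_i$, i.e., $P(s_i \in \sigma_i) = P_i(\sigma_i)$. Therefore
\[
P(v \in V(\sigma_1, \ldots, \sigma_n)) \;=\; \prod_{i=1}^n P(s_i \in \sigma_i) \;=\; \prod_{i=1}^n P_i(\sigma_i),
\]
which is precisely (\ref{eq:product}). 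By Theorem~\ref{th:val-space-unique}, the law of $v$ coincides with $P_V$ on all of $\salg_V$, so $\vsalg$ is a sampler for the valuation space.

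Finally, for the runtime bound, note that $\vsalg$ executes the loop exactly $n = |\nulls(\idb)|$ times, and each iteration makes a single call to the efficient sampler $\nsampalg{i}$, which by the E.S.~assumption runs in time polynomial in $|\idb|$. The remaining bookkeeping (assembling the mapping $v$ and returning it) is linear in $n \le |\idb|$. Hence the total runtime is polynomial in $|\idb|$. The only point where genuine care is needed is the step invoking uniqueness; the rest is routine once the independence of the sampler calls is made explicit, and that independence is built into the randomized model of computation we are using.
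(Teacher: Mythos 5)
Your proposal is correct and follows essentially the same route as the paper: independence of the sampler calls gives the product identity on the generating cylinder sets $V(\sigma_1,\ldots,\sigma_n)$, and a uniqueness-of-measure argument then identifies the law of the output with $\prob_V$; the runtime bound is handled the same way. The only (harmless) difference is that you invoke the already-stated uniqueness in Theorem~\ref{th:val-space-unique}, whereas the paper re-derives the identification directly from the $\pi$-system uniqueness result (Proposition~\ref{pr:unique-pie} together with Claim~\ref{cl:vals-algebra}), which is the same argument one level lower.
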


\subsection{Approximation via Direct Sampling}
\label{ssec:apx-ds}

For input E.S.~databases, we can obtain an AFPRAS for $\otlprob$
using $\vsalg$. The idea behind our technique is the following. Assume
$q \in \sqlra$, an incomplete database $\idb$, and an interval tuple
$\ba$. The algorithm uses $\vsalg$ to sample a number of valuations
$v$ from the valuation space of $\nulls(\idb)$, and computes the
average number of such $v$ for which
$\#(v (\ba), q(v(\idb))) \circ \mathtt k$ holds. Using well-known
bounds on the expected values of a sum of independent random
variables, we can prove that, with high probability, the value of this
average is close enough to $\otlprob(\idb, \ba)$.  We formalize this
intuition in Algorithm~\ref{alg:direct-apx}.

\begin{algorithm}
  \begin{algorithmic}
    \STATE {\bf Parameters:} $q \in \sqlra$, $\circ \in \{<, =, >\}$,
    an integer $\mathtt{k}\ge0$.

    \STATE {\bf Input:} An E.S. database $\idb$, an interval tuple
    $\ba$, $\err \in (0, 1]$.
    \STATE  {\bf Output:} A number between  $0$ and $1$\\
    \STATE   {  Let $\gamma := \lceil \err^{-2} \rceil$;}\\
    \STATE     {$\mathtt{count} := 0$}\\
    \FORALL { $\mathtt{i} = 1, \ldots, \gamma$} {
      \STATE{$v := \vsalg$}\\
      \IF     {$\#(v (\ba), q(v(\idb))) \circ \mathtt{k}$  } \STATE     {$ \mathtt{count}:= \mathtt{count} + 1$} \\
      \ENDIF
      
    }\ENDFOR \STATE {{\bf return }
      $\frac{\mathtt{count}}{\gamma}$}
   \end{algorithmic}
   \caption{$\daalg$}
           \label{alg:direct-apx}
\end{algorithm}

\begin{lemma}
  \label{lm:direct-apx-err}
  For every incomplete E.S. database $\idb$, interval tuple $\ba$ and
  $\err \in (0, 1]$ it holds that
  \[
    P(|\mathbf{\daalg}(\idb, \ba, \err) - \otlprob(\idb, \ba)| \le \err) \ge 0.75
  \]
\end{lemma}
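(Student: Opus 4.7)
The plan is to recognize the output of $\daalg$ as the empirical mean of $\gamma = \lceil \err^{-2} \rceil$ independent Bernoulli trials whose common success probability equals the target value $p \df \otlprob(\idb,\ba)$, and then to apply Chebyshev's inequality.

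First I would let $Y_i \in \{0,1\}$ be the indicator that the $i$-th iteration of the loop increments $\mathtt{count}$, i.e., $Y_i = 1$ iff $\#(v_i(\ba), q(v_i(\idb))) \circ \mathtt{k}$, where $v_i$ is the valuation produced by the $i$-th call to $\vsalg$. By Lemma~\ref{lm:db-sampler}, each $v_i$ is distributed according to the valuation-space measure $\prob_V$ on $(\pdom_V,\salg_V)$, and the $v_i$ across iterations are mutually independent (the loop makes independent calls to the samplers). Pushing $v_i$ forward through the map $\chi(v) = v(\idb)$ gives a sample from the PDB of $\idb$ (Definition~\ref{def:pdbofidb}), and pushing again through $q$ yields a sample from $q(\idb)$ (Proposition~\ref{pr:query-measurability}). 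By Theorem~\ref{th:measurability}, the set $\out_{q,\idb,\circ}(\ba,\mathtt{k})$ is in $\salg_{q,\idb}$, so the event $\{Y_i = 1\}$ is well-defined and
\[
P(Y_i = 1) = \prob_{q,\idb}\bigl(\out_{q,\idb,\circ}(\ba,\mathtt{k})\bigr) = p.
\]

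Next, observe that $\daalg(\idb,\ba,\err) = S_\gamma/\gamma$, where $S_\gamma \df \sum_{i=1}^\gamma Y_i$ is a sum of i.i.d.\ Bernoulli$(p)$ variables. Hence $E[S_\gamma/\gamma] = p$ and $\mathrm{Var}(S_\gamma/\gamma) = p(1-p)/\gamma \le 1/(4\gamma)$. Chebyshev's inequality gives
\[
P\bigl(|S_\gamma/\gamma - p| > \err\bigr) \;\le\; \frac{p(1-p)}{\gamma \err^2} \;\le\; \frac{1}{4\gamma \err^2} \;\le\; \frac{1}{4},
\]
where the last step uses $\gamma \ge \err^{-2}$. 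Taking the complementary event yields exactly the inequality in the statement.

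The main obstacle is the bookkeeping in the first step: justifying that the sampled valuation, after being pushed forward by $\chi$ and then by $q$, yields a random element of $\pdom_{q,\idb}$ distributed as $\prob_{q,\idb}$, so that the indicator $Y_i$ truly has mean $p$. This reduces to combining the definition of $\prob_\idb$ via $\chi^{-1}$ (from Section~\ref{sec:framework}) with the push-forward definition of $\prob_{q,\idb}$ (Proposition~\ref{pr:query-measurability}), together with the measurability guarantee of Theorem~\ref{th:measurability}. The concentration step is completely standard; in fact Hoeffding's inequality would give exponentially small failure probability, but Chebyshev already clears the $0.75$ threshold and matches the choice $\gamma = \lceil \err^{-2} \rceil$ in the algorithm.
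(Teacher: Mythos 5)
Your proof is correct and follows the same overall strategy as the paper's: both identify the output of $\daalg$ as the empirical mean of $\gamma=\lceil\err^{-2}\rceil$ i.i.d.\ Bernoulli indicators whose success probability equals $\otlprob(\idb,\ba)$, using Lemma~\ref{lm:db-sampler} (the sampler realizes $\prob_V$) together with the measurability guarantee behind Theorem~\ref{th:measurability} to ensure the indicator events are well defined and have the right mean; your extra bookkeeping through $\chi$ and $q$ is just a more explicit version of the paper's step identifying $P(\vsalg\in Q)$ with the target probability. The only genuine difference is the concentration step: the paper invokes Hoeffding's inequality and solves $2e^{-2t^2 n}\le 0.25$ for $n$, whereas you use Chebyshev with the variance bound $p(1-p)/\gamma\le 1/(4\gamma)$. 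Your choice is more elementary and in fact matches the algorithm's constant exactly: with $\gamma\ge\err^{-2}$, Chebyshev gives failure probability at most $1/4$ on the nose, while the Hoeffding computation as written in the paper needs $\gamma\ge\tfrac{\ln 8}{2}\,\err^{-2}\approx 1.04\,\err^{-2}$ to push $2e^{-2\gamma\err^{2}}$ below $0.25$ (with $\gamma=\err^{-2}$ it only yields $2e^{-2}\approx 0.27$). What Hoeffding buys in exchange is an exponentially small failure probability at the cost of a slightly larger (or logarithmically growing) sample size; for the fixed $0.75$ guarantee required of an AFPRAS the two arguments are interchangeable, and yours is the tighter fit to the stated $\gamma$.
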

We are finally ready to formally conclude the existence of an AFPRAS for
\textsc{Output Tuple Likelihood}:
%We do this in the next theorem.

\begin{theorem}
  \label{th:afpras}
  For every $q \in \sqlra$, $\circ \in \{<, =, >\}$, and integer
  $\mathtt{k} \ge 0$, $\daalg$ is an AFPRAS for $\otlprob$ over input
  databases that are E.S.
\end{theorem}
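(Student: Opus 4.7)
The plan is to verify the two defining properties of an AFPRAS for $\daalg$: polynomial runtime in $|\idb|$ and $1/\err$, and the probabilistic accuracy bound. The accuracy bound is essentially Lemma~\ref{lm:direct-apx-err}, so the main remaining work is to account for the runtime and to cleanly package both facts.

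First, I would address runtime. The algorithm performs $\gamma = \lceil \err^{-2} \rceil$ iterations, each of which does three bounded pieces of work: (i) invoking $\vsalg$, which by Lemma~\ref{lm:db-sampler} runs in time polynomial in $|\idb|$; (ii) evaluating $q(v(\idb))$, which is a fixed \sqlra\ query over a complete database, and hence runs in polynomial time in $|\idb|$ under the assumed RAM model with constant-time arithmetic (standard relational algebra operators, together with $\App_f$ and $\sum_{\bar\imath}^{\att j}$, each take polynomial time; the query $q$ is a parameter so its size is constant); and (iii) grounding $\ba$ under $v$ and counting tuples in $q(v(\idb))$ that are consistent with $v(\ba)$, then comparing with $\mathtt k$, all of which take polynomial time in the size of $q(v(\idb))$ and $\ba$. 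Summing over $\gamma$ iterations gives total runtime polynomial in $|\idb|$ and $1/\err$.

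Second, for correctness, I would invoke Lemma~\ref{lm:direct-apx-err}: for every E.S. database $\idb$, interval tuple $\ba$, and $\err \in (0,1]$,
\[
P\bigl(\,|\daalg(\idb,\ba,\err) - \otlprob(\idb,\ba)|\le \err\,\bigr) \ge 0.75,
\]
which is exactly the additive accuracy requirement in the definition of an AFPRAS.

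Combining these two observations concludes that $\daalg$ satisfies both requirements of an AFPRAS for $\otlprob$ on E.S. inputs. The only potential subtlety is to make clear that $q$, $\circ$, and $\mathtt k$ are fixed parameters (not part of the input), so that the cost of evaluating $q$ is indeed polynomial in $|\idb|$; the definitions in Section~\ref{sec:qa} already set things up this way, so no additional work is needed. Thus the proof is essentially a bookkeeping argument once Lemmas~\ref{lm:db-sampler} and~\ref{lm:direct-apx-err} are in hand, with no conceptual obstacle beyond them.
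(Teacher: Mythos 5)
Your proof is correct and follows essentially the same route as the paper: the additive accuracy requirement is discharged by Lemma~\ref{lm:direct-apx-err}, and the runtime bound comes from the E.S. assumption (via the polynomial-time sampler of Lemma~\ref{lm:db-sampler}), the $\lceil \err^{-2}\rceil$ iterations, and polynomial-time evaluation of the fixed query $q$ on each instantiated database. Your extra remark that $q$, $\circ$, $\mathtt k$ are parameters rather than inputs is a harmless clarification of what the paper leaves implicit.
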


\subsection{Encoding Sampling into Queries}
\label{ssec:apx-query}

Our approximation algorithm $\daalg$ requires a separate instantiation
of the input database $\idb$ for each sample.  Practical
implementations of the algorithm would need to generate these instantiations directly inside the RDBMS hosting $\idb$, thus leading to an
overhead that is unacceptable in practical scenarios. Instead of
explicitly generating this data, however, it is possible to encode a
full run of $\daalg$ into a query to be executed directly over
$\idb$. In this section we present such construction.

First, we introduce some notation. To keep notation compact, in all following statements of our results, we let $q$ be a generic query in $\sqlra$, $\idb$ be an
incomplete database, $\ba$ be an interval tuple, $k$ be a positive
integer, and $\circ$ be an element of $\{<,=,>\}$. Moreover, to
evaluate $\sqlra$ queries directly over incomplete databases, we will
implicitly assume \emph{naive evaluation} \cite{naive-eval}, i.e., we
will treat nulls as standard constants.

\begin{theorem}
  \label{th:tuple-query-apx}
  For every $\epsilon\in(0,1]$, there exists
  $\mathtt{apx}_{\err} \in \sqlra$ such that
  \[P(|\mathtt{apx}_{\err}(\idb) - \otlprob(\idb, \ba)| \le \err ) \ge
    0.75.\]
\end{theorem}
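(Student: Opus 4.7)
The plan is to compile Algorithm $\daalg$ into a single $\sqlra$ expression by drawing all $\gamma := \lceil \err^{-2} \rceil$ samples in advance, embedding their values as constants inside the query, and using relational machinery to run the $\gamma$ instantiations in parallel over $\idb$. Using the efficient samplers $\nsampalg{i}$, I first produce values $s_i^{(j)}$ for $i \in \{1,\dots,n\}$, $j \in \{1,\dots,\gamma\}$, where $n=|\nulls(\idb)|$. These reals are then wired into the query as coefficients in $\App_f$, so that evaluating $\mathtt{apx}_\err(\idb)$ replays one execution of $\daalg$. The probability bound then transfers from Lemma~\ref{lm:direct-apx-err}.

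The construction proceeds in two stages. First, I build a ``sample register'' relation $S$ with $\gamma$ tuples, where the $j$-th tuple carries $(j, s_1^{(j)}, \dots, s_n^{(j)})$. This $S$ can be produced inside $\sqlra$ starting from any fixed constant expression, then repeatedly using $\App_{c}$ with the sampled constants, and unioning the $\gamma$ branches. Second, I translate $q$ into $q^{*}$ by induction on its structure: every occurrence of a base relation $R$ is replaced by the ``instantiated copy'' $R \times S$ in which each null has been substituted by its sampled real. Under naive evaluation, nulls are treated as constants, so substitution of $\bot_i$ in column $c$ of $R$ is realised by (i) appending $\bot_i$ and the relevant $s_i^{(j)}$ as fresh columns via $\App$, (ii) selecting rows where column $c$ equals the appended $\bot_i$ column using $\sigma_{\att{c} = \att{c'}}$, (iii) projecting out the $\bot_i$-column, and (iv) unioning over $i$ (and with the untouched rows). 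All other $\sqlra$ constructs commute with this rewriting if we carry the sample index $j$ as an extra attribute; for $\sum$-aggregation, $j$ must be appended to the list of grouping attributes so the $\gamma$ simulations remain independent.

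Once $q^{*}(\idb)$ has been computed, the interval tuple $\ba$ is likewise instantiated into $\gamma$ concrete tuples $\ba^{(j)}$ (again by appending columns with $s_i^{(j)}$ and projecting). For each $j$, one checks consistency of each output tuple with $\ba^{(j)}$ by encoding the interval endpoint comparisons as polynomial inequalities (expressible through $\App$ plus $\sigma$ and set operations, as noted in Section~\ref{sec:sqlra} for the $\leq/\geq/\neq$ comparisons). Grouping by $j$ and summing yields, per sample, the count $\#(v^{(j)}(\ba), q(v^{(j)}(\idb)))$; a further selection on this count against $\mathtt{k}$ with relation $\circ$, and a final outer $\sum$ over a column of constant ones, gives $\mathtt{count}$. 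Dividing by $\gamma$ is just $\App_{f}$ with $f(x) = x/\gamma$. Call the result $\mathtt{apx}_\err$.

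The main obstacle is that $\sqlra$ has no native ``replace the null $\bot_i$ by real $s_i^{(j)}$'' operator, so the substitution must be simulated relationally; the lookup-and-union scheme above works only because naive evaluation turns $\bot_i$ into a first-class constant that can be compared equal. A secondary subtlety is ensuring that parallel simulations never bleed into each other: the sample index $j$ must be threaded through every operation of $q^{*}$ and added to every $\sum$'s grouping list, otherwise $\cup$, $\setminus$ or aggregation would conflate tuples from distinct samples. Granted these two points, $\mathtt{apx}_\err(\idb)$ produces exactly the value $\daalg(\idb, \ba, \err)$ for the drawn sample sequence, so Lemma~\ref{lm:direct-apx-err} immediately yields the claimed probability bound.
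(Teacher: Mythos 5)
Your overall route is the paper's own: draw the $\gamma=\lceil\err^{-2}\rceil$ valuations up front with the efficient samplers, hard-code them into a rewritten query evaluated naively over $\idb$, count the tuples consistent with the (instantiated) interval tuple, test that count against $\mathtt{k}$, average over the $\gamma$ samples, and transfer the guarantee from Lemma~\ref{lm:direct-apx-err} via Hoeffding. The only structural difference is bookkeeping: the paper builds one rewritten copy $q_{v_j}$ per sample, reading the sampled values from a relation $\mathtt{Rand}^\idb$ that pairs each null with its $\gamma$ samples, and combines the $\gamma$ Boolean queries $q_{v_j,\ba,\circ k}$ by union and an average, whereas you tag a single rewritten query with a sample index and group by it. However, as written your construction has a concrete flaw: if for some sample $j$ no tuple of $q^{*}(\idb)$ is consistent with $\ba^{(j)}$, then the group $j$ simply does not occur in the output of your group-by-$j$ aggregation, so that sample is silently dropped from the final count. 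For parameters where $0\circ\mathtt{k}$ holds (e.g.\ $\circ$ equal to $=$ with $\mathtt{k}=0$, or $<$ with $\mathtt{k}\ge 1$) your $\mathtt{apx}_{\err}$ therefore under-counts the successful samples and the bound of Lemma~\ref{lm:direct-apx-err} no longer transfers. The paper's per-sample queries avoid this because $\sum_{\emptyset}^{\att{1}}$ applied to an empty input returns the tuple $(0)$, so the comparison $\att{1}\circ k$ is still carried out; in your indexed scheme you must pad the aggregated result with an explicit zero row for every index of $S$ that went missing (computable with a difference on the index column of $S$) before selecting on $\circ\,\mathtt{k}$.

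Two smaller points also need attention. Your substitution step injects the null $\bot_i$ into the query text as a constant of $\App$, but $\App_f$ only admits $f\in\RAT[\att{1},\att{2},\ldots]$ with real coefficients, so nulls cannot appear as query constants; the paper keeps the nulls as data, in the relation $\mathtt{Rand}^\idb$ assumed to be stored alongside $\idb$, and detects them by column-to-column equality selections. The clean fix is to store the pairs $(\bot_i,s_i^{(j)})$ in your register $S$ rather than only the sampled reals. Finally, for $\times$ you must additionally equate the two sample-index columns and project one away, otherwise tuples from distinct samples are paired; you flag this danger for $\cup$, $\setminus$ and aggregation, but the product is precisely the case that needs an explicit index-equality selection. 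With these repairs your construction becomes equivalent to the paper's and the probability bound follows as you state.
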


The construction of $\mathtt{apx}_{\err}$ is done in several steps
that we proceed to describe.  We first show that we can compile a
specific valuation $v$ into $q$, that is, we can rewrite $q$ to obtain
$\tilde q_v$ such that for every input $\idb$ we have
\[
\tilde q_v(\idb) = q(v(\idb))
\]
Using the \text{COUNT} aggregation, in particular the one corresponding with SQL's $\text{COUNT}(\textasteriskcentered)$, we can obtain $q_{v}$ such that,
%encodes % $q(v(\idb))$ a bit differently, in particular, if
if $\sharp(\brt,\tilde q_v(\idb))= k'$, then $(\brt, k')$ appears once
in $q_{v}(\idb)$. We do so by setting
\[
q_{v} \df 
\text{COUNT}_{\$1,\$2, \ldots,
	\$n}(\tilde q_v)
\]

Since we are interested in the total number of tuples of $q(v(\idb))$
that are consistent with $\bar a \df (a_1,\ldots,a_n)$, we set
\[
q_{v,\bar a} \df
\pi_{\att{n+1}} \left( \sigma_{\att{1} \in v (a_1)\wedge \cdots\wedge \att{n} \in v(a_n)} \left( q_{v}  \right)\right)
\]
where $\att{j} \in v(a_j)$ is the condition requiring that $\att{j}$
is within the interval $v(a_j)$.  Then, to check whether this multiplicity
is $\circ k$, we sum the multiplicities that $q_{v, \ba}$ returns and
select upon the condition $\att{1} \circ k$.  Finally, we project the
result on the empty set, thus obtaining the Boolean query
\[
q_{v,\bar a, \circ k}\df 
\pi_{\emptyset} \left(\sigma_{\att{1} \circ k} \left(\sum_{\emptyset}^{\att{1}} q_{v,\bar a} \right)\right) 
\]
This query returns true, i.e., one occurrence of the empty tuple if $q(v(\idb))$
contains $\circ k$ tuples consistent with $\bar a$, and false, i.e., the empty bag,
otherwise.

Finally, let $V = \{v_1, \ldots, v_\gamma\}$ be the results of running
the algorithm $\vsalg$ for $\gamma = \lceil \err^{-2} \rceil$ times.
We define % obtain a set $V$ of
% valuations $v_1,\ldots v_{\gamma}$ that enable us to define
\[
  \mathtt{apx}_{\err} \df \text{AVG}_{\emptyset} \left(
    \bigcup_{j=1}^\gamma q_{v_j, \ba,\circ k}\right).\]
and show that 
%the query $\mathtt{apx}_\err$ 
it indeed has the desired
properties. Moreover, we can show that %$\mathtt{apx}_\err$ 
it can be
efficiently computed.
\begin{proposition}
	\label{pr:q-size}

        $\mathtt{apx}_{\err}$ can be computed in time polynomial in
        the size of $q$, $\idb$, and $\ba$, and the value $\err^{-1}$.
\end{proposition}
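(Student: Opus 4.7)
The plan is to bound the construction cost of $\mathtt{apx}_{\err}$ by walking through its inductive definition and tallying the contribution of each step, then show each piece is polynomially bounded under the RAM model assumed for the numerical part of our analysis. Since a single real value is stored in a single register and arithmetic takes constant time, every substitution and every added algebraic operator contributes at most a constant (or linear) amount to the size and build time of the resulting query.

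First I would analyze the per-sample query $q_{v_j, \ba, \circ k}$. Building $\tilde q_{v_j}$ amounts to walking through $q$ and substituting, for each occurrence of a null $\bot_i$, the real value $v_j(\bot_i)$; this takes time linear in $|q|$ (since each null occurrence is replaced by a single register value), so $|\tilde q_{v_j}| = O(|q|)$. The wrappers $q_{v_j}$, $q_{v_j, \ba}$, and $q_{v_j, \ba, \circ k}$ add a $\textsc{Count}$ aggregation, a selection whose condition involves $|\ba|$ interval tests with endpoints obtained by substituting $v_j$ into the rational functions appearing in $\ba$, a summation, another selection, and a projection. Under the RAM model each substitution is $O(1)$, so the overall size and construction time of $q_{v_j, \ba, \circ k}$ are polynomial in $|q|$, $|\idb|$, and $|\ba|$.

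Next I would handle the sampling and the outer union/average. By Lemma~\ref{lm:db-sampler}, one call to $\vsalg$ runs in time polynomial in $|\idb|$ and returns a valuation $v_j$ whose image consists of $|\nulls(\idb)|$ reals, each fitting in a single register. We make $\gamma = \lceil \err^{-2} \rceil$ such calls, so the total sampling cost is polynomial in $|\idb|$ and $\err^{-1}$. The outer query takes the bag union of the $\gamma$ per-sample queries $q_{v_j, \ba, \circ k}$ and wraps it in an $\textsc{Avg}$ aggregation (expressible in $\sqlra$ as the ratio of $\textsc{Sum}$ and $\textsc{Count}$, as noted in Section~\ref{sec:sqlra}); this contributes $O(\gamma)$ size on top of the per-sample contributions.

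Putting the pieces together, the total size and construction time of $\mathtt{apx}_{\err}$ are bounded by $O\!\left(\gamma \cdot \mathrm{poly}(|q|, |\idb|, |\ba|)\right) = \mathrm{poly}(|q|, |\idb|, |\ba|, \err^{-1})$, which is the desired bound. The main subtlety, and the only place where the argument is not purely syntactic, is the treatment of the sampled real values: without the RAM-model assumption that each real occupies one register and supports constant-time arithmetic, the encoding of substituted constants into selection conditions could in principle blow up; this is exactly why the model adopted at the beginning of Section~\ref{sec:apx} is invoked, and it neutralises this obstacle.
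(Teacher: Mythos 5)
Your argument is essentially the paper's: the paper's proof simply observes that $\mathtt{apx}_{\err}$ consists of $\gamma = \lceil \err^{-2} \rceil$ union components, each constructible in time polynomial in $q$, $\idb$, and $\ba$ by the construction in the proof of Theorem~\ref{th:tuple-query-apx}, and your tally of the per-sample pieces plus the $\gamma$ samples from $\vsalg$ reaches the same bound the same way. One descriptive slip worth fixing: nulls occur in $\idb$, not in $q$, so building $\tilde q_{v_j}$ is not ``substituting $v_j(\bot_i)$ for occurrences of $\bot_i$ in $q$''; in the paper's construction the valuation is compiled in by replacing each base relation $R$ of $q$ with the query $R_i$ built from the $\mathtt{Rand}$ relation of sampled values (naive evaluation), which is still of size polynomial in $|q|$ and the arities involved, so your complexity bound is unaffected, but the mechanism you describe is not the one being bounded.
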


The construction presented so far can be adapted to return the
possible answers of $q(\idb)$ along with their approximated
probability.  To this end, we define
\[
\mathtt{compute}_{\err} \df
\App_{\frac{\att{1}}{\gamma}} \left(\text{COUNT}_{\att{1},\ldots,\att{n+1}}\left(\bigcup_{j=1}^{\gamma}q_{v_j} \right)\right)
\]
Recall that $\bigcup_{j=1}^{\gamma}q_{v_j}$ returns tuples of the form
$(\brt, k')$ such that $\brt$ occurs $k'$ times in $q(v_j(\idb))$.
Thus,
$\text{COUNT}_{\att{1},\ldots,\att{n+1}}\left(\bigcup_{j=1}^{\gamma}q_{v_j}
\right)$ results in tuples of the form $(\brt, k', k'')$ with the same
guarantees as before and, in addition, $k''$ is the number of
valuations $v$ amongst $v_1,\ldots, v_{\gamma}$ for which $\brt$
occurs $k'$ times in $q(v(\idb))$. We can show that each such tuple
represents a possible answer for $q$ over $\idb$.
In the next theorem, by a slight abuse of notation, we refer to a tuple $(c_1,\ldots, c_n)$ as the interval  tuple
$([c_1, c_1], \ldots, [c_n, c_n])$.
\begin{theorem}
	\label{th:list-query-apx}
	Let $\epsilon\in(0,1]$. For every tuple $(\bc, b, p)$ in
        $\mathtt{compute}_{\err}(\idb)$,
	\[P(|p - \otldef{q, =, b}(\idb, \bc)| \le \err ) \ge 0.75\]
\end{theorem}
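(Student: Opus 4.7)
The plan is to unwind the algebraic construction of $\mathtt{compute}_{\err}$ until one sees that, for every pair $(\bc,b)$, the associated $p$ is an empirical mean of $\gamma=\lceil \err^{-2}\rceil$ i.i.d.\ Bernoulli samples whose true mean is exactly $\otldef{q,=,b}(\idb,\bc)$. Chebyshev's inequality then yields the claim essentially as in Lemma~\ref{lm:direct-apx-err}.

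First I would verify the semantics. Recall that $q_{v_j}$ outputs $(\brt,k')$ whenever $\brt$ occurs $k'$ times in $q(v_j(\idb))$. Hence the bag $\bigcup_{j=1}^{\gamma}q_{v_j}(\idb)$ contains one occurrence of $(\brt,k')$ for each index $j$ that witnesses this count, and grouping by the first $n+1$ attributes through $\text{COUNT}$ produces a unique tuple $(\brt,k',k'')$ in which $k''$ is precisely the cardinality of $\{j : \#(\brt,q(v_j(\idb)))=k'\}$. The outer $\App_{\cdot/\gamma}$ then attaches $p \df k''/\gamma$, so any output tuple has the form $(\bc,b,p)$ with $p = \tfrac{1}{\gamma}\,|\{j : \#(\bc,q(v_j(\idb)))=b\}|$.

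Next I would fix an arbitrary candidate tuple $(\bc,b)$ and introduce the indicator variables
$$Y_j \df \mathds{1}\bigl[\#(\bc,q(v_j(\idb))) = b\bigr], \qquad j=1,\ldots,\gamma.$$
By Lemma~\ref{lm:db-sampler}, each $v_j$ is distributed according to the valuation space of $\nulls(\idb)$, and the $v_j$'s are independent. Viewing $\bc$ as the degenerate interval tuple $([c_1,c_1],\ldots,[c_n,c_n])$, Theorem~\ref{th:measurability} guarantees that $\out_{q,\idb,=}(\bc,b)$ is measurable, and its preimage under $\chi$ is exactly the event $\{Y_j=1\}$; thus $\mathbb{E}[Y_j] = \otldef{q,=,b}(\idb,\bc) =: \mu$. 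Consequently $p=\tfrac{1}{\gamma}\sum_{j=1}^{\gamma}Y_j$ is an unbiased estimator of $\mu$ with $\mathrm{Var}(p)=\mu(1-\mu)/\gamma \le 1/(4\gamma)$, and Chebyshev's inequality gives
$$P\bigl(|p-\mu|>\err\bigr) \;\le\; \frac{\mathrm{Var}(p)}{\err^{2}} \;\le\; \frac{1}{4\gamma\,\err^{2}} \;\le\; \frac{1}{4},$$
since $\gamma \ge \err^{-2}$, which is the desired bound.

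The main subtlety will be the phrase ``for every tuple $(\bc,b,p)$ in $\mathtt{compute}_{\err}(\idb)$'': both the presence of the tuple and the value of $p$ are random. The bound should be read as a per-tuple concentration guarantee, which is legitimate because, for \emph{every} fixed pair $(\bc,b)$, the estimator $k''/\gamma$ is a well-defined sum of i.i.d.\ Bernoullis regardless of whether the tuple ends up materialising in the output (it appears iff $k''>0$, but the estimator and its mean are the same quantity in either case). So the same Chebyshev argument applies uniformly to every possible output tuple, which is the desired conclusion.
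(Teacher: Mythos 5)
Your proof is correct, and it follows the same overall structure as the paper's: unwind $\mathtt{compute}_{\err}$ to see that $p$ is the empirical frequency, over the $\gamma=\lceil\err^{-2}\rceil$ sampled valuations $v_j$, of the event $\#(\bc,q(v_j(\idb)))=b$; use Lemma~\ref{lm:db-sampler} (plus measurability of $\out_{q,\idb,=}(\bc,b)$) to identify the mean of each indicator with $\otldef{q,=,b}(\idb,\bc)$; and finish with a concentration bound. The one genuine difference is the final inequality: the paper reduces to the Hoeffding argument in the proof of Lemma~\ref{lm:direct-apx-err}, whereas you use Chebyshev with the Bernoulli variance bound $\mu(1-\mu)/\gamma\le 1/(4\gamma)$. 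Your choice is arguably the cleaner of the two here: with $\gamma\ge\err^{-2}$ Chebyshev gives failure probability exactly at most $1/4$, while the paper's Hoeffding computation ($2e^{-2\gamma\err^{2}}\le 0.25$ ``for $\gamma\ge\err^{-2}$'') is off by a small constant factor ($\gamma\ge\tfrac{\ln 8}{2}\err^{-2}$ is what is actually needed), and your statement that $\mathbb{E}[Y_j]$ \emph{equals} $\otldef{q,=,b}(\idb,\bc)$ is sharper than the paper's ``$|P(X_i=1)-\otldef{q,=,b}(\idb,\bc)|\le\err$'', which taken literally would only yield a $2\err$ guarantee. You also make explicit the per-tuple reading of ``for every tuple in $\mathtt{compute}_{\err}(\idb)$'' (the estimator is defined for every fixed pair $(\bc,b)$ whether or not it materialises), a point the paper passes over silently; this is the right way to interpret the statement and costs nothing.
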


\section{Finite Representation of the Answer Space}
\label{sec:finiterepoutput}

Additionally to the encoding technique presented previously, a desirable feature that may hint that our model can be deployed in practical use-cases
is the ability to represent the answer space of $\sqlra$
queries in a finite manner. This technique allows us to analyze fully the answer space of a given query which is fundamental in
decision-support scenarios.

 \newcommand{\orc}[1]{orc\left(#1\right)}
 \newcommand{\vext}{v^*}
 \newcommand{\minuss}{\mathttt{minus}}

Do incomplete databases suffice for instantiating the answer space? 
Let us investigate the answer space of  $\sigma_{\att{i}<\att{j}}$ over $\cD$ -- the PDB of $\mathbf R\df \bag{(\bot_1,\mu)}$, where $\bag{\cdot}$ brackets are used for bags. If $\bot_1$ is distributed according to, e.g., the normal distribution with mean $\mu$, then the domain of the answer space contains the empty relation along with instantiations of $\mathbf R$ for which $\bot_1>\mu$. Therefore, there is no incomplete database that instantiates precisely the answer space. Similarly to \cite{IL84} which dealt with the same issue for non-probabilistic databases, these considerations lead to a notion of {\em  conditional} worlds.

\subsection{Conditional Worlds}

The idea behind conditional worlds is to separate the answer space to sub-spaces, each instantiatable by an incomplete database, along with a condition that indicates when it is valid. 
For the previous example,  the answer space can be represented by the empty relation  with the condition $\bot_1<\mu$, and by the relation that consists of $(\bot_1,\mu)$ attached with $\bot_1>\mu$. Note that here and throughout this section, we assume that the probability of each null to be equal to a constant is zero. 
\begin{definition}
A probability distribution $(\dom,\salg, P)$ is said to be \emph{singleton unlikely} if $P(\{c\})=0$ for every $c\in \dom$. 
\end{definition}
This is a very mild restriction, however, since all continuous random variables enjoy this property \cite{probability-book}.
 
To formally define conditional worlds we need to define conditional databases.
Intuitively, conditional databases are pairs of (database, condition) such that database is valid whenever the condition holds.
%such that database is valid that indicates when it is valid.
%As mentioned before, the intuition is that the valuation of $\adb$ represents the output for those valuations for which $c_{\adb}$ holds. 
%which, in turn,  consist of two components -- the actual database which is referred to as an arithmetic database (as we shortly define), and a condition that specifies when it is valid.
To carry full information on the computation and to express the conditions, we extend the domain.
%of our databases. 
 We define an \emph{arithmetic database} as a database whose active domain is  $\RAT[\bot_1,\bot_2,\ldots]$, that is, ratios of polynomials with real coefficients over the nulls. 
\begin{definition}
A \emph{conditional database} is a pair 
$(\adb, c_{\adb})$ where 
\begin{itemize}
    \item $\adb$ is an {arithmetic 
    %incomplete 
    database},
    %, that is, a database $\adb$ with $\adom(\adb) \subseteq \Exp$, 
    and
    \item 
     $c_{\adb}$ is a set of \emph{arithmetic conditions} which are elements of the form 
     $e<0$ where $e\in \RAT[\bot_1,\bot_2,\ldots]$;
     For simplicity, we replace $e<0$ with $e$.
     %$e<e'$ where $e,e'\in \Exp$.
     %$v < v'$ where either $(v,v'\in \Nulls)$ or $(v\in \Nulls, v'\in \Num)$ or   $(v'\in \Nulls, v\in \Num)$.
\end{itemize}
\end{definition}
We denote the set of all nulls that appear in $\adb$ and in $c_{\adb}$ by $\nulls(\adb, c_{\adb})$.

To specify when an arithmetic condition holds, for an arithmetic condition $c$, and a valuation $v$,
we set  $v(c) \df \true$ if  $v(e) < 0$ for every $e \in c$; otherwise, we set $v(c) \df \false$.
%We set $\Nulls(\condworld)$ to consist of the nulls that appear in all elements of $\condworld$.
\newcommand{\defcondworld}{
\sloppy{
A \emph{conditional world} $\condworld$ 
is a set
$
\{ (\adb_1,c_1), \ldots, (\adb_m,c_m)\}
$}
 of conditional databases
such that 
\begin{itemize}
    \item 
    $P_V(\{ v \,\mid\, v(c_1) = \true \vee \cdots \vee v(c_m) = \true \}) = 1$, and
    \item
    $\{ v\,\mid \, v(c_i) = \true \wedge \ v( c_j) = \true\} = \emptyset$ for every $i\ne j$
\end{itemize}
where $(\dom_V,\Sigma_V,P_V)$  is the valuation space of
$\Nulls(\condworld)$ defined as the union $\bigcup_{i=1}^{m} \nulls(\adb_i,c_i)$.}
\begin{definition}\label{def:conddisj}
\defcondworld
\end{definition}
In the previous definition it may be the case that 
$U\df \{ v\mid v(c_1)=\true \vee \cdots\vee v(c_m) = \true \} \subsetneq \dom_V$, which implies that  $P_V(\dom_V \setminus U) = 0$.

 \subsection{Interpreting Conditional Worlds as PDBs}
 
In the next section we will explain how conditional worlds can instantiate the answer space, but before we show how to interpret these worlds as PDBs.
%, but before we make the following important remark.

\begin{definition}
The \emph{probabilistic interpretation $(\dom_{\mathcal C},\salg_{\mathcal C}, P_{\mathcal C})$ of a conditional world $\mathcal C \df \{ (\adb_1,c_1), \ldots, (\adb_m,c_m)\}$} is defined by:
\begin{align*}
    \dom_{\mathcal{C}}
    &\df \bigcup_{i=1}^m \{ v(\adb_i)\,\mid\, %v\text{ is compatible with }
    \dom(v) = \Nulls(\condworld) \}
    \\
     \salg_{\mathcal C} &\df \{ A\,\mid\,\chi^{-1}(A)\in \salg_V\}\\
    %&\df \sigma \{A \, \mid \, A\cap D_i \in \Sigma_i \text{ for every }1\le i\le m \}\\
    P_{\mathcal C}(A) &\df  
    %P_V(\{v\,\mid \, v(\adb_1)\cap A\ne \emptyset , c_1 = \true\}) 
P_V(\chi^{-1}(A))
\end{align*}
where $(\dom_V, \salg_V, P_V)$ is the valuation space of %$\bigcup_{i=1}^k \Nulls(\adb_i)\cup \bigcup_{i=1}^k \Nulls(c_i)$.
$\Nulls(\condworld)$, 
and $\chi : \dom_V \rightarrow \dom_{\condworld}$ is defined by
\[\chi(v) \df \left\{\begin{matrix}
 v(\adb_1)& \text{if } v(c_1) = \true \\ 
 \vdots& \vdots \\ 
 v(\adb_m)& %\text{if } v(c_m) = \true \\
 \text{ otherwise}
\end{matrix}\right.\]
\end{definition}
Note that $\chi$ is well-defined due to Definition~\ref{def:conddisj}, and hence the probabilistic interpretation is also well-defined.

 Before proceeding, we show that as its name hints, the probabilistic interpretation is indeed a PDB.
 \newcommand{\lemcondworldasprob}{ 
  For every conditional world $\condworld$, {the probabilistic interpretation of $\condworld$} is  a PDB.
 }
 \begin{lemma} \label{lem:condworldasprob}
\lemcondworldasprob
 \end{lemma}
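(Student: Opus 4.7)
The plan is to verify the three requirements of Definition~\ref{def:pdb}: that $\dom_{\mathcal C}$ is a set of complete databases, that $\salg_{\mathcal C}$ is a $\sigma$-algebra on $\dom_{\mathcal C}$, and that $P_{\mathcal C}$ is a probability measure on $(\dom_{\mathcal C}, \salg_{\mathcal C})$.

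The first point is immediate from the construction of $\dom_{\mathcal C}$: every element is of the form $v(\adb_i)$ where $v$ maps each null in $\Nulls(\condworld)$ to a real number, so every rational-function entry of $\adb_i$ is evaluated at reals. Apart from a measure-zero residual of valuations on which some denominator vanishes (which is harmless since the singleton-unlikely assumption of the section and the conditions in Definition~\ref{def:conddisj} ensure such a set contributes zero measure), the result $v(\adb_i)$ is a complete database over the schema of $\mathcal C$.

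For the second point, I would invoke the standard fact that the pullback of a $\sigma$-algebra along any total map is a $\sigma$-algebra. The function $\chi : \dom_V \to \dom_{\mathcal C}$ is total: by the disjointness clause of Definition~\ref{def:conddisj}, at most one of $v(c_1),\ldots,v(c_m)$ is true for a given $v$, so the branches do not conflict; the ``otherwise'' branch catches the valuations on which no $c_i$ holds (a set of $P_V$-measure zero by the first clause of Definition~\ref{def:conddisj}). Now $\chi^{-1}$ commutes with complements and countable unions and sends $\dom_{\mathcal C}$ to $\dom_V \in \salg_V$, so $\salg_{\mathcal C} = \{A \subseteq \dom_{\mathcal C} \mid \chi^{-1}(A) \in \salg_V\}$ is a $\sigma$-algebra on $\dom_{\mathcal C}$.

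For the third point, I would show that $P_{\mathcal C}(A) \df P_V(\chi^{-1}(A))$ inherits the probability measure axioms from $P_V$. Indeed, $P_{\mathcal C}(\emptyset) = P_V(\emptyset) = 0$, and $P_{\mathcal C}(\dom_{\mathcal C}) = P_V(\chi^{-1}(\dom_{\mathcal C})) = P_V(\dom_V) = 1$ because $\chi$ is total. For countable additivity, if $A_1, A_2, \ldots \in \salg_{\mathcal C}$ are pairwise disjoint, then $\chi^{-1}(A_1), \chi^{-1}(A_2),\ldots$ are pairwise disjoint elements of $\salg_V$, and $\chi^{-1}(\bigcup_i A_i) = \bigcup_i \chi^{-1}(A_i)$, whence countable additivity of $P_V$ transfers to $P_{\mathcal C}$. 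The main subtle point throughout is the well-definedness of $\chi$, which is precisely what the two disjointness/covering clauses of Definition~\ref{def:conddisj} are designed to guarantee (up to a $P_V$-null set absorbed into the ``otherwise'' branch); once that is in hand, the rest of the verification is routine pullback reasoning.
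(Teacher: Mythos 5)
Your proof is correct and takes essentially the same route as the paper's: both verify the $\sigma$-algebra and probability-measure axioms for $(\dom_{\mathcal C},\salg_{\mathcal C},P_{\mathcal C})$ by pulling everything back along $\chi$, using the disjointness and covering clauses of Definition~\ref{def:conddisj} (together with the \emph{otherwise} branch) to ensure $\chi$ is a well-defined total map so that $\chi^{-1}(\dom_{\mathcal C})=\dom_V$. If anything, your appeal to the standard pullback-$\sigma$-algebra fact and your explicit handling of valuations satisfying no condition is slightly tidier than the paper's axiom-by-axiom check.
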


\subsection{Lifting Queries to Conditional Worlds}

 To show how conditional worlds instantiate the answer space, we  need to show how queries can be lifted from complete databases to conditional worlds. 
For all queries, except that of inequality selections, the definitions are rather straightforward. 

The intuition behind $\sigma_{\att{i}<\att{j}}(\condworld)$ is that the result of evaluating an inequality selection on $\condworld \df \{ (\adb_1,c_1),\ldots,(\adb_m,c_m) \}$ depends on which
tuples are not filtered out. A tuple $\brt\df (t_1,\ldots, t_n)$ is filtered out if the condition $t_i-t_j<0$ is not satisfied. 
Thus,  conditions that may affect the result are  elements in  \[\condworld(i,j) \df \{ t_i -t_j \mid (t_1,\ldots, t_n) \in \cup_{i=1}^m {\adb_m}
\}\]
 and each subset of this set corresponds with a subset of the results.
%In fact, we can disregard constants $c$ in $\tilde\condworld(i,j)$ since whether $c<0$ is satisfied is independent of the valuation, that is, it is either always  satisfied or never. We can thus proceed with $\condworld(i,j) \df \tilde{\condworld}(i,j) \setminus \mathbb{R}$.
We define the condition $c_B$ of a subset  $B\subseteq \condworld(i,j)$ in a way it ensures that (1) all tuples that are consistent with $B$ remains, and (2) all those that are not are filtered out. Thus, we set $ c_B \df B \cup \{-b \mid b\in  \condworld(i,j) \setminus B\}$. The first element in the union ensures (1) and the second (2). 
With this notation we can present the full definition.

\begin{definition}\label{def:semcondworld}
We define the semantics $q(\condworld)$ of queries $q$ on conditional worlds $\condworld$ inductively as follows:
\begin{align*}
     \condworld \ \circ \  \condworld' &\df \set{
	(\adb \  {\circ}\  \adb' ,c \cup c')\,\mid\,  (\adb,c)\in \condworld, (\adb',c')\in\condworld' } 
	\\
      \semcondworld{q}  &\df\{ (q(\adb),c),\ \mid \, (\adb,c)\in \condworld \}  
       \\
      \semcondworld{\sigma_{\att{i}<\att{j}}} &\df 
\{(\sigma_{B}(\adb), c \cup c_B \,\mid\,
(\adb,c)\in \condworld,\, 
B\subseteq \condworld(i,j) \} 
\end{align*}
with  $\circ 
	\in \{ \times, \cup, \setminus \}$, 
	 $q \df R \mid \pi_{\att{i_1},\ldots, \att{i_k}}(q) \mid  \sigma_{\theta}(q) \mid  \App_f(q) \mid  \sum_{\att{i_1},\ldots, \att{i_k}}^{\att j}(q)$, 
$\sigma_{\emptyset}(\adb) \df \emptyset$,
   and all operators defined on  complete databases whose active domain is extended to $\RAT[\bot_1,\bot_2,\ldots]$.
\end{definition}

To show that this inductive definition is indeed well defined, it suffices to show  compositionality of  conditional worlds:
 \newcommand{\lemconddisj}{
 $\semcondworld{q}$ is a conditional world for every conditional world $\condworld$ and query $q$.
 }
 \begin{lemma} \label{lem:conddisj}
 \lemconddisj
 \end{lemma}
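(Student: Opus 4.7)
The plan is a structural induction on $q$. Fix a conditional world $\condworld \df \{(\adb_1, c_1), \ldots, (\adb_m, c_m)\}$ and write $(\dom_V, \salg_V, P_V)$ for the valuation space of $\Nulls(\condworld)$. Set $C_i \df \{v \in \dom_V : v(c_i) = \true\}$; the hypothesis that $\condworld$ is a conditional world is exactly that $P_V(\bigcup_i C_i) = 1$ and that the $C_i$ are pairwise disjoint. What I need to verify is that the conditions produced by each clause of Definition~\ref{def:semcondworld} satisfy the same two properties, possibly in an enlarged valuation space.

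For a base relation and for each of the unary constructors $\pi_{\att{i_1},\ldots,\att{i_k}}$, $\App_f$, $\sum_{\att{i_1},\ldots,\att{i_k}}^{\att j}$, and the equality selection $\sigma_{\att{i} = \att{j}}$, Definition~\ref{def:semcondworld} leaves the set of conditions exactly equal to $\{c_1,\ldots,c_m\}$, so both properties descend verbatim. For the binary operators $\circ \in \{\times, \cup, \setminus\}$, I would take a second conditional world $\condworld' = \{(\adb'_j, c'_j)\}_{j=1}^{m'}$ with satisfying sets $C'_j$; the null set enlarges to $\Nulls(\condworld) \cup \Nulls(\condworld')$, and the event associated with $c_i \cup c'_j$ in the enlarged space is $C_i \cap C'_j$. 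Disjointness is immediate since $(i,j) \neq (k,l)$ forces $i \neq k$ or $j \neq l$, one of which makes the intersection empty by the inductive hypothesis. The measure-one condition follows from
\[
\bigcup_{i,j}(C_i \cap C'_j) = \Bigl(\bigcup_i C_i\Bigr) \cap \Bigl(\bigcup_j C'_j\Bigr)
\]
together with $P_V(\bigcup_i C_i) = P_V(\bigcup_j C'_j) = 1$.

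The main obstacle, and the only step where truly new conditions are introduced, is inequality selection $\sigma_{\att{i} < \att{j}}$. Enumerate $\condworld(i,j) = \{e_1,\ldots,e_r\}$; every new condition has the shape $c \cup c_B$ for $(\adb,c) \in \condworld$ and $B \subseteq \condworld(i,j)$. Define the exceptional set $E \df \{v : v(e_\ell) = 0 \text{ for some } \ell\}$: for each non-trivial $e_\ell \in \RAT[\bot_1,\ldots]$, the locus $\{v(e_\ell) = 0\}$ lies in a proper real-algebraic subvariety of the Euclidean space of valuations, which carries measure zero under a product of singleton-unlikely continuous marginals, so $P_V(E) = 0$. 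For $v \in \dom_V \setminus E$, put $B(v) \df \{e_\ell : v(e_\ell) < 0\}$; unwinding $c_{B(v)} = B(v) \cup \{-b : b \in \condworld(i,j) \setminus B(v)\}$ shows $v(c_{B(v)}) = \true$. For any $B' \neq B(v)$, pick $b \in B' \triangle B(v)$: if $b \in B' \setminus B(v)$ then $b \in c_{B'}$ requires $v(b) < 0$ while $v(b) > 0$, and if $b \in B(v) \setminus B'$ then $-b \in c_{B'}$ requires $v(b) > 0$ while $v(b) < 0$. Hence exactly one $c_B$ is satisfied by each $v \in \dom_V \setminus E$. Combined with the uniqueness of the $c_i$ satisfied by $v$, the new conditions $c_i \cup c_B$ partition $\dom_V$ up to a measure-zero set, delivering both required properties and closing the induction.
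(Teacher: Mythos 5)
Your proof is correct and follows essentially the same route as the paper's: a structural induction in which the condition-preserving operators are immediate, the binary operators reduce to intersecting two probability-one, pairwise-disjoint families of events (in the valuation space of the combined nulls), and $\sigma_{\att{i}<\att{j}}$ is handled by showing that outside the measure-zero set where some difference in $\condworld(i,j)$ vanishes, each valuation satisfies exactly one $c_B$ — indeed your treatment of this last case is more explicit than the paper's. The only nit is that Definition~\ref{def:conddisj} demands that the sets $\{v \mid v(c\cup c_{B'})=\true\}$ be \emph{exactly} disjoint, not merely disjoint up to measure zero; this holds for every $v$ (including those in your exceptional set $E$) by running your toggle argument on an element of $B'\mathbin{\triangle} B''$ for an arbitrary pair $B'\neq B''$ rather than comparing against $B(v)$, which is precisely how the paper argues its disjointness claim.
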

\begin{example}
\OMIT{
Assume that $\condworld$ consists of $(\adb,c)$ where $\adb = \bag{ (\bot_1,2),(\bot_1,\bot_1+2),(\bot_3,\bot_1+\bot_3) }$. 
To compute $\pi_{\att{1} < \att{2}}(\condworld)$
we first notice that  $\condworld(1,2) = \{\bot_1-2, -\bot_1 \} $. 
There are, therefore, four subsets of $\condworld(1,2)$, and  we obtain
\begin{multline*}
    \pi_{\att{1} < \att{2}}(\condworld) = \{ (\emptyset,c \cup \{ -\bot_1+2,\bot_1 \} ),\\
(\sigma_{\bot_1-2}(\adb),c \cup \{ \bot_1-2, \bot_1 \}),\\ \hspace{1cm}\ \ \ \ \ \ \ \ 
(\sigma_{-\bot_1}(\adb),c \cup \{-\bot_1, -\bot_1+2 \}),\\
(\sigma_{\bot_1-2, -\bot_1}(\adb),c \cup\{\bot_1-2, -\bot_1 \})\}
\end{multline*}
} 
Assume that $\condworld$ consists of $(\adb,c)$ where $\adb = \bag{ (\bot_1,0),(\bot_1,\bot_1),(\bot_3,\bot_1+\bot_3) }$. 
To compute $\sigma_{\att{1} < \att{2}}(\condworld)$, 
we first notice that  $\condworld(1,2) = \{\bot_1, -\bot_1 \} $. 
There are, therefore, four subsets of $\condworld(1,2)$, and  we obtain
\begin{multline*}
    \sigma_{\att{1} < \att{2}}(\condworld) = \{ (\emptyset,c \cup \{ -\bot_1,\bot_1 \} ),
(\sigma_{\bot_1}(\adb),c \cup \{  \bot_1 \}),\\ \hspace{1cm}\ \ \ \ \ \ \ \ 
(\sigma_{-\bot_1}(\adb),c \cup \{ -\bot_1 \}),
(\sigma_{\bot_1, -\bot_1}(\adb),c \cup\{\bot_1, -\bot_1 \})\}
\end{multline*}
Note that each condition that contains $\{ -\bot_1, \bot_1 \}$ never holds. And, indeed, their corresponding databases cannot be instantiated to any of the possible answers.
\end{example}

\subsection{Answer Spaces as Conditional Worlds}
\label{sec:OutasCond}
 
 We now show that we can instantiate a probability space that is almost similar to the answer space using the lifting we presented in the previous section. 
 But before, what does it mean to be almost similar?
 A probability space $(\dom, \salg, P)$ is said to be a \emph{trivial extension} of a probability space $(\dom', \salg', P')$ if $\dom \supseteq \dom'$, $\salg \supseteq \salg'$, and $P'(\sigma) = P(\sigma)$ for every $\sigma \in \salg'$.
 
 \newcommand{\thmcondandoutput}
 {
 For every incomplete database $\idb$ and query $q$,  
 %that if $\condworld' \df \semcondworld{q}$
%then 
if each null in  $\Nulls(\idb)$ has a singleton-unlikely distribution then
the answer space of $q$ over $\idb$ is a trivial extension of
 $\semcondworld{q}$ where $\condworld \df \{ (\idb, -1)\}$.
 }
 \begin{theorem}
\label{thm:condandoutput}
\thmcondandoutput
 \end{theorem}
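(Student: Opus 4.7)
The plan is to proceed by structural induction on $q$, showing that the probabilistic interpretation of $\semcondworld{q}$ for $\condworld = \{(\idb, -1)\}$ coincides with the answer space $q(\idb)$ up to a set of $P_V$-measure zero in the underlying valuation space of $\Nulls(\idb)$. The base case $q = R$ is immediate: $\semcondworld{R} = \{(\mathbf{R}^{\idb}, -1)\}$, whose $\chi$ map is simply $v \mapsto v(\mathbf{R}^{\idb})$, pushing $P_V$ forward to the measure on the answer space for $R$.

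For the compositional operators $\cup$, $\times$, $\setminus$, one uses the pointwise identities $v(\adb\ \text{op}\ \adb') = v(\adb)\ \text{op}\ v(\adb')$ together with the fact that conjoining conditions $c \cup c'$ from the two sub-conditional-worlds preserves the partition property required by Definition~\ref{def:conddisj}. For the algebraic unary operators ($\pi$, $\App_f$, $\sum$, and $\sigma_\theta$ with $\theta$ an equality), each commutes with valuations on arithmetic databases, so the inductive hypothesis propagates directly, with the conditions passed through unchanged.

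The critical case is $\sigma_{\att{i} < \att{j}}$. Let $\condworld' = \semcondworld{q}$ by induction. The lifted semantics enumerates, for each $(\adb, c) \in \condworld'$ and each $B \subseteq \condworld'(i, j)$, the conditional databases $(\sigma_B(\adb), c \cup c_B)$. The key pointwise claim I would prove is: for every valuation $v$ satisfying $c$ with $v(t_i - t_j) \ne 0$ for all tuples $(t_1, \ldots, t_n) \in \adb$, if we set $B(v) \df \{t_i - t_j \mid v(t_i - t_j) < 0\}$, then $v$ satisfies $c \cup c_{B(v)}$ and no other $c \cup c_B$, and moreover $v(\sigma_{B(v)}(\adb)) = \sigma_{\att{i}<\att{j}}(v(\adb))$. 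Thus the conditions $c \cup c_B$ partition $\dom_V$ except on the boundary set $E = \{v \mid v(e) = 0 \text{ for some } e \in \condworld'(i, j)\}$, and on $\dom_V \setminus E$ the $\chi$ produced by the output conditional world coincides with $v \mapsto \sigma_{\att{i}<\att{j}}(q)(v(\idb))$.

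The main obstacle is showing $P_V(E) = 0$, which reduces to the lemma: for every non-zero $e \in \RAT[\bot_1, \ldots, \bot_n]$, $P_V(\{v \mid v(e) = 0\}) = 0$. The proof is by induction on $n$. For $n = 1$, the zero set of a non-zero univariate rational function is finite, so by singleton-unlikely on $\bot_1$ its probability is zero. For $n > 1$, write $e$ as a polynomial in $\bot_n$ with coefficients in $\RAT[\bot_1, \ldots, \bot_{n-1}]$; since $e$ is non-zero some coefficient is non-zero, so by the inductive hypothesis the set where all coefficients vanish has measure zero, while on its complement the resulting univariate polynomial in $\bot_n$ is non-zero and hence has a zero set of measure zero---Fubini concludes. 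Since $E$ is a finite union of such zero sets, $P_V(E) = 0$. Combined with the partition property, this shows that $\chi$ agrees with the valuation map induced by $\sigma_{\att{i}<\att{j}}(q)$ on a set of full measure, yielding the claimed agreement of the probabilistic interpretations.
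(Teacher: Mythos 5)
Your proposal is correct and follows essentially the same route as the paper: a structural induction on $q$ in which every operator except $\sigma_{\att{i}<\att{j}}$ commutes with valuations so the conditions pass through untouched, and the inequality selection is handled by showing that the valuations on which the conditions $c\cup c_B$ fail to pick out the true selection result (the boundary where some expression of $\condworld'(i,j)$ vanishes) form a $P_V$-null set, outside of which the conditional world's $\chi$ agrees with $v\mapsto q(v(\idb))$. Your explicit lemma that the zero set of a non-zero element of $\RAT[\bot_1,\ldots,\bot_n]$ has $P_V$-measure zero (induction on $n$ plus Fubini, using singleton-unlikeliness coordinatewise) is a welcome elaboration of a step the paper asserts directly from the singleton-unlikely assumption.
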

 The choice of $\condworld$ is due to the fact that its probabilistic interpretation is similar to the PDB of $\idb$.

\section{Conclusion}
\label{sec:conclusion}

We presented a novel model for incomplete numerical data based on tuple-dependent infinite probabilistic databases.
We studied query answering in a language
that captures key features of SQL, looked at exact and approximate answers, and showed how to represent query outputs.
As for future work,
%from a theoretical standpoint,
we would like to extend our model to other SQL data-types, especially
discrete domains.
We would like to study other forms of approximations,
%and more efficient
%algorithms.
in particular direct sampling  from the conditional world that represents the output. Such representations may be optimized deploying ideas from cylindrical algebraic decomposition~\cite{DBLP:conf/dimacs/2001}. 
From a practical standpoint, we would like to implement
our theoretical algorithms using real-world RBDMSs, and test their
performances with real data.

\bibliographystyle{acm}
\bibliography{biblio}

\onecolumn
\appendix

\section{Additional Notions }

A family of sets $\cS$ is a \emph{$\pi$-system} if it is closed under
(finite) intersection, i.e., $S,S' \in \cS$ implies
$S \cap S' \in \cS$. % Moreover, $\cS$ is an algebra of sets (simply,
% algebra) on a domain $X$ if $S \subseteq X$, for each $S \in \cS$,
% $X \in \cS$, and $\cS$ is closed under finite intersection and taking
% complements w.r.t. $X$, i.e., $S \in \cS$ implies
% $X \setminus S \in \cS$.
Assume a measure space $(\pdom, \salg, \mu)$. We say that $\mu$ is
\emph{$\sigma$-finite} on a family $\cB \subseteq \salg$ if there
exists a countable family of disjoint sets $\{B_i~\mid i > 0\}$ in
$\cB$ such that $\bigcup_i \{B_i~\mid~i>0\} = \pdom$ and
$\mu(B_i) < \infty$, for each $i >0$.

\begin{proposition}
  \label{pr:unique-pie}
  Suppose $\salg$ is the $\sigma$-algebra generated by the
  $\pi$-system $\cP$ and that $\mu$ is $\sigma$-finite over
  $\cP$. Then there exists a unique measure $\mu'$ for
  $(\pdom, \salg)$ that coincides with $\mu$ over $\cA$.
\end{proposition}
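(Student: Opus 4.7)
The plan is to establish uniqueness via Dynkin's $\pi$-$\lambda$ theorem, after using $\sigma$-finiteness to reduce to the finite case. Existence is trivial, since $\mu$ itself (restricted to $\salg$) is already such a measure; the content is the uniqueness claim, so I will focus on showing that any measure $\mu'$ on $(\pdom,\salg)$ agreeing with $\mu$ on $\cP$ must agree with $\mu$ on all of $\salg$.

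First I would unpack $\sigma$-finiteness: by hypothesis there is a countable disjoint family $\{B_i \mid i>0\} \subseteq \cP$ with $\bigcup_i B_i = \pdom$ and $\mu(B_i) < \infty$ for all $i$. For each $i$, define the two finite measures $\mu_i(A) \df \mu(A \cap B_i)$ and $\mu'_i(A) \df \mu'(A \cap B_i)$ on $\salg$. The goal then becomes showing $\mu_i = \mu'_i$ on $\salg$ for every $i$; summing over $i$ and using countable additivity of both measures (with the disjoint cover) immediately gives $\mu = \mu'$ on $\salg$.

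To prove $\mu_i = \mu'_i$, I would consider the class
\[
\cD_i \df \{A \in \salg \mid \mu_i(A) = \mu'_i(A)\}
\]
and verify it is a $\lambda$-system (Dynkin system): it contains $\pdom$ because $\mu_i(\pdom) = \mu(B_i) = \mu'(B_i) = \mu'_i(\pdom)$ (using $B_i \in \cP$); it is closed under proper set difference because the measures $\mu_i,\mu'_i$ are finite, so $\mu_i(B\setminus A) = \mu_i(B) - \mu_i(A)$ and similarly for $\mu'_i$; and it is closed under increasing countable unions by continuity from below of measures. Next I would check $\cP \subseteq \cD_i$: for $S \in \cP$, since $\cP$ is a $\pi$-system, $S \cap B_i \in \cP$, and by assumption $\mu$ and $\mu'$ agree on $\cP$, so $\mu_i(S) = \mu(S \cap B_i) = \mu'(S \cap B_i) = \mu'_i(S)$. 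Dynkin's $\pi$-$\lambda$ theorem then yields $\sigma(\cP) = \salg \subseteq \cD_i$, so $\mu_i = \mu'_i$ on all of $\salg$.

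The main subtlety, and the step I would double-check carefully, is the closure of $\cD_i$ under proper set differences: this is precisely where finiteness of $\mu_i$ and $\mu'_i$ is used, and it is the reason the reduction to finite measures via the $\sigma$-finite decomposition is needed rather than attempting the argument directly with the (possibly infinite) $\mu$ and $\mu'$. Everything else is a straightforward application of the $\pi$-$\lambda$ theorem and standard measure-theoretic manipulations.
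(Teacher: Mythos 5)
Your proof is correct and is precisely the standard $\pi$-$\lambda$ argument: the paper does not prove this proposition itself but cites it as a classical result from Durrett, and the proof given there is exactly your reduction to finite measures via the $\sigma$-finite disjoint cover in $\cP$ followed by Dynkin's theorem applied to the agreement class $\cD_i$. The one point worth noting is that the statement's ``$\cA$'' is a typo for $\cP$, which you have correctly interpreted.
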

The proof of this classical result can be found, e.g., in
\cite{durrett2019probability}.

To avoid confusion, in what follows we use $\bigtimes_{i=1}^n F_i$,
where $\{F_i~\mid~ i=1, \ldots,n\} $ is a family of sets, to denote
the \emph{Cartesian product} $F_1 \times F_2 \times \ldots \times F_n$
and $\Pi_{i=1}^n c_i $, with $c_i \in \bbR$ for $i = 1, \ldots, n$, to
denote the \emph{product} $c_1 \cdot c_2 \cdot \ldots \cdot c_n$.

To define probability measures for our spaces, in the proofs, we will
often rely on the \emph{push forward} technique. Let
$\cA_1= (\pdom_1, \salg_1)$ and $\cA_2 = (\pdom_2, \salg_2)$ be
measurable spaces and assume an $\cA_1/\cA_2$-measurable mapping
$f : \pdom_1 \to \pdom_2$ and a probability measure $\prob_1$ for
$\cA_1$. The \emph{push forward} of $\prob_1$ under $f$ is the
function $\mu \circ f^{-1} : \cF_B \to \bbR$. The following
proposition is a well-known result of probability theory (see, e.g.,
\cite{schilling2017measures}).
\begin{proposition}
  \label{pr:pushforward}
  The push forward of $\prob_1$ under $f$ is a measure for $\cA_2$.
\end{proposition}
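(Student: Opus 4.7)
The plan is to verify the three defining properties of a measure directly for $\mu := \prob_1 \circ f^{-1}$, using the measurability of $f$ to ensure $\mu$ is well-defined and the set-theoretic behaviour of preimages to reduce countable additivity of $\mu$ to that of $\prob_1$.

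First, I would check that $\mu$ is well-defined on all of $\salg_2$. Since $f$ is $\cA_1/\cA_2$-measurable, by definition $f^{-1}(B) \in \salg_1$ for every $B \in \salg_2$, so $\prob_1(f^{-1}(B))$ makes sense and lies in $[0,1]$ (in fact in $[0,\infty]$ for a generic measure, but bounded by $1$ here since $\prob_1$ is a probability measure). Next, I would verify $\mu(\emptyset) = 0$: this follows from $f^{-1}(\emptyset) = \emptyset$ together with $\prob_1(\emptyset)=0$.

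The key step is countable additivity. Let $\{B_i\}_{i\ge 1}$ be a countable family of pairwise disjoint sets in $\salg_2$. I would use two elementary set-theoretic facts about preimages: $f^{-1}\bigl(\bigcup_{i\ge 1} B_i\bigr) = \bigcup_{i\ge 1} f^{-1}(B_i)$, and whenever $B_i \cap B_j = \emptyset$ we also have $f^{-1}(B_i) \cap f^{-1}(B_j) = \emptyset$. Combining these with countable additivity of $\prob_1$ gives
\[
\mu\Bigl(\bigcup_{i\ge 1} B_i\Bigr) = \prob_1\Bigl(\bigcup_{i\ge 1} f^{-1}(B_i)\Bigr) = \sum_{i\ge 1} \prob_1(f^{-1}(B_i)) = \sum_{i\ge 1} \mu(B_i),
\]
which is exactly countable additivity for $\mu$ on $\salg_2$. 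This verifies that $\mu$ is a measure for $\cA_2$. As a remark, one also obtains $\mu(\pdom_2) = \prob_1(f^{-1}(\pdom_2)) = \prob_1(\pdom_1) = 1$, so $\mu$ is in fact a probability measure.

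There is no real obstacle here; the entire argument is a routine unpacking of definitions, and the only non-trivial input is the measurability of $f$, which is precisely what guarantees that the formula $\mu(B) = \prob_1(f^{-1}(B))$ ever makes sense on $\salg_2$ in the first place.
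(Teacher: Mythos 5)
Your argument is correct and is exactly the standard textbook proof that the paper itself omits: the paper states this proposition as a well-known result and simply cites a reference rather than proving it. Your three steps (well-definedness of $\prob_1 \circ f^{-1}$ on $\salg_2$ via measurability of $f$, the empty-set condition, and countable additivity via the facts that preimages commute with unions and preserve disjointness) are precisely what the cited source would supply, and your closing remark that the push forward is in fact a probability measure is the form in which the paper actually uses the result (e.g., in the proofs of Propositions~\ref{pr:query-measurability} and~\ref{pr:prob-dbs}).
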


Before proceeding to present the technical proofs of the claims in the
paper, we prove a general result that will be reused throughout the
appendix. Let $\idb$ be an incomplete database, and let $\cV$ be the
set of generators of the $\sigma$-algebra $\salg_V$ of the measurable
space of valuations for $\idb$. We prove

\begin{claim}
  \label{cl:vals-algebra}
  The set $\cV$ is $\pi$-system.
\end{claim}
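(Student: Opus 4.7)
The plan is to unfold the definition of $\cV$ and reduce the claim to the fact that the Borel $\sigma$-algebra $\bsalg$ is itself closed under finite intersection. Recall that, writing $n \df |\nulls(\idb)|$, each element of $\cV$ has the form
\[
V(\sigma_1,\ldots,\sigma_n) = \{v \mid v(\bot_i) \in \sigma_i \text{ for all } i=1,\ldots,n\}
\]
with $\sigma_1,\ldots,\sigma_n \in \bsalg$. So I only need to verify that the intersection of two such sets is again of this form, with Borel coordinates.

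Concretely, I would take two arbitrary generators $V(\sigma_1,\ldots,\sigma_n)$ and $V(\sigma_1',\ldots,\sigma_n')$ in $\cV$, and show that
\[
V(\sigma_1,\ldots,\sigma_n)\,\cap\,V(\sigma_1',\ldots,\sigma_n') \;=\; V(\sigma_1 \cap \sigma_1',\,\ldots,\,\sigma_n \cap \sigma_n').
\]
The $\subseteq$ direction is immediate: if $v$ lies in both sets, then $v(\bot_i)$ belongs to both $\sigma_i$ and $\sigma_i'$, hence to $\sigma_i \cap \sigma_i'$, for each $i$. The $\supseteq$ direction is equally routine, since $v(\bot_i) \in \sigma_i \cap \sigma_i'$ implies $v(\bot_i) \in \sigma_i$ and $v(\bot_i) \in \sigma_i'$ separately.

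To conclude, I would invoke the fact that $\bsalg$, being a $\sigma$-algebra on $\bbR$, is closed under (finite) intersections; therefore each $\sigma_i \cap \sigma_i'$ is Borel, and the right-hand side of the displayed equation is itself a generator in $\cV$. There is no real obstacle here: the only subtle point is ensuring the generators are indexed by the \emph{same} ordering of the nulls $\bot_1,\ldots,\bot_n$, but this is built into the definition of $\cV$ (the index $n$ is fixed by $\idb$), so the coordinate-wise intersection is well-posed.
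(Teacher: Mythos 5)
Your proposal is correct and matches the paper's own argument: both show that the intersection of two generators $V(\sigma_1,\ldots,\sigma_n)$ and $V(\sigma'_1,\ldots,\sigma'_n)$ equals $V(\sigma_1\cap\sigma'_1,\ldots,\sigma_n\cap\sigma'_n)$ by a two-sided inclusion and then use closure of $\bsalg$ under intersection to conclude this is again a generator. Nothing further is needed, since closure under pairwise intersection is exactly the $\pi$-system condition.
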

\begin{proof}

  We can just prove that, for each $V, V' \in\cV$, also
  $V \cap V' \in \cV$. The general case follows by induction. Suppose
  $V = V(\sigma_1, \ldots, \sigma_n)$ and
  $V' = V(\sigma'_1, \ldots, \sigma'_n)$ and let
  $V(\sigma_1 \cap \sigma'_1, \ldots, \sigma_n\cap \sigma'_n) =
  \{v~\mid~ v(\bot_i) \in \sigma_i \cap \sigma'_i \textit{, for each }
  i = 1, \ldots n\}$.  First, we show
  $V \cap V' = V(\sigma_1 \cap \sigma'_1, \ldots, \sigma_n\cap
  \sigma'_n)$. Let $v \in V \cap V'$. Then, $v \in V$ and $v \in V'$,
  therefore, $v(\bot_i) \in \sigma_i$ and $v(\bot_i) \in \sigma'_i$,
  for each $i = 1, \ldots, n$. We can conclude that
  $v \in V(\sigma_1 \cap \sigma'_1, \ldots, \sigma_n\cap
  \sigma'_n)$. Let now
  $v \in V(\sigma_1 \cap \sigma'_1, \ldots, \sigma_n\cap
  \sigma'_n)$. Since $v(\bot_i) \in \sigma_i$, for each
  $i = 1, \ldots, n$, we can conclude $v \in V$. Similarly, since
  $v(\bot_i) \in \sigma'_i$, for each $i = 1, \ldots, n$, we can
  conclude $v \in V'$, in turn proving $v \in V \cap V'$. We observe
  now that, since $\bsalg$ is a $\sigma$-algebra,
  $\sigma_i \cap \sigma'_i \in \bsalg$, for each $i = 1, \ldots n$. By
  definition, then,
  $V(\sigma_1 \cap \sigma'_1, \ldots, \sigma_n\cap \sigma'_n) \in \cV$
  and the claim follows.

\end{proof}

\section{Proofs of Section~\ref{sec:framework}}

\subsection{Proof of Proposition~\ref{pr:query-measurability}}
First, we prove that $q$ is $(\salg_{\cD},\salg_{q,\cD})$-measurable.
Let $\cG_q = \{ A~\mid~ \pre{q}(A) \in \pdom_\idb\}$ be the set of
generators of $\salg_q$. For every set $A \in \cG_q$,
$\pre{q}(A) \in \salg_\cD$ by definition, and, thus, $q$ is measurable
over the generators of $\salg_{q, \cD}$. We observe now that if a
mapping is measurable over the generators of a $\sigma$-algebra, then
it is measurable over the whole $\sigma$-algebra (see, e.g.,
\cite{schilling2017measures}). We can conclude that $q$ is
$(\salg_{\cD},\salg_{q,\cD})$-measurable.

We now show that $\prob_{q,\cD}$ is a probability measure for
$(\pdom_{q,\cD}, \salg_{q,\cD})$. To this end, simply observe that
$\prob_{q,\cD}$ is the push forward of $\prob_\cD$ under $q$. Since
$q$ is $(\salg_{\cD},\salg_{q,\cD})$-measurable, $\prob_{q,\cD}$ is
the push forward of a measure under a measurable mapping. The claim
follows from the fact that the push forward of a measure under a
measurable mapping defines a measure
(Proposition~\ref{pr:pushforward}).

\subsection{Proof of Proposition~\ref{pr:composability}}
Assume $\cD = (\pdom_\cD, \salg_\cD, \prob_\cD)$, and let $(f,g)$ be a
composable pair of queries such that $q = f \circ g$. Moreover, let
$\cQ = (\pdom_q, \salg_q, \prob_q)$,
$\cF = (\pdom_f, \salg_f, \prob_f)$, and
$\cG = (\pdom_g, \salg_g, \prob_g)$ be the answer spaces defined by
$q(\cD)$, $f(\cD)$ and $g(f(\cD))$, respectively. We proceed to prove
that $\cQ = \cG$, thus proving that $g \circ f(\cD) =
g(f(\cD))$. Since $q(D) = g(f(D))$, for every complete instance $d$
for $\schemdb$, we have $\pdom_q = \pdom_g$. What is left to show is
that $\salg_q = \salg_g$ and $\prob_q = \prob_g$. We divide the proof
in three parts.

$\salg_q \subseteq \salg_g$. Let $A \in \salg_q$, we prove that
$A \in \salg_g$. First, we recall that $q$ is a
$(\cD, \cQ)$-measurable
(Proposition~\ref{pr:query-measurability}). Therefore, there exists
$B \in \salg_\cD$ such that $\pre{q}(A) = B$. From the definition of
answer spaces then, we can conclude that there exists $C \in \salg_f$
such that $\pre{f}(C) = B$ and, similarly, there exist $H$ in
$\salg_g$ such that $\pre{g}(H) = C$. Observe now that $q$, $g$, and
$f$ are surjective mappings. Therefore, $H = g(f(B)) = q(B) = A$, in
turn proving that $A \in \salg_g$.

$\salg_q \supseteq \salg_g$. Let now $A \in \salg_g$, we prove that
$A \in \salg_q$.  First, observe that $g$ is $(\cF, \cG)$-measurable,
is $f$ $(\cD, \cF)$-measurable
(Proposition~\ref{pr:query-measurability}) Then, there exists $B$ in
$\salg_f$ such that $\pre{g}(A) = B$ and, similarly, there exists $C$
in $\salg_\db$ such that $\pre{f}(B) = C$. From the definition of
answer spaces, we can conclude that there exists $H \in \salg_q$ such
that $\pre{q}(H) = C$. Observe now that $q$, $g$, and $h$ are
surjective mappings. Therefore, $H = g(f(C)) = q(C) = A$, in turn
proving that $A \in \salg_q$.

$\prob_q(A) = \prob_g(A)$, for each $A \in \salg_q$. Let
$A \in \salg_q$. Then, $\prob_q(A) = \prob_\cD(B)$, where
$B = \pre{q}(A)$. Similarly, $\prob_g(A) = \prob_f(C)$, where
$C = \pre{g}(A)$ and $\prob_f(C) = \prob_\cD(H)$, where
$H = \pre{f}(C)$. This is due to the definition of answer space. We
now observe that $q(d) = g(f(d))$, for each complete instance for
$\schemdb$, and then, $\pre{q}(\sigma) = \pre{g}(\pre{f}(\sigma))$,
for each $\sigma \in \salg_q$. We can conclude that $A = H$, which in
turn proves that $\prob_q(A) = \prob_g(A)$.

\subsection{Proof of Theorem~\ref{th:val-space-unique}}

Assume that $\nulls(\idb) = \{\bot_1, \ldots, \bot_n\}$ and that
$\ndis_i = (\bbR, \bsalg, \prob_i)$ defines $\bot_i$, for
$i = 1, \ldots, n$. Let $\cR = (\bbR^n, \bsalg^n)$ be the measurable
space defined by the algebra $\bsalg^n$ generated by the family
$\{\bigtimes_{i=1}^n \sigma_i~\mid~ \sigma_i \in \bsalg \textit{ for }
i=1, \ldots n\}$. In other words, $\cR$ is a product measurable space
generated by the probability spaces that define the nulls of $\idb$.
\begin{proposition}
  \label{pr:uniq-prod}
  There exists a unique measure $P$ for $\cR$ such that, for each
  $\sigma_1, \ldots, \sigma_n \in\bsalg$ we have
  \[P({\sigma_1 \times \ldots \times \sigma_n}) = \Pi_{i=1}^n P_i(F_i)\]
\end{proposition}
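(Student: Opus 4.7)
The plan is to obtain the proposition as an instance of the standard product measure theorem, using Proposition~\ref{pr:unique-pie} for uniqueness and a Carathéodory-style construction for existence.

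First, I would verify that the family $\cP \df \{\bigtimes_{i=1}^n \sigma_i \mid \sigma_i \in \bsalg\}$ of measurable rectangles is a $\pi$-system that generates $\bsalg^n$. The second property holds by definition of $\bsalg^n$, and the first follows from the identity $\bigl(\bigtimes_i \sigma_i\bigr) \cap \bigl(\bigtimes_i \sigma'_i\bigr) = \bigtimes_i (\sigma_i \cap \sigma'_i)$ together with the fact that $\bsalg$ is closed under finite intersection. Next, I would define the candidate set function $P$ on $\cP$ by setting $P\bigl(\bigtimes_{i=1}^n \sigma_i\bigr) \df \prod_{i=1}^n P_i(\sigma_i)$; this is well-defined since each $P_i$ is a probability measure and the representation of a rectangle by its coordinate sets is unique up to empty factors.

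For the existence part, I would extend $P$ to $\bsalg^n$ via the classical product measure construction. Concretely, one extends $P$ by finite additivity to the algebra $\cA$ of finite disjoint unions of rectangles, verifies countable additivity there, and then applies Carathéodory's extension theorem to obtain a measure on $\sigma(\cA) = \bsalg^n$. The countable additivity step is the main technical obstacle: given a rectangle $\bigtimes_i \sigma_i$ decomposed as a countable disjoint union of rectangles $\bigtimes_i \sigma^{(k)}_i$, one must show $\prod_i P_i(\sigma_i) = \sum_k \prod_i P_i(\sigma^{(k)}_i)$. The standard route is to fix all but one coordinate and reduce to countable additivity of the individual $P_i$, iterating over coordinates; this is essentially the one-dimensional case of Fubini's theorem applied inductively on $n$.

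For uniqueness, since each $P_i$ is a probability measure, $P(\bbR^n) = 1$, so $P$ is trivially $\sigma$-finite on $\cP$ (take the single set $\bbR^n = \bbR \times \cdots \times \bbR \in \cP$). Since $\bsalg^n = \sigma(\cP)$ and $\cP$ is a $\pi$-system, Proposition~\ref{pr:unique-pie} applies and guarantees that any two measures on $\cR$ coinciding with $P$ on $\cP$ must agree on all of $\bsalg^n$. Combined with the existence argument, this yields the unique measure $P$ with the required product formula. The uniqueness step is essentially immediate from the previously established proposition; the real work lies in the Carathéodory extension, and in the paper I would most likely simply cite a standard reference such as \cite{durrett2019probability} for the detailed verification of countable additivity on the algebra of rectangles rather than reproducing it in full.
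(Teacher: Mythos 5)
Your sketch is correct, and it matches the paper's treatment: the paper gives no proof of Proposition~\ref{pr:uniq-prod} at all, simply noting that this classical product-measure result can be found in \cite{durrett2019probability}. Your outline (rectangles form a $\pi$-system generating $\bsalg^n$, Carath\'eodory extension for existence, and Proposition~\ref{pr:unique-pie} with $\sigma$-finiteness on the rectangles for uniqueness) is exactly the standard argument behind that citation, and since you end by deferring the countable-additivity details to the same reference, there is no substantive difference in approach.
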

The proof of this classic result can be found, e.g., in
\cite{durrett2019probability}.  Using Proposition~\ref{pr:uniq-prod},
we will show that there exists a unique probability measure $P$ for
the measurable space of valuations $\cV = (\pdom_V, \salg_V)$ of
$\idb$ that satisfies Equation~\ref{eq:product}. To this end, we
define the mapping $g : \bbR^n \to \pdom_V$ s.t., for each
$(r_1, \ldots, r_n) \in \bbR^n$, $g((r_1, \ldots, r_n))=v$ with
$v(\bot_i) = r_i$, for each $\bot_i \in \nulls(\idb)$. It is easy to
see that $g$ is a mapping, we proceed to show that $g$ is also
measurable.

\begin{lemma}
\label{lm:mes-map}
The mapping $g$ is a $\cR/\cV$-measurable.
\end{lemma}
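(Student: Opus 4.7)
The plan is to show measurability by checking it on a generating family of $\salg_V$, since a mapping whose preimages of generators lie in the source $\sigma$-algebra is automatically measurable with respect to the generated $\sigma$-algebra (a standard fact used elsewhere in the appendix, e.g., in the proof of Proposition~\ref{pr:query-measurability}). Recall that $\salg_V$ is generated by the family of cylinder-like sets $V(\sigma_1,\ldots,\sigma_n) = \{v \mid v(\bot_i) \in \sigma_i\}$ with $\sigma_i \in \bsalg$, while $\bsalg^n$ is generated by the product sets $\sigma_1 \times \cdots \times \sigma_n$ with $\sigma_i \in \bsalg$. So it suffices to establish the preimage identity
\[
g^{-1}\bigl(V(\sigma_1,\ldots,\sigma_n)\bigr) \;=\; \sigma_1 \times \sigma_2 \times \cdots \times \sigma_n.
\]

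First, I would verify this identity by unfolding definitions. A tuple $(r_1,\ldots,r_n) \in \bbR^n$ lies in $g^{-1}(V(\sigma_1,\ldots,\sigma_n))$ iff the valuation $v = g((r_1,\ldots,r_n))$, which satisfies $v(\bot_i) = r_i$ by construction, belongs to $V(\sigma_1,\ldots,\sigma_n)$. By the definition of $V(\sigma_1,\ldots,\sigma_n)$, this happens iff $r_i = v(\bot_i) \in \sigma_i$ for each $i = 1,\ldots,n$, i.e., iff $(r_1,\ldots,r_n) \in \sigma_1 \times \cdots \times \sigma_n$. Both inclusions are immediate, and neither uses anything beyond the bijective correspondence between tuples in $\bbR^n$ and valuations with domain $\nulls(\idb) = \{\bot_1,\ldots,\bot_n\}$.

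Having established the preimage identity, the conclusion follows in one step: each generator of $\salg_V$ pulls back under $g$ to a generator of $\bsalg^n$, hence in particular to an element of $\bsalg^n$. Invoking the standard measurability-on-generators principle, $g$ is $\cR/\cV$-measurable.

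I do not foresee any real obstacle here; the lemma is essentially bookkeeping about how the generators of the valuation space $\sigma$-algebra were defined to match the product structure of $\bbR^n$. The only subtlety worth flagging is the reliance on the bijective nature of $g$: since valuations in $\pdom_V$ all have the same finite domain $\{\bot_1,\ldots,\bot_n\}$, the map $g$ is in fact a bijection, which makes the preimage computation trivial. This step sets up the push-forward construction used later in the uniqueness proof of $\prob_V$: the measure $P$ from Proposition~\ref{pr:uniq-prod} on $\cR$ can then be pushed forward along $g$ to yield the desired probability measure on $(\pdom_V, \salg_V)$ satisfying Equation~(\ref{eq:product}).
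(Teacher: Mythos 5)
Your proposal is correct and follows essentially the same route as the paper: both reduce measurability to the generators of $\salg_V$ and observe that the preimage of each generator $V(\sigma_1,\ldots,\sigma_n)$ under $g$ is exactly the product set $\sigma_1\times\cdots\times\sigma_n$, which lies in $\bsalg^n$ by the definition of $\cR$. The only cosmetic difference is that you explicitly invoke the bijectivity of $g$, which is not actually needed for the preimage computation, but this does not affect correctness.
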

\begin{proof}
  To prove that $g$ is measurable, we just need to prove that it is
  measurable for the generators of $\salg_V$ (see, e.g.,
  \cite{schilling2017measures}). More specifically, with
  $\cG = \{ V(\sigma_1, \ldots, \sigma_n) ~\mid~ \sigma_i \in \bsalg
  \textit{ for } i=1, \ldots n\}$, we need to prove the following
  statement:
  \[
    \pre{g}(X) \in \bsalg^n, \textit{ for each } X \in \cG
  \]
  To prove the claim, simply observe that, for each
  $X = V(\sigma_1, \ldots, \sigma_n) \in \cG$, we have
  $\pre{g}(X) = \{(v(\bot_1), \ldots, v(\bot_n)) \in \bbR^n~\mid~
  v(\bot_i) \in \sigma_i\}$. From the definition of $\cR$, then, we
  can conclude that $\pre{g}(X) \in \bsalg^n$, and the claim follows.
\end{proof}
Using Lemma~\ref{lm:mes-map}, we can prove the
Theorem~\ref{th:val-space-unique}.

\begin{proof}[Proof of Theorem~\ref{th:val-space-unique}]
  (Existence) The existence of $P$ is consequence of the fact that $g$
  is a measurable mapping ( Lemma~\ref{lm:mes-map}). Let $\rho$ be the
  product measure for $\cR$ such that
  $\rho(\sigma_1 \times \ldots \times \sigma_n) = \Pi_{i=1}^n
  P_i(\sigma_i)$, and let $P$ be the push forward of $\rho$ under
  $g$. For each $V(\sigma_1, \ldots, \sigma_n)$, with
  $\sigma_i \in \bsalg$, we have
  $\pre{g}(V(\sigma_1, \ldots, \sigma_n)) = P_1(\sigma_1) \times
  \ldots \times P_n(\sigma_n)$, and therefore
  $P(V(\sigma_1, \ldots, \sigma_n)) = \Pi_{i=1}^n P_i(\sigma_i)$. The
  claim follows from the fact that $P$ is a measure.

  (Uniqueness) Assume the measure $P$ defined in the previous step of
  this proof. To prove the claim we use the result in
  Proposition~\ref{pr:unique-pie}. Let $\cV$ be the set of generators
  for $(\pdom_V, \salg_V)$. Clearly, since the set of all valuations
  $\pdom_V$ is in $G$ by definition, and $P(\pdom_V) = 1$, we can
  conclude that $P$ is $\sigma$-finite over $\cV$. Moreover, we know
  that $\cV$ is a $\pi$-system (Claim~\ref{cl:vals-algebra}).
%
  % We proceed to show
  % that $\cV$ is closed under intersections, in turn proving that $\cV$
  % is a $\pi$-system. Let
  % $V(\sigma_1, \ldots, \sigma_n), V(\sigma'_1, \ldots, \sigma'_n) \in
  % \cV$. By definition,
  % $V(\sigma_1, \ldots, \sigma_n) \cap V(\sigma'_1, \ldots, \sigma'_n)
  % = V(\sigma_1 \cap \sigma'_1, \ldots, \sigma_n \cap
  % \sigma'_n)$. Observe now that , since $\bsalg$ is a
  % $\sigma$-algebra, we have that
  % $(\sigma_i \cap sigma'_i) \in \bsalg$, for each $i = 1, \ldots,
  % n$. By definition then
  % $V(\sigma_1 \cap \sigma'_1, \ldots, \sigma_n \cap \sigma'_n) \in
  % \cV$, proving that $\cV$ is closed under intersections.
  To conclude the proof, we simply observe that, since $\cV$ is a
  $\pi$-system and $P$ is $\sigma$-finite over $\cV$,
  Proposition~\ref{pr:unique-pie} implies that every measure for
  $(\pdom_V, \salg_V)$ coincides with $\mu$.
\end{proof}

\subsection{Proof of Proposition~\ref{pr:sigma-db}}
  We show that $\salg_\idb$ is closed under complement and countable unions.

  (Complement) Let $D \in \salg_\idb$ with
  $\pre{\chi}(D) = A \in \salg_V$. We show that
  $\pre{\chi}(\bar{D}) \in \salg_V$, where $\bar{D}$ is the complement
  of $D$, in turn proving that $\bar{D} \in \salg_\idb$. To this end,
  we observe that $\pre{\chi}(\pdom_\idb) = \pdom_V$. Therefore,
  $\pre{\chi}(\pdom_\idb \setminus D) = \pre{\chi}(\pdom_\idb)
  \setminus \pre{\chi}(D) = \pdom_V \setminus A = \bar{A}$. Since
  $\salg_V$ is a $\sigma$-algebra and $A \in \salg_V$, we can conclude
  that $\bar{A} \in \salg_V$. In turn, the latter proves that
  $\pre{\chi}(\bar{D}) \in \salg_V$ and the claim follows.

  (Countable Union) Let $\{D_i~\mid~i>0\}$ be a countable family of
  sets in $\salg_\idb$, and let $\bigcup_i D_i = D$. We proceed to
  show that $\pre{\chi}(D) \in \salg_V$, in turn showing that
  $D \in \salg_\idb$. Since each $D_i$ is in $\salg_\idb$, by
  definition, we have $\pre{\chi}(D_i) \in \salg_V$. Therefore,
  $\bigcup_i \pre{\chi}(D_i) = V \in \salg_V$. By definition,
  $d \in \bigcup_i \{D_i~\mid~i>0\}$ implies $d \in D_i$, for some
  $i = 1, \ldots, n$. Therefore,
  $\pre{\chi}(D) = \{d~\mid~d\in \pre{\chi}(D_i) \textit{ for some } i
  = 1, \ldots, n \}$.  In turn, the latter implies that
  $\pre{\chi}(D) = V$ and the claim follows.

\subsection{Proof of Proposition~\ref{pr:prob-dbs}}

Let $\cV = (\pdom_V, \salg_V)$ and $\cD = (\pdom_\idb, \salg_\idb)$.
To prove the claim, we first observe that $\prob_\idb$ is the push
forward of $\prob_V$ under the mapping $\chi$. To prove that
$\prob_\idb$ is a probability, we simply need to show that $\chi$ is
$(\cV/\cD)$-measurable and then use Proposition~\ref{pr:pushforward} to
prove the claim. 

To prove that $\chi$ is $(\cV/\cD)$-measurable we observe that, by
definition of the PDB of $\idb$, $D \in \salg_\idb$ if and only if
$\pre{\chi}(D) \in \salg_V$. Then, for each $D \in \salg_\idb$, we
have $\pre{\chi}(D) \in \salg_V$ and the claim follows
straightforwardly.

\newcommand{\semvl}[1]{{\llbracket{#1}\rrbracket}}
\newcommand{\join}{\bowtie}
\section{Appendix for Section~\ref{sec:sqlra}}
\newcommand{\Card}{\mathsf{Card}}
\newcommand{\notnull}{\mathsf{NotNull}}
\newsavebox{\semt}
\sbox{\semt}{%
	\parbox{\textwidth}{%
		\begin{align*}
		\sem{t}_{\eta}& \df \left\{\begin{matrix}
		t & t \in \mathbb{R} \\ 
		\eta(t) & t \df \att{i} \\
	%	\eta(t_1) \circ \eta(t_2)& t \df t_1 \circ t_2 \\
	%	\exp(\eta(t')) & t \df \exp(t')\\
	%	\log(\eta(t')) & t \df \log(t')
		\end{matrix}\right.\\
		\sem{(t_1 \circ t_2 )}_{\eta } &\df (\sem{t_1}_{\eta} \circ \sem{t_2}_{\eta} )\\
		\sem{\exp(t )}_{\eta } &\df
	 \exp(\sem{t}_{\eta})\\	
	 	\sem{\log(t )}_{\eta } &\df
	 \log(\sem{t}_{\eta})
		\end{align*}
	}%
}%
\newcommand{\semfigt}{%
	\begin{figure*}%[h]
		\centering{
		\fbox{\usebox{\semt}}}
		\caption{Semantics of Terms}
	\end{figure*}
}%
%{\semfigt}

%%%%%%%%%%%%%%%%%%%%%%%%%%%%%%%%%%

\newsavebox{\semn}
\sbox{\semn}{%
	\parbox{\textwidth}{%
		\begin{align*}
\ell\left(\pi_{t_1 [\shortrightarrow  \! N_1],\cdots, t_m [\shortrightarrow   \!N_m] }(E)\right) & \df \tilde N_1 \cdots \tilde N_m\\
\text{where } \tilde N_i& \df \left\{\begin{matrix}
N_i & \text{ if $[\shortrightarrow \!  N_i] $} \\ 
\Name(t_i)  & \text{otherwise}
\end{matrix}\right.\\
\ell\left(\sigma_{\theta}(E)\right) & \df \ell(E)\\
\ell\left( E_1 \times E_2 \right) &\df \ell(E_1)\cdot \ell(E_2)\\
\ell\left( E_1 \op E_2 \right) &\df \ell(E_1) \text{ for } \op\in \{\cup, \cap, \setminus  \} \\
\ell\left(\epsilon(E)\right) & \df \ell(E)\\
\begin{multlined}[t]
\ell ( \Group_{\barN} \langle F_1(N_1)[\shortrightarrow \!  N'_1],\cdots,\\ F_m(N_m)[\shortrightarrow  \! N'_m] \rangle (E) ) 
\end{multlined}
&\df
\begin{multlined}[t]
 \barN\cdot\tilde N_1 \cdots \tilde N_m
\end{multlined}
 \\
\text{where } \tilde N_i& \df \left\{\begin{matrix}
N_i & \text{ if $[\shortrightarrow \!  N_i] $} \\ 
\Name(F_i(N_i))  & \text{otherwise}
\end{matrix}\right.
		\end{align*}
	}%
}%

\newcommand{\semfign}{%
	\begin{figure*}%[t]
		\centering
		\fbox{\usebox{\semn}}
		\caption{Names Assigned to Expressions}
%		\label{fig:semn}
	\end{figure*}
}%
%{\semfign}

\newsavebox{\semc}
\sbox{\semc}{%
	\parbox{\textwidth}{%
	\textbf{Basic conditions}
	        \begin{align*}
		\sem{\true}_{D,\eta}& \df \true %\\
                \ \ \ \ 
		\sem{\false}_{D,\eta} \df \false\\
%		\sem{\isnul(t)}_{D,\eta}&\df	\left\{\begin{matrix}
%		\true  &  \sem{t}_{\eta} = \NULL \\ 
%		\false &  \text{otherwise}
%		\end{matrix}\right.\\
	\sem{t \,\omega \, t'}_{D,\eta} & \df
\left\{\begin{matrix}
		\true  &  	\sem{t }_{\eta} \,\omega \, 	\sem{t'}_{\eta}\\ 
		\false &  \text{otherwise}
		\end{matrix}\right.\\
			\text{where } \omega& \in \{ <,>,\le,\ge,\eq,\neq \} \\
		\sem{\brt \eq \brt'}_{D,\eta} & \df \bigwedge_{i=1}^{n} \sem{t_i \eq t'_i}_{\eta}\\
		\sem{\brt \ne \brt'}_{D,\eta} & \df
		\neg \sem{\brt \eq \brt'}_{D,\eta} \\
		\text{for } \omega\in \{<,>\}\,\, 
			\sem{\brt \,\omega\, \brt'}_{D,\eta}&\df
	\bigvee_{1\le i\le n-1} \big(
	\bigwedge_{1\le j \le i} 
	\sem{t_i  \eq t'_i}_{\eta} 
	\wedge
	\sem{t_{i+1}  \, \omega \, t'_{i+1}}_{\eta} 
	\big)
	 \\
	 			\text{where } \brt& \df (t_1,\ldots, t_n)\\
			\brt'& \df (t'_1,\ldots, t'_n)\\
			\sem{\brt \le \brt'}_{D,\eta}&\df
			\sem{\brt \eq \brt'}_{D,\eta} \vee \sem{\brt < \brt'}_{D,\eta}
	 \\
	 			\sem{\brt \ge \brt'}_{D,\eta}&\df
			\sem{\brt \eq \brt'}_{D,\eta} \vee \sem{\brt > \brt'}_{D,\eta}
	 \\
		\sem{\brt\in q}_{D,\eta}& \df
		\bigvee_{\brt' \in \sem {q}_{D,\eta}
	}  \sem{\brt \eq \brt'}_{\eta}\\
	\sem {\brt\, \omega\, \any(q)}_{D,\eta}& \df
		\bigvee_{\brt' \in \sem {q}_{D,\eta}
	}  \sem{\brt\, \omega\, \brt'}_{\eta}\\
	\sem {\brt\, \omega\, \all(q)}_{D,\eta}& \df
		\bigwedge_{\brt' \in \sem {q}_{D,\eta}
	}  \sem{\brt\, \omega\, \brt'}_{\eta}\\
		\sem {\isempty(q)}_{D,\eta} &\df 
		\left\{\begin{matrix}
		\true  & \sem{q}_{D,\eta} = \emptyset	 \\ 
		\false &   \text{otherwise}
		\end{matrix}\right.\\
		\end{align*}
\textbf{Composite conditions}
\begin{align*}
	\sem{\theta_1 \vee \theta_2}_{D,\eta} &\df \sem{\theta_1 }_{D,\eta}\vee \sem{\theta_2}_{D,\eta}\\
	\sem{\theta_1 \wedge \theta_2}_{D,\eta}& \df \sem{\theta_1 }_{D,\eta}\wedge \sem{\theta_2}_{D,\eta}\\
	\sem{\neg \theta}_{D,\eta} &\df \neg \sem{\theta}_{D,\eta}.\\
\end{align*}
}
}%

\newcommand{\semfigc}{%
	\begin{figure}[h]
		\centering
		\fbox{\usebox{\semc}}
		\caption{Semantics of Conditions}
%		\label{fig:semc}
	\end{figure}
}%
%\semfigc

\newsavebox{\seme}
\sbox{\seme}{%
	\parbox{\textwidth}{%
	\begin{align*}
	\sharp(\bara, {R}(\db)) = k & \text{ if } \sharp(\bara , R^{\db}) = k\\
\sharp((a_{1},\ldots,a_{m}) , \pi_{\att{i_1} ,\cdots,\att{i_m} }(q)(\db)) = k
& \text{ if } 
     k\df | \bag{(t_1,\ldots,t_n) \in  {q}(\db) \mid  a_1 = t_{i_1},\ldots, a_m = t_{i_m}  } |
     \\
     	\sharp(\bara, \sigma_{ \att{i} \circ \att{j}}(q)(\db)) = k &\text{ if }
     \sharp((a_1,\ldots, a_n),  {q}(\db)) = k \ \text{and}\
  a_{i} \circ a_j,\ \circ\in\{=,< \}
     \\
	\sharp((a_1,\ldots, a_n), q_1 \times q_2 (\db)) = k &\text{ if } \exists i:\
	\sharp((a_1,\ldots,a_i),q_1(\db)) = k_1, \	\sharp((a_{i+1},\ldots,a_n),q_2(\db)) = k_2, \
	k=k_1\cdot k_2
	\\
		\sharp((a_1,\ldots, a_n), q_1 \cup q_2 (\db)) = k &\text{ if }
	\sharp((a_1,\ldots,a_n),q_1(\db)) = k_1, \	\sharp((a_{1},\ldots,a_n),q_2(\db)) = k_2, \
	k=k_1 + k_2
	\\
			\sharp((a_1,\ldots, a_n), q_1 \setminus q_2 (\db)) = k &\text{ if }
	\sharp((a_1,\ldots,a_n),q_1(\db)) = k_1, \	\sharp((a_{1},\ldots,a_n),q_2(\db)) = k_2, \
	k=\max\{0,k_1 - k_2\}
	\\
\sharp((a_1,\ldots, a_n), \App_f(q) (\db)) = k &\text{ if } \sharp((a_1,\ldots, a_{n-1}), q (\db)) = k, \ a_n = f([\att{1}\mapsto a_{1},\cdots,\att{n}\mapsto a_{n} ]) \\
\sharp((a_1,\ldots, a_{m+1}), \sum_{\att{i_1},\ldots, \att{i_m}}^{\att{j}}(q) (\db)) = 1 
&\text{ if } a_{m+1} = \sum_{(t_1,\ldots,t_n)\in I}t_{j}, \ I = \sigma_{\att{i_1} = a_1}\left(
\cdots \sigma_{\att{i_m} = a_m}(q(\db))\cdots 
\right), I\ne \emptyset \\
\sharp((0), \sum_{\att{i_1},\ldots, \att{i_m}}^{\att{j}}(q) (\db)) = 1 
&\text{ if } \sigma_{\att{i_1} = a_1}\left(
\cdots \sigma_{\att{i_m} = a_m}(q(\db))\cdots 
\right)=\emptyset
	\end{align*}
	}%
}%

\newcommand{\semfige}{%
	\begin{figure*}[t]
		\centering
		\fbox{\usebox{\seme}}
		\caption{Semantics of Queries}
		\label{fig:seme}
	\end{figure*}
}%

\OMIT{
\begin{figure*}
\begin{minipage}[c]{1\textwidth}
%\begin{subfigure}{1\textwidth}
% \centering
%  	\fbox{\usebox{\semn}}
 % \caption{Names}
 % \label{fig:semn}
%\end{subfigure}\\
\begin{subfigure}{1\textwidth}
  \centering
   	\fbox{\usebox{\semt}}
  \caption{Terms}
  \label{fig:semt}
\end{subfigure}
\\
\begin{subfigure}{1\textwidth}
  \centering
   	\fbox{\usebox{\semc}}
  \caption{Conditions}
  \label{fig:semc}
\end{subfigure}
\\
%\begin{subfigure}{1\textwidth}
%	\centering
%	\fbox{\usebox{\seme}}
%	\caption{Expressions}
%	\label{fig:seme}
%\end{subfigure}
\end{minipage}
\caption{Semantics of Terms and Conditions}
	%,Conditions, 
\label{fig:semantics}
\end{figure*}
}
\semfige
We define the formal semantics of \sqlra\ in
the spirit of \cite{vldb17GL,hottsql,benzaken}.

The semantics ${q}({\db})$ of a query $q$ on a complete database $\db$
is defined in Figure~\ref{fig:seme} in which we use the following
notations: For a bag $B$, we use $|B|$ to denote the number of tuples
in $B$.  For a function $f\in \RAT[\att{1},\att{2},\ldots]$ we denote
by $f[\att{1}\mapsto a_1,\ldots \att{n}\mapsto a_n]$ the value
obtained from $f$ by replacing each of $\att{i}$ with $a_i$.

\subsection{Additional expressiveness of \sqlra}

We elaborate on the expressiveness of \sqlra:
\begin{itemize}
	\item 
	To express selections $\sigma_{\theta}(R)$ with 
	composite conditions, i.e., of the form $\theta \vee \theta, \theta \wedge \theta, \neg \theta$ we use
	   \begin{enumerate}
		\item  \label{it:vee}$\sigma_{\theta_1 \vee \theta_2}(R) =
		(\sigma_{\theta_1 }(R)\cup 
		\sigma_{\theta_2}(R) ) \setminus  \sigma_{\theta_2}\left(\sigma_{\theta_1 }(R)\right)$
		\item \label{it:wedge}
		$\sigma_{\theta_1 \wedge \theta_2}(R) =
		\sigma_{\theta_2}\left( \sigma_{\theta_1 }(R)\right) $
		\item \label{it:neg}
		$\sigma_{\neg \theta}(R) =
		R \setminus \sigma_{\theta }(R)$
	\end{enumerate}
	\item 
	To express selections $\sigma_{\theta}(R)$ with
	arbitrary conditions $\theta \df f \ \omega\ g$ where $\omega\in\{=,<\}$ and $f,g\in\RAT[ \att 1, \att 2,\ldots]$ we write
	\[
	\sigma_{\att{n+1} \ \omega \ \att{n+2}} \left(\App_{g,f}(R) \right)
	\]
	where $\App_{g,f}(R) \df \App_f \left( \App_g (R)  \right)$ and $R$ is a relation with arity $n$.
	For $\omega \in \{ \le,\ge,> \}$ we use composite conditions.
		\item To express selections $\sigma_{\theta}(R)$ where $\theta \df \att{i} \in [f,g]$ with $f,g\in\RAT[\att{1},\ldots]$ we write  \[
	\sigma_{\att{i}\ge f \wedge \att{i}\le g}(R)
	\]
	we extend it naturally to open and half-open intervals.
	\item To express SQL duplicate elimination where $\arity(R) = n$
	\[
	\pi_{\att 1,\ldots \att n}\Big(\sum_{\att 1,\ldots,\att n}^{\att{n+1}}\big(\App_1(R)\big)\Big)
	\] 
	\item  To express SQL aggregates:
	\begin{enumerate}
	    \item Count aggregate $\text{COUNT}_{\att{i_1},\ldots,\att{i_k}}^{\att j}(R)$ where $\arity(R) = n$  can be written as
	    \[\pi_{\att{1},\ldots,\att{k},\att{k+1}}\left(
	    \sum_{{\att{i_1},\ldots,\att{i_k} }}^{{\att {n+1}}}\big(\App_1(R)\big) \right)
	    \]
	    In fact, the definition of COUNT is not affected by $\att{j}$ so it can be omitted.
	   \item To compose aggregations $\text{AGG1}_{\att{i_1},\ldots,\att{i_k}}^{\att j}(R)$ and $\text{AGG2}_{\att{i_1},\ldots,\att{i_k}}^{\att j}(R)$ we set
	    ${\text{AGG1}^{\att{j}},\text{AGG2}^{\att{j'}}} _{\att{i_1},\ldots,\att{i_k}}(R)$
	    as follows
	    \[
	    \pi_{\att{1},\ldots,\att{k+1},\att{2k+2}}\left(
	    \sigma_{\att{1} = \att{k+2} \wedge \cdots \wedge \att{k} = \att {2k+1} }\left(\text{AGG1}_{\att{i_1},\ldots,\att{i_k}}^{\att j}(R) \times \text{AGG2}_{\att{i_1},\ldots,\att{i_k}}^{\att j'}(R)\right)\right)
	    \]
	    \item
	    Average aggregate $\text{AVG}_{\att{i_1},\ldots,\att{i_k}}^{\att j}(R)$ where $\arity(R) = n$  can be written as
	    \[
	   \pi_{\att{k+3}}\left(\App_{\frac{\att{(k+1)}}{ \att{(k+2)}}} 
	   \left(\pi_{\att{1},\ldots,\att{k+1},\att{2k+2}}\left(
	   \sigma_{\att{1}=\att{k+2} \wedge \cdots \wedge \att{k}= \att{2k+1} }
	   \left( \text{SUM}_{\att{i_1},\ldots,\att{i_k}}^{\att j}(R) \times
	   %_{\att{1},\ldots,\att{k}}
	     \text{COUNT}_{\att{i_1},\ldots,\att{i_k}}^{\att j}(R)\right)
	     \right)
	     \right)
	     \right)
	    \]
\item
MIN aggregate $\text{MIN}_{\att{i_1},\ldots,\att{i_k}}^{\att j}(R)$ where $\arity(R) = n$ can be written as
\[
\epsilon \left( 
\pi_{\att{i_1},\ldots, \att{i_k}, j}(R)
\right)
\setminus
 \epsilon\left( \pi_{ \att{1},\ldots, \att{k+1} } \left(\sigma_{\att{k+1} > \att{k+2} } \left( (\pi_{\att{i_1},\ldots \att{i_k},\att{j}} R) \join_{\att{1},\ldots \att{k}} (\pi_{\att{i_1},\ldots \att{i_k},\att{j}} R)\right)
\right)\right)
\]
\item
MAX aggregate $\text{MAX}_{\att{i_1},\ldots,\att{i_k}}^{\att j}(R)$ where $\arity(R) = n$ can be written as 
$\text{MIN}_{\att{i_1},\ldots,\att{i_k}}^{\att j}(R)$ while replacing $>$ with $<$.
	\end{enumerate}
\end{itemize}

\newcommand{\fore}{\FO(+,\cdot, <)}
\newcommand{\br}{\bar{r}}
\newcommand{\bn}{\bar{n}}

\section{Appendix for Section~\ref{sec:qa}}

\subsection{Proof of Theorem~\ref{th:measurability}}
Assume an incomplete database $\idb$ such that
$\nulls(\idb) = \{\bot_1, \ldots, \bot_n\}$, a query $q \in \sqlra$ of
arity $m$, and a generalized answer tuple $\ba$. We will show that
$\out_{q,\idb, \circ}{(\ba, k)}$ is measurable in the PDB
$\cD_\idb = (\pdom_\idb, \salg_\idb, \prob_\idb)$ defined by $\idb$.

Let $\cV_\idb = (\pdom_V, \salg_V, \prob_V)$ be the space of
valuations of $\idb$. Let $\cB = (\bbR, \bsalg)$ be the measurable
space defined by $\bbR$ equipped with the Borel $\sigma$-algebra
$\bsalg$ over $\bbR$, and, for any finite integer $n>0$, let
$\cB^n = (\bbR^n, \bsalg^n)$ where $\bsalg^n$ is the $\sigma$-algebra
generated by the family of sets
$\cG_n = \{\sigma_1 \otimes \ldots \otimes \sigma_n~\mid~ \sigma_i \in
\bsalg \textit{ for each } i =1, \ldots, n \}$. It is well known that
$\bsalg^n$ is the $\sigma$-algebra of open sets on $\bbR^n$, for each
$n>0$.

First, we show that there exists a bijection
$\chi : \bbR^n \to \pdom_V$ such that $\chi(\sigma) \in \salg_V$, for
every $\sigma \in \bsalg^n$. Let, for each $\br = (r_1, \ldots, r_n)$
in $\bbR^n$, $\chi(\br) = v_\br$ with $v(\bot_i) = r_i$, for each
$i = 1, \ldots, n$. In other words, $\chi$ maps each element $\br$ of
$\bbR^n$ into a corresponding valuation $v_\br$ such that $v_\br$
assigns the $i$-th component of $\br$ to the $i$-th null of $\idb$,
for each $i = 1, \ldots, n$. Clearly, $\chi$ is a bijection and the
inverse $\chi^{inv}: \pdom_V \to \bbR^n $ of $\chi$ is
$(\pdom_V , \bbR^n)$-measurable since, by definition of $\salg_V$, the
preimage $(\chi^{inv})^{-1}(G)$ of each generator $G \in \cG_n$ of
$\bsalg^n$ under $\chi^{inv}$ is contained in $\salg_V$, i.e.,
$(\chi^{inv})^{-1}(G) \in \salg_V$, for each $G \in \cG_n$. Moreover,
since $\chi$ is a bijection, we have $(\chi^{inv})^{-1}(X) = \chi(X)$,
for each $X \in \bsalg^n$. Therefore, for every $\sigma \in \bsalg^n$,
$(\chi^{inv})^{-1}(\sigma) = \chi(\sigma) \in \salg_V$.

Let now
$O_\circ = \{\br\in\bbR^n~\mid~ \#(v_\br (\ba), q(v_\br(\idb))) \circ
k\}$ be the set of elements $\br \in \bbR^n$ that satisfy
$\#(v_\br (\ba), q(v_\br(\idb))) \circ k$, for $\circ \in \{<,=,>\}$.
To prove our claim, we will show the following
 
\begin{claim}
  \label{cl:meas-rn}
  The set $O_\circ$ is measurable in $\bsalg^n$, for each
  $\circ \in \{<,=,>\}$.
\end{claim}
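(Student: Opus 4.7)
The plan is to reduce the three cases $\circ\in\{<,=,>\}$ to a single family of ``equality'' sets, and then to show that each such set is semialgebraic in $\bbR^n$, hence Borel. For each integer $j\ge 0$ let
\[
O_j \df \{\br\in\bbR^n \mid \#(v_\br(\ba), q(v_\br(\idb))) = j\}.
\]
Then $O_= = O_k$, while $O_< = \bigcup_{j=0}^{k-1} O_j$ is a finite union and $O_> = \bigcup_{j>k} O_j$ is a countable union. Since $\bsalg^n$ is closed under countable unions, it suffices to prove $O_j\in\bsalg^n$ for every $j\ge 0$. (This is exactly the reduction already hinted at in the paragraph preceding the claim, rephrased at the level of $\bbR^n$ via the bijection $\chi$.)

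Second, I would establish, by structural induction on $q\in\sqlra$, the following invariant: there exists a finite partition of $\bbR^n$ into semialgebraic cells such that, on each cell $C$, the bag $q(v_\br(\idb))$ is described by a fixed finite list of tuples whose entries are rational functions of $\br$, each with a fixed multiplicity. The base case (a relation symbol $R$) is immediate from substituting components of $\br$ for the nulls of $R^{\idb}$, with the partition $\{\bbR^n\}$. Projection, Cartesian product, union, and selection on $\att i = \att j$ or $\att i < \att j$ are handled by refining the current partition according to the polynomial equalities/inequalities needed to decide which input tuples pass the selection or collide under projection; $\App_f$ with $f$ rational additionally removes the semialgebraic locus where the denominator vanishes. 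Set difference requires refining the partition so that, on each cell, which tuples of the two sub-bags coincide is fixed, making truncated subtraction of multiplicities constant on the cell. Grouping with summation $\sum^{\att j}_{\att{i_1},\ldots,\att{i_k}}$ requires a further refinement so that, on each cell, the equivalence relation ``two tuples of the input share all grouping-key values'' is fixed; the resulting group sums are then fixed rational functions of $\br$. Once this invariant is established, the event that exactly $j$ tuples of $q(v_\br(\idb))$ lie inside $v_\br(\ba)$ becomes, on each cell $C$, a finite boolean combination of polynomial (in)equalities on $\br$ (one for each component consistency check $t_i \in v_\br(a_i)$), so $O_j\cap C$ is semialgebraic; taking the finite union over cells, $O_j$ is semialgebraic. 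By Tarski--Seidenberg (equivalently, by the o-minimal cell decomposition invoked in the paper via \cite{ominimal-book}), every semialgebraic subset of $\bbR^n$ is a finite boolean combination of polynomially-defined open and closed sets, hence is Borel; thus $O_j\in\bsalg^n$.

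The main obstacle I anticipate is executing the induction cleanly for set difference and, especially, for grouping with summation, since in both cases the symbolic shape of the output bag depends not only on the input tuples but on which of their rational-function entries happen to coincide at a given $\br$. The key algebraic fact that keeps the induction finite is that two distinct rational functions of $\br$ can agree only on a semialgebraic set of strictly smaller dimension than $\bbR^n$; consequently every refinement performed above adds only finitely many new semialgebraic cells, and the partition obtained after processing $q$ is still finite and semialgebraic. The invariant is thus maintained, and the claim follows.
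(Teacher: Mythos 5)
Your proposal is correct, but it takes a genuinely different route from the paper. The paper proves Claim~\ref{cl:meas-rn} by \emph{logical definability}: it encodes the multiplicity function of $q$ into a first-order formula over the real field (Claim~\ref{cl:formula-constants}), which forces it to first prove the active-domain bound of Lemma~\ref{lm:answeradombound} so that projection and grouping can be expressed with a bounded block of existential quantifiers over witness tuples, to encode interval consistency separately (Claim~\ref{cl:ans-tup-fo}), and then to conclude measurability from cell decomposition for sets definable in $\fore$ (Proposition~\ref{pr:opensets}). You instead carry the decomposition through the query itself: a structural induction maintaining a finite semialgebraic partition of $\bbR^n$ on which the output bag has a fixed symbolic description by rational-function tuples with constant multiplicities, after which the exact-count sets $O_j$ are quantifier-free boolean combinations of polynomial inequalities on each cell (so the appeal to Tarski--Seidenberg is not even needed, and neither is an active-domain bound, since finiteness of the symbolic list is automatic from the induction). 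What your route buys is elementarity and the elimination of the two auxiliary lemmas; what the paper's route buys is modularity -- the delicate bookkeeping that you rightly flag for $\setminus$ and $\sum^{\att{j}}_{\att{i_1},\ldots,\att{i_k}}$ (the output's shape depends on which rational-function entries coincide at a given $\br$) is absorbed there into one formula per operator (the $\phi^\pi$ subformulas), and the definability argument survives unchanged if the term language is extended beyond $\RAT$ to functions of an o-minimal expansion such as $\exp$ or $\log$, where a purely semialgebraic cell argument would have to be redone. Both proofs share the same first reduction of $\circ\in\{<,=,>\}$ to the countable family of exact-count sets. One small remark: the fact you invoke about distinct rational functions agreeing only on a lower-dimensional set is not what keeps the induction finite -- finiteness follows simply because each refinement case-splits on finitely many semialgebraic sign and coincidence conditions -- but this does not affect correctness.
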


In turn, Claim~\ref{cl:meas-rn} implies that $\chi(O_\circ)$ is
measurable in $\salg_V$, since $\chi(A) \in \salg_V$, for each
$A \in \bsalg^n$. To prove the claim of
Theorem~\ref{th:measurability}, then, we simply observe that
$\out_{q,\idb,\circ}{(\ba, k)} = \{v_\br (\idb)~\mid~ \#(v_\br (\ba),
q(v_\br(\idb))) \circ k\}$, since every $v \in \pdom_V$ is equal to a
corresponding $v_\br$ and, since $\chi$ is a bijection, we
have$\out_{q,\idb}{(\ba, k)} = \{v_\br (\idb)~\mid~ v_\br \in
\chi(O_\circ)\}$.  Then, from $\chi(O_\circ) \in \salg_V$, it follows
that $\out_{q,\idb}{(\ba, k)} \in \salg_\idb$ from the definition of
the PDB $\cD_\idb$ of $\idb$.

\subsubsection{Proof of Claim~\ref{cl:meas-rn}} We use $\fore$ for the
first-order theory of the reals, i.e., every finite first-order
formula that can be defined using and $+$, $\cdot$ as functions, the
binary relations $<$ and $=$, and constants from $\bbR$.  Formulae in
$\fore$ are interpreted over the structure
$\langle{\bbR, \Omega}\rangle$ where $\Omega$ defines the standard
interpretation of $+$, $\cdot$, $<$ and $=$ over
$\bbR$. Interpretation of formulae over such structure is as
customary. With this interpretation, $\fore$ can express also the
functions $a - b$, $\frac{a}{b}$, using suitable formulae.  For
formulae in $\fore$, we write $\phi(\bx)$ to denote the fact that
$\bx$ is the tuple of free variables in $\phi$. Given a formula
$\phi(\bx)$ with $\bx$ of length $n$, and a $n$-tuple $\br$ over
$\mathbb{R}$, we say that $\phi(\br)$ is true if the formula obtained
from $\phi(\br)$ by substituting $\bx$ with $\br$ is satisfied in
$\langle{\bbR, \Omega}\rangle$. We say that a set $S$ is definable in
$\fore$ if there exists $\phi(\bx) \in \fore$ such that $\br \in S$ if
and only if $\phi(\br)$ is true. The following proposition is a
consequence of the fact that sets definable in $\fore$ enjoy the cell
decomposition property (see, e.g., \cite{ominimal-book}).

\begin{proposition}
  \label{pr:opensets}
  Let $S \in \bbR^n$. If $S$ is definable in $\fore$ then $S$ is
  equivalent to a finite union of open set in $\bbR^n$, and sets of
  smaller dimensions.
\end{proposition}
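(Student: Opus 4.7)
The plan is to invoke two classical results about the first-order theory of the reals and combine them. The key insight is that sets definable in $\fore$ are exactly the \emph{semi-algebraic} sets, and such sets admit a decomposition into finitely many ``cells,'' each of which is either open in $\bbR^n$ or of strictly smaller dimension.

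First, I would apply the Tarski--Seidenberg quantifier elimination theorem: every formula $\phi(\bx) \in \fore$ is equivalent (over the structure $\langle \bbR, \Omega\rangle$) to a quantifier-free formula in the same signature. Such a quantifier-free formula is a Boolean combination of atomic formulae of the form $p(\bx) = 0$ or $p(\bx) < 0$, where each $p$ is a polynomial with real coefficients. Hence the set $S = \{\br \in \bbR^n \mid \phi(\br) \text{ holds}\}$ is a semi-algebraic subset of $\bbR^n$.

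Next, I would invoke the cell decomposition theorem for semi-algebraic sets, obtained via cylindrical algebraic decomposition. This theorem partitions $\bbR^n$ into finitely many pairwise disjoint cells $C_1, \dots, C_m$ such that (i) each $C_j$ is a connected semi-algebraic set homeomorphic to $(0,1)^{d_j}$ for some $d_j \in \{0, 1, \dots, n\}$, and (ii) each semi-algebraic set (in particular $S$) is a finite union of such cells. Among these cells, those with $d_j = n$ are open subsets of $\bbR^n$, while those with $d_j < n$ are of strictly smaller dimension (in the Hausdorff or algebraic sense). Writing $S$ as the union of the cells that it contains yields the desired decomposition.

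The main subtlety is not in the algebraic manipulations but rather in invoking the right form of the cell decomposition theorem; the statement is phrased loosely (``equivalent to a finite union of open sets and sets of smaller dimensions''), and one must be careful to match this to the precise semi-algebraic cell decomposition theorem as presented in \cite{ominimal-book}. Once Tarski--Seidenberg and cell decomposition are cited, the conclusion is immediate, so I would keep the proof itself short and rely on the cited literature for the heavy lifting.
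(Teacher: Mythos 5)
Your proposal is correct and matches the paper's argument, which likewise derives the statement directly from the cell decomposition property of sets definable in $\fore$ (citing \cite{ominimal-book}); your explicit mention of Tarski--Seidenberg merely spells out why definable sets are semi-algebraic, which the paper leaves implicit. Nothing further is needed.
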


In turn, Proposition~\ref{pr:opensets} implies that every set
$S \in \bbR^n$ that is definable in $\fore$ is Borel-measurable in
$\bbR^n$. In other words, if $S$ is definable in $\fore$ then
$S\in\bsalg^n$. To prove Claim~\ref{cl:meas-rn} then, it suffices to
show the following:

\begin{claim}
\label{cl:fo-definability}
  The set
  $O_p = \{\br\in\bbR^n~\mid~ \#(v_\br (\ba), q(v_\br(\idb))) = p \textit{ for each } p\in\bbR\}$
  is definable in $\fore$, for every $p \in \bbN$.
\end{claim}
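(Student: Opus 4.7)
\textbf{Proof plan for Claim~\ref{cl:fo-definability}.}
The plan is to prove by structural induction on $q \in \sqlra$ a stronger statement: for every fixed query $q$ and incomplete database $\idb$, there exists a uniform bound $N_{q,\idb} \in \bbN$ and a family of $\fore$-formulas $\phi^q_m(\bx, \by)$ (with $\bx$ of length $n = |\nulls(\idb)|$ and $\by$ of length $\arity(q)$), one for each $m \in \{0,1,\dots, N_{q,\idb}\}$, such that for every $\br \in \bbR^n$ and every tuple $\bt$, the formula $\phi^q_m(\br, \bt)$ holds in $\langle \bbR, \Omega\rangle$ if and only if the multiplicity of $\bt$ in $q(v_\br(\idb))$ equals exactly $m$. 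Once this is established, Claim~\ref{cl:fo-definability} follows: the condition $\#(v_\br(\ba), q(v_\br(\idb))) = p$ can be written as a finite disjunction over all tuples $(m_1,\dots,m_s)$ of non-negative integers with $\sum m_i = p$ and $s \le N_{q,\idb}$, of formulas asserting the existence of $s$ pairwise distinct output tuples $\by_1,\dots,\by_s$, each consistent with $v_\br(\ba)$ (expressible with polynomial inequalities once the interval endpoints, which lie in $\RAT[\bot_1,\dots,\bot_n]$, are evaluated at $\bx$), with multiplicities $m_i$ in $q(v_\br(\idb))$, and such that any tuple not among them has multiplicity $0$ on the consistent region.

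The inductive cases are straightforward for most constructs. For the base case $q = R$, the bag $R^\idb$ is finite and each of its tuples becomes, under $v_\br$, a tuple of terms in $\bbR[x_1,\dots,x_n]$; the multiplicity of $\by$ is the number of these tuples coinciding with $\by$, which is a direct Boolean combination of equalities. Selection, projection, product, union, and difference all have multiplicities given by arithmetic combinations (at most sum, product, or $\max(0, m_1-m_2)$) of the multiplicities of the subqueries, which can be encoded as a finite disjunction over all possible values of the component multiplicities, each value being bounded by the inductively given $N_{q_i,\idb}$. For $\App_f(q)$ with $f = P/Q \in \RAT$, the condition $y_{a+1} = f(y_1,\dots,y_a)$ is expressed by $y_{a+1}\cdot Q(\by) = P(\by) \wedge Q(\by) \ne 0$, which is in $\fore$; outside the zero set of $Q$, evaluation is unambiguous (and we may stipulate multiplicity $0$ otherwise, or handle it with an auxiliary case split, since the dimensionality argument later only requires measurability).

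The main obstacle, as expected, is the aggregation operator $\sum_{\att{i_1},\dots,\att{i_k}}^{\att j}(q)$. The multiplicity of an output tuple $(\by,y_{k+1})$ is $0$ or $1$, but the witness requires summing the $\att j$-values of all input tuples, with their multiplicities, whose grouping attributes equal $\by$. To capture this in $\fore$, we invoke the inductive hypothesis on the subquery $q$, which tells us that $|q(v_\br(\idb))|$ is bounded by $N_{q,\idb}$. We can then write a big disjunction over all compositions of this bound into a list of distinct-tuple/multiplicity pairs; for each such composition, we existentially quantify $s \le N_{q,\idb}$ tuple-variables $\bu_1,\dots,\bu_s$, assert pairwise distinctness, assert that each $\bu_\ell$ has the prescribed multiplicity using the inductive formulas $\phi^q_{m_\ell}$, assert that these tuples exhaust the bag (i.e.\ $\bigwedge_{m>0} \forall \bz\, (\phi^q_m(\bx, \bz) \to \bigvee_\ell \bz = \bu_\ell)$), and finally assert that $y_{k+1}$ equals the sum $\sum_{\ell : \bu_\ell \text{ agrees with } \by \text{ on } \att{i_1},\dots,\att{i_k}} m_\ell \cdot (\bu_\ell)_j$, which is an explicit polynomial expression since the $m_\ell$ are fixed integer constants. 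This yields the required $\phi^{\sum\cdots}_m$, completing the induction; the bound $N_{q,\idb}$ for the aggregation output is at most the number of groupings, which is bounded by $N_{q,\idb}$ itself.
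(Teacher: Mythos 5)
Your plan is correct and follows essentially the same route as the paper's own proof: a structural induction that produces $\fore$-formulas defining tuple multiplicities in $q(v_\br(\idb))$, a uniform finiteness bound (the paper's Lemma~\ref{lm:answeradombound} on the active domain of answers, playing the role of your $N_{q,\idb}$) that makes projection and aggregation expressible via bounded witness enumeration plus an exhaustiveness clause, an $\fore$-encoding of consistency with the interval tuple $\ba$, and a final formula asserting the existence of pairwise distinct consistent witnesses whose multiplicities sum to $p$. The only real difference is bookkeeping: the paper carries the multiplicity as a real-valued free variable $k$ in a single formula $\phi_{q,\idb}(\bx,k,\by)$ and performs the multiplicity arithmetic ($k=k_1\cdot k_2$, $k=k_1+k_2$, $k=k_1-k_2$, $\sum_i w_i k_i$) inside the theory of the reals, whereas you index a finite family of formulas by concrete integer multiplicities and enumerate cases — which works since multiplicities are uniformly bounded, but note that in your aggregation (and projection) step the sum over ``the witnesses agreeing with $\by$ on the grouping attributes'' is not a syntactically fixed index set, so you need one further finite disjunction over the subsets of witnesses forming the group (the paper avoids this by quantifying witnesses that share the grouping prefix $\by$ by construction).
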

Claim~\ref{cl:meas-rn} is then a straightforward consequence of
Claim~\ref{cl:fo-definability} and the fact that $O_>$ is equivalent
to the union of the countable family $\{O_p | p>k \in \bbN\}$, and
$O_<$ is equivalent to the union of the finite family
$\{O_p | p<k \in \bbN\}$
To prove Claim~\ref{cl:fo-definability}, we first prove the
following. Recall that the arity of $q$ is $m$ and
$|\nulls(\idb)| = |\{\bot_1, \ldots, \bot_n\}| = n$.

\begin{claim}
  \label{cl:formula-constants}
  There exists a formula in $FO(+, \cdot)$ such that
  $\phi(\br, k, \bn)$ is true if and only if
  $\#(\bn, q(v_\br(\idb))) = k$, for each $\br \in \bbR^n$,
  $k \in \bbR$, and $\bn \in \bbR^m$.
\end{claim}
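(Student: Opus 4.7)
The plan is to prove Claim~\ref{cl:formula-constants} by structural induction on the query $q \in \sqlra$, compositionally constructing $\phi_q(\br, k, \bn)$ from the formulas for subqueries. To make the cases involving projection and aggregation tractable, I will strengthen the inductive hypothesis by also constructing two auxiliary objects: (i) a finite \emph{candidate set} $T_q = \{\bt_1(\bx), \ldots, \bt_L(\bx)\}$ of tuples whose entries are rational functions of the null-variables $\bx = (x_1, \ldots, x_n)$, depending only on $q$ and $\idb$, such that every tuple with non-zero multiplicity in $q(v_\br(\idb))$ equals $\bt_i(\br)$ for some $i \in [L]$; and (ii) a uniform bound $M_q \in \bbN$ on the multiplicity of any tuple in $q(v_\br(\idb))$, again depending only on $q$ and $\idb$. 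These invariants let us replace unbounded quantification over output tuples by disjunctions of size $L$ and quantification over possible multiplicities by disjunctions over $\{0, 1, \ldots, M_q\}$, so that $\phi_q$ is built entirely from polynomial identities and order relations in $\br$ and hence lies in $\fore$.

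The base case $q = R$ is direct: $R^\idb = \bag{\brs_1, \ldots, \brs_L}$ is finite, and substituting $\bot_j \mapsto x_j$ in each $\brs_i$ yields the candidate tuples $\bt_i$; the formula $\phi_R(\br, k, \bn)$ is a disjunction over subsets $S \subseteq [L]$ of the conjunction asserting $\bt_i(\br) = \bn$ for $i \in S$, $\bt_i(\br) \neq \bn$ for $i \notin S$, and $k = |S|$. The easy inductive steps are selection (check the condition $\theta$ on $\bn$ directly and conjoin with $\phi_q$), product (multiplicativity, $k = k_1 \cdot k_2$), union ($k = k_1 + k_2$), difference ($k = \max(0, k_1 - k_2)$, expressible by a case split on the sign of $k_1 - k_2$), and function application $\App_f$ with $f = p/p'$ (encode $p(\bn) = p'(\bn) \cdot n_{m+1}$ together with $p'(\bn) \neq 0$, plus a separate clause for the degenerate zero-denominator case). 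In each step, the new candidate set is obtained by applying the operator coordinate-wise to the candidates of the subqueries, and the multiplicity bound updates accordingly (for instance, $M_{q_1 \times q_2} \le M_{q_1}\cdot M_{q_2}$).

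The main obstacle is aggregation $q' = \sum_{\att{i_1}, \ldots, \att{i_k}}^{\att{j}}(q)$, since the value $n_{k+1}$ in an output tuple $(\bn, n_{k+1})$ must be the sum, over all tuples of $q(v_\br(\idb))$ whose grouping coordinates match $\bn$, of the $\att{j}$-coordinate weighted by its multiplicity. The candidate-set invariant is precisely what makes this expressible: we enumerate pairs $(I, (m_i)_{i \in I})$ with $I \subseteq [L]$ and $m_i \in \{1, \ldots, M_q\}$, and we write a finite disjunction asserting the existence of such a pair for which (i) for every $i \in I$, $\phi_q(\br, m_i, \bt_i(\br))$ holds and the grouping coordinates of $\bt_i(\br)$ equal $\bn$; (ii) for every $i \notin I$, either the multiplicity of $\bt_i(\br)$ in $q(v_\br(\idb))$ is $0$ or its grouping coordinates differ from $\bn$ (expressible via $\phi_q$ and polynomial (in)equalities); and (iii) $n_{k+1} = \sum_{i \in I} m_i \cdot (\bt_i(\br))_j$, together with $k = 1$ for non-empty $I$ (the case $k = 0$ is handled by the negation of the existence statement). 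The resulting finite formula lies in $\fore$, and the new candidate set for $q'$ consists of all tuples formed by a choice of grouping coordinates from $T_q(\br)$ paired with any such admissible weighted sum, which remain rational functions of $\br$. Projection is a strictly simpler instance of the same pattern with the multiplicity weights dropped, which completes the induction and proves the claim.
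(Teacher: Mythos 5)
Your overall strategy is the same as the paper's: build $\phi_{q,\idb}$ by structural induction, with the easy operators handled exactly as you describe, and with a finiteness argument doing the real work for projection and aggregation. The difference is in the mechanism. The paper first proves a lemma bounding $|adom(q(v(\idb)))|$ uniformly in $v$, and then in the $\pi$ and $\sum$ cases existentially quantifies over boundedly many \emph{fresh real tuple variables} $\bw_1,\ldots,\bw_s$ together with real-valued multiplicity variables $k_1,\ldots,k_s$, asserting pairwise distinctness of the $\bw_i$, the recursive formula for each, and a ``no other tuple'' clause. You instead carry through the induction a syntactic candidate set of rational-function tuples $\bt_i(\bx)$ plus an integer multiplicity bound $M_q$, and replace quantification by finite disjunctions over index sets and integer multiplicities. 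That invariant is itself fine (and would even yield an essentially quantifier-free formula), but the way you use it in the aggregation/projection case has a genuine flaw.

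The problem is aliasing of candidates: distinct terms $\bt_i(\bx)\neq\bt_{i'}(\bx)$ can evaluate to the same tuple at a given $\br$, and your clauses (i)--(iii) then fail to characterize the output. Concretely, take $R^{\idb}=\bag{(1,\bot_1),(1,\bot_2)}$ and $q'=\sum_{\$1}^{\$2}(R)$, and a valuation with $v(\bot_1)=v(\bot_2)=5$, so that $q'(v(\idb))=\bag{(1,10)}$. The candidates for $R$ are $(1,x_1)$ and $(1,x_2)$. Choosing $I=\{1,2\}$ forces $m_1=m_2=2$, since $\phi_R(\br,m_i,(1,5))$ pins $m_i$ to the multiplicity of the \emph{value} $(1,5)$, and clause (iii) then asserts the aggregate is $2\cdot 5+2\cdot 5=20$; choosing $I=\{1\}$ with $m_1=2$ violates clause (ii) for $i=2$, whose value has positive multiplicity and matching group key. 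So no disjunct certifies the true output $(1,10)$, while a disjunct wrongly certifies $(1,20)$: both directions of the ``if and only if'' break. The same issue arises in your projection case. It is fixable -- e.g., require the selected candidates to be pairwise distinct \emph{as values} at $\br$ and weaken clause (ii) so that an unselected index is also allowed when $\bt_i(\br)$ coincides with some selected $\bt_{i'}(\br)$ -- or avoided altogether by quantifying over fresh value tuples with explicit pairwise-distinctness constraints, which is exactly what the paper's $\phi^\pi$ construction does. As written, however, the key inductive step does not go through.
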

  
To prove Claim~\ref{cl:formula-constants}, we will define an encoding
of $q$ and $\idb$ as a formula $\phi_{q,\idb}(\bx, k, \by) \in \fore$.
Before presenting our translation, however, we need to prove a
preliminary result. Given a query $q \in \sqlra$, the size $|q|$ of
$q$ is equal to the number of operators occurring in the expression
that defines $q$. Given a bag $B$ of tuples over a domain $\Delta$,
the active domain $adom(B)$ of $B$ is the set of elements of $\Delta$
that occur in the tuples of $B$ at least once, i.e.
$adom(B) = \{c~\mid~ c \in \bc \text{ and } \#(\bc, B) > 0\}$.

\begin{lemma}
\label{lm:answeradombound}
For every $q \in \sqlra$, there exists a finite $b \in \bbN$ such that
$|adom(q(v\idb))| \le b$, for each valuation $v$ for $\nulls(\idb)$.
\end{lemma}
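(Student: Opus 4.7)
The plan is to prove this by induction on the structure of $q$, establishing simultaneously the following stronger statement: there exist finite constants $b_q, N_q \in \bbN$, depending only on $q$ and $\idb$ (not on $v$), such that for every valuation $v$ for $\nulls(\idb)$,
\[
|adom(q(v\idb))| \le b_q \qquad \text{and} \qquad |q(v\idb)| \le N_q,
\]
where $|q(v\idb)|$ denotes the total number of tuples in the bag (counted with multiplicity). The strengthening to include $N_q$ is essential because the inductive cases for $\App_f$ and $\sum$ introduce brand-new values, and we must bound how many such values can possibly appear.

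For the base case $q = R$, both $|R(v\idb)| = |\mathbf{R}^\idb|$ and $|adom(R(v\idb))| \le \arity(R) \cdot |\mathbf{R}^\idb|$ are already independent of $v$. For the purely relational operators, the bounds propagate straightforwardly: projection, selection, and difference do not increase either quantity, so we can take $b_q \le b_{q'}$ and $N_q \le N_{q'}$ (adjusted by a constant for difference); union takes sums $b_{q_1} + b_{q_2}$ and $N_{q_1} + N_{q_2}$; and product takes $b_{q_1} + b_{q_2}$ for the active domain and $N_{q_1} \cdot N_{q_2}$ for the tuple count.

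The crux is the two value-generating operators. For $\App_f(q')$, each output tuple is obtained from a tuple of $q'(v\idb)$ by appending the value $f$ applied to it, so $|\App_f(q')(v\idb)| = |q'(v\idb)| \le N_{q'}$, and at most $N_{q'}$ new values are added to the active domain, giving $b_{\App_f(q')} \le b_{q'} + N_{q'}$. For $\sum_{\att{i_1},\ldots,\att{i_k}}^{\att j}(q')$, the number of output tuples is the number of distinct group-by keys, which is bounded by $|q'(v\idb)| \le N_{q'}$; each output contributes at most one new value (the aggregated sum), so again $b \le b_{q'} + N_{q'}$ and $N \le N_{q'}$. The key point is that even though the specific sums or function values depend on $v$, their \emph{number} is bounded by the tuple count of the subquery, which is uniformly bounded by induction.

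The main (and in fact the only) obstacle is noticing that one must induct on the pair $(b_q, N_q)$ rather than on $b_q$ alone, since the active-domain bound for $\App_f$ and $\sum$ is controlled by the tuple count of the sub-result rather than by its active domain. Once this is set up, each inductive step is a routine inspection of the semantics given in Figure~\ref{fig:seme}, and the lemma follows by taking $b$ to be the value $b_q$ produced for the top-level query.
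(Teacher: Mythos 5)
Your proof is correct, and it follows the same basic skeleton as the paper's (structural induction, with the observation that only $\App_f$ and $\sum$ create fresh values, at most one per tuple of the sub-result), but the bookkeeping is genuinely different. The paper keeps a \emph{single} inductive invariant: it proves, for every complete database $D$, the quantitative bound $|adom(q(D))| \le u \cdot |q|^{v} \cdot |adom(D)|^{w \cdot |q|}$ for suitable constants $u,v,w$, and then specializes to $D = v(\idb)$ using $|adom(v(\idb))| \le |adom(\idb)|$. In the $\App_f$ and $\sum$ cases it does not track bag cardinality at all; instead it bounds the number of freshly generated values by the number of \emph{distinct} tuples of the sub-result, which is at most $|adom(q_1(D))|^{ar(q_1)}$, i.e., recoverable from the active-domain bound and the arity. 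So your remark that one ``must'' strengthen the induction to the pair $(b_q, N_q)$ is overstated — the paper shows the single invariant suffices — but your strengthening is a perfectly valid alternative: tracking the bag cardinality $N_q$ makes each inductive step elementary arithmetic (no arity exponents), yields tighter constants, and avoids the slightly delicate algebra the paper needs to keep its closed-form bound stable under the binary operators. What the paper's formulation buys in exchange is a uniform statement over \emph{all} complete databases, parameterized only by $|adom(D)|$ and $|q|$, rather than a bound tied to the fixed $\idb$. One cosmetic point in your write-up: for $\sum$ with an empty grouping list the semantics outputs the tuple $(0)$ on empty input, so the correct bounds are $N \le \max(N_{q'},1)$ and $b \le b_{q'} + N_{q'} + 1$; this changes nothing substantive.
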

\begin{proof}
  To prove the claim, we prove something stronger. Let $D$ be a
  complete databases, then
  $|adom(q(D))| \le u \cdot |q|^v \cdot |adom(D)|^{w \cdot |q|}$, for
  some finite x$u,v,w \ge 0$.  The claim follows from the fact that
  $|adom(v \idb) | = |adom(\idb)|$, for each valuation $v$ for
  $\nulls(\idb)$. We prove the claim by induction over the structure
  of $q$.

  $q = R$. $adom(q(D)) = |adom(R^D)| \le |adom(D)|$.

  $q = \sigma_\theta(q_1)$. The selection operator does not add extra
  elements in the active domain of answers. Therefore,
  $|adom(q(D))| = |adom(q_1(D))|$. By I.H., we have
  $|adom(q_1(D))| \le u_1 \cdot |q_1| ^{v_1} \cdot |adom(D)|^{w_1
    \cdot |q_1|} = u_1 \cdot (|q|-1)^{v_1} \cdot |adom(D)|^{w_1 \cdot
    (|q|-1)}$, for some finite $u_1,v_1,w_1 > 0$.

  $q = \pi_{\$i_1, \ldots, \$i_k }(q_1)$. Identical to the case
  $q = \sigma_\theta(q_1)$.

  $q = q_1 \cap q_1$. The intersection operator does not add elements
  to the active domain of the answers. Therefore,
  $|adom(q(D))| \le |adom(q_1(D))| + |adom(q_2(D))|$. By I.H., for
  some finite $u_1,v_1,w_1,u_2,v_2,w_2 > 0$, we have
  \begin{equation*}
    \begin{array}{l}
      |adom(q_1(D))| + |adom(q_2(D))|  \le
      (u_1 \cdot |q_1|^{v_1} \cdot |adom(D)|^{w_1
      \cdot |q_1|}) +
      (u_2 \cdot |q_2|^{v_2}  \cdot |adom(D)|^{w_2 \cdot |q_2|}) \le \\
      (u_1 \cdot |q_1|^{v_1} \cdot |adom(D)|^{(w_1+w_2) \cdot (|q_1|+|q_2|)}) + (u_2 \cdot |q_2|^{v_2}
      \cdot |adom(D)|^{(w_1+w_2) \cdot (|q_1|+|q_2|)}) = \\

      ( u_1 \cdot |q_1|^{v_1}+ u_2 \cdot |q_2| ^{v_2})
      \cdot |adom(D)|^{(w_1+w_2) \cdot (|q_1|+|q_2|)} \le 
      (u_1 \cdot u_2) (|q_1| + |q_2|)^{v_1 \cdot  v_2} \cdot |adom(D)|^{(w_1+w_2) \cdot (|q_1|+|q_2|)} =\\ 
      (u_1 \cdot u_2) ( |q|-1)^{v_1 \cdot  v_2}
      \cdot |adom(D)|^{(w_1+w_2) \cdot (|q|-1)} 
    \end{array}
\end{equation*}

The cases $q = q_1 \cup q_2$, $q = q_1 \times q_2$, and
$q = q_1 \setminus q_2$ are identical to the case $q = q_1 \cap q_2$.

$\App_{F}(q_1)$.  By definition, $\App_{F}$ may generate at most one
fresh new value $F(\bc)$, for each distinct tuple $\bc$ in
$q_1(D)$. Suppose that the arity of $q_1$ is $ar(q_1)$. Thus,
$|adom(q(D))| \le |adom(q_1(D))| + |adom(q_1(D))|^{ar(q_1)} \le 2
\cdot |adom(q_1(D))|^{ar(q_1)+1}$. By I.H., for some finite
$u_1,v_1,w_1 > 0$, we have
$|adom(q_1(D))| \le (u_1 \cdot |q_1|^{v_1} \cdot |adom(D)|^{w_1 \cdot
  |q_1|})$, and therefore, for some finite $u,v,w \ge 0$, we can
conclude
\begin{equation*}
  \begin{array}{l}
    |adom(q(D))| \le 
    (2 \cdot u_1 \cdot |q_1|^{v_1} \cdot |adom(D)|^{w_1 \cdot |q_1|})^{(ar(q_1)+1)} =
    u \cdot (|q|-1)^v \cdot |adom(D)|^{w \cdot  (|q|-1) }
  \end{array}
\end{equation*}

$\sum^\$i_{\$i_1, \ldots, \$i_j}(q_1)$.  By definition,
$\sum^\$i_{\$i_1, \ldots, \$i_j}$ may generate at most one value
$\sum(\{a_1, \ldots, a_m\})$, for each distinct tuple in the result of
$\pi_{\$i_1, \ldots, \$i_j}(D)$. Therefore, this case follows from the
argument used for the case $\App_F$.
\end{proof}

With Lemma~\ref{lm:answeradombound} in place, we proceed to define the
formula $\phi_{q,\idb}(\bx, k, \by)$ associated to $\idb$ and $q$. To
simplify the presentation, we use the following assumptions. In every
occurrence of $\pi_{A}(q')$ in $q$, the tuple $A$ contains the first
$l$ attributes of $q'$, for some $l \ge 0$. Similarly, for each
$\sum^\$i_{A}(q')$ in $q$, we assume that the tuple $A$ contains the
first $l$ attributes of $q'$, for some $l>0$. This assumptions are
w.l.o.g. since the general case can be recovered simply by considering
the positions of each attribute in the projection.

In what follows, given two n-ary tuples of variables and constants
$\bx, \by$, we use $(\bx = \by)$ to denote the formula
$\bigwedge_{i=1}^n (a_i = c_i)$. Moreover, given terms $a$, $b$, and
$c$, we use $(a \le b)$ for $(a < b) \vee (a = b)$, and $(a \ge b)$
for $(a > b) \vee (a = b)$.% , and $(\frac{a}{b}) = c$ for
% $(a = b \cdot c)$.
Finally, to every $\bot_i \in \nulls(\idb)$, we
associate a variable $x_i$ in $\bx$. The formula $\phi_{q,\idb}$ is
defined inductively as follows:

\begin{itemize}
\item $q = R$, for some relational symbol $R$.  Let $R^{\idb}_{vars}$
  be the bag obtained from $R^{\idb}$ by replacing each $\bot_i$ with
  $x_i$, and let $T$ be the set of distinct tuples occurring in
  $R^{\idb}_{vars}$
  \[
    \phi_{q,\idb}(\bx, k, \by) = \bigvee_{C \subseteq T}
    \Big(\bigwedge_{\bc \in C} (\by = \bc) \wedge (\sum_{\bc \in C}
    \#(\bc, R^{\idb}_{vars}) = k) \wedge \bigwedge_{\bc \in T
      \setminus A} \neg (\bc = \by)\Big)
\]
\item $q = \sigma_{\$i \circ \$j}(q_1)$, with $\circ \in \{<, =\}$.  $\phi_{q,\idb}(\bx, k, \by)$ is defined as follows:

\begin{equation*}
    \begin{array}{c@{}c}
      \exists k_1. \phi_{q_1, \idb}(\bx,
      k_1, \by) &\wedge ( ( (k > 0) \wedge (k_1 = k) \wedge  (y_i \circ y_j))) \vee \\
      &( (k = 0) \wedge ((k_1 = 0) \vee \neg(y_i \circ y_j)))
    \end{array}
  \end{equation*}

% \item $q = \varepsilon(q_1)$.  $\phi_{q,\idb}(\bx, k, \by)$ is defined as follows:
%   \begin{equation*}
%     \begin{array}{l}
%       \exists k'. \phi_{q_1, \idb}(\bx, k', \by) \wedge \left( \left((k' > 0) \wedge  (k = 1) \right) \vee 
%       ((k = 0) \wedge (k' = 0))\right)
%     \end{array}
% \end{equation*}

\item $q = q_1 \times q_2$. Suppose the arity of $q_1$ is $ar(q_1)$
  and the arity of $q_2$ is $ar(q_2)$.  $\phi_{q,\idb}(\bx, k, \by)$
  is defined as follows:
  \begin{equation*}
    \begin{array}{l}
      \exists k_1, k_2. (k = k_1 \cdot k_2) \wedge 
      (\phi_{q_1, \idb}(\bx, k_1, y_1,
  \ldots, y_{ar(q_1)}) \wedge \phi_{q_2, \idb}(\bx, k_2, y_{ar(q_1)+1}, \ldots, y_{ar(q_1)+ar(q_2)}) )
    \end{array}
\end{equation*}
 \item $q = q_1 \cup q_2$.  $\phi_{q,\idb}(\bx, k, \by)$ is defined as follows:
   \begin{equation*}
     \begin{array}{l}
       \exists k_1,k_2. (k = k_1+ k_2) \wedge 
       ( \phi_{q_1, \idb}(\bx, k_1, \by) \wedge
       \phi_{q_2, \idb}(\bx, k_2,\by) )
     \end{array}
  \end{equation*}
% \item $q = q_1 \cap q_2$.  $\phi_{q,\idb}(\bx, k, \by)$ is defined as follows:
%   \begin{equation*}
%     \begin{array}{l}
%       \exists k_1,k_2. (( \phi_{q_1, \idb}(\bx, k_1, \by) \wedge
%                               \phi_{q_2, \idb}(\bx, k_2,\by) \big)  \wedge
%                             ((k_1 \le k_2 \wedge k = k_1) \vee (k_2 \le k_1 \wedge k = k_2))

  %   \end{array}
  % \end{equation*}

\item $q = q_1 \setminus q_2$.  $\phi_{q,\idb}(\bx, k, \by)$ is defined as follows: 
  \begin{equation*}
    \begin{array}{ll}
      \exists k_1,k_2. (k = k_1- k_2)\wedge ( \phi_{q_1, \idb}(\bx, k_1, \by) \wedge
      \phi_{q_2, \idb}(\bx, k_2,\by)) 
    \end{array}
\end{equation*}
%\item
%\liat{I added this item:}
% $ q = \pi_{A}(q_1) $ where $\ar(q_1) = n$ and $A=\{i_1,\ldots,i_m\}$. We define $\phi_{q,\idb}(\bx, k, (y_{i_1},\ldots, y_{i_m}))$
% \[
% \exists k_1 \vee_{\text{ways to complete }(y_{i_1},\ldots, y_{i_m})}
% \left( 
% \phi_{q_1,\idb}(\bx, k_1, (y_1,\ldots,y_n)) \wedge
 %%compatible
 %%the rest is not
 %\right)
 % \]

\item $\App_f(q_1)$. Let $\phi_f(\by, z)$ be the formula $z =
  f(\by)$. For forumulae in $RAT$, such formula always exists.  We
  $\phi_{q,\idb}(\bx, k, \by, y_f)$ as follows:
  \begin{equation*}
    \begin{array}{ll}
      \exists z,k_1 . \phi_{q_1,\idb}(\bx, k_1, \by)  \wedge  (\phi_f(\by, z)) \wedge ( ( (k > 0) \wedge (k = k_1) \wedge ( z = y_f)) \vee
      ((k = 0) \wedge ((k_1 = 0) \vee \neg (y_f = z) ) ))
    \end{array}
  \end{equation*} 
  
\item $ q = \pi_{\$1, \ldots, \$l}(q_1) $. Assume
  $|adom(q_1(v (\idb)))| \le b$, for each $v$ for $\nulls(\idb)$
  (Lemma~\ref{lm:answeradombound}). Let $s = b^{ar(q_1) }$,
  $\bw_i = (w_1^i, \ldots, w_{ar(q_1)-l}^i)$, for each
  $i = 1, \ldots, s$, and $\bz = (z_1, \ldots, z_{ar(q_1)-l})$. We
  define
  $\phi^\pi_{(q_1, l, m)}(\bx, \by, k_\pi, k_1, \bw_1, \ldots, k_s,
  \bw_s) $ as follows:
  \begin{equation*}
    \begin{array}{l}
      \bigwedge_{i \neq j \in [1, s]} \neg (\bw_i = \bw_j) \wedge 
      \bigwedge_{i=1}^{s}  \phi_{q_1, \idb}(\bx, k_i,  \by, \bw_i) \wedge   (\sum_{i = 1}^{s}(k_i) = k_\pi) \wedge
      \neg (\exists k_z, \bz. (k_z > 0) \wedge \bigwedge_{i}^{s} \neg (\bz = \bw_i) \wedge \phi_{q_1, \idb}(\bx, k_z, \by,  \bz)) 

    \end{array}
  \end{equation*}
  The formula $\phi_{q,\idb}(\bx, k, \by)$ is defined as follows:
  $ \exists k_1, \bw_1,\ldots, k_s, \bw_s.  \phi_{(q_1, A,m)}(\bx,
  \by, k_\pi, k_1, \bw_1, \ldots, k_s, \bw_s) \wedge (k_\pi =
  k)$
  
\item $ q = \sum_{_{\$1, \ldots, \$l}}^{\$j}(q_1) $. Assume
  $|adom(q_1(v \idb))| \le b$, for each $v$ for $\idb$
  (Lemma~\ref{lm:answeradombound}), Let $s = b^{ar(q_1)}$, and
  $\bw_i = (w_1^i, \ldots, w_{ar(q_1)}^i)$, for each
$i = 1, \ldots, s$. The formula $\phi_{q,\idb}(\bx, k, \by, y')$ is
defined as follows:

  \begin{equation*}
    \begin{array}{l}
      \exists y_\Sigma, k_1, \bw_1,\ldots, k_s, \bw_s. 
      \phi^\pi_{(q_1, l)}(\bx, \by,  k_\pi, k_1,
      \bw_1, \ldots, k_s, \bw_s. ) \wedge (y_\Sigma = \sum_{i=1}^s w^{i}_{j-l} \cdot k_i) \wedge \\
      (((k = 1) \wedge (k_\pi > 0) \wedge (y' = y_\Sigma)) \vee 
      ((k = 0) \wedge ((k_\pi = 0) \vee (\neg (y' = y_\Sigma) ))))
    \end{array}
 \end{equation*}
\end{itemize}

We are ready to prove Claim~\ref{cl:formula-constants}.

\begin{proof}[Proof of Claim~\ref{cl:formula-constants}]
  
  We recall the reader that we are under the assumption that every
  occurrence of $\pi$ and $\sum$ project the first $l$ attributes of
  the sub-query, for some $l>0$ (not necesserily the same). As
  discussed above, this can be done w.l.o.g.
  We prove the claim by induction on the structure of $q$. The two
  sides of the claim will be discussed separately.

  (\underline{Case $R = q$}).  \underline{Only if part}. Suppose that
  $\phi_{q,\idb}(\br, k, \bn)$ holds for
  $\br \in \bbR^n, k \in \bbR, \bn \in \bbR^m$ and let $v_\br$ be the
  valuation defined by $\br$. Then, there exists a set of tuples
  $A = \{\bc_1, \ldots, \bc_k\}$ occurring in $ R^{\idb}$ such that
  $v_\br(\bc_i) = \bn$, for each $i = 1, \ldots, n$, and
  $\sum_{i=1}^n \#(\bc_i, v_\ba(R^\idb)) = k$. Moreover, there exists
  no tuple $\bc_z$ in $ R^{\idb}$ such that $\bc_z \not\in A$ and
  $v_\ba \bc_z = \bc$. We can conclude that
  $\#(\bc, q(v_\ba(\idb))) = k$.
  \underline{If part}. Suppose that $\#(\bn, q (v_\br(R^\idb))) = k$,
  for some $\br \in \bbR^n, k \in \bbR,\bn \in \bbR^m$. Then, there
  exists a subset $A = \{\bc_1, \ldots, \bc_n\}$ of the tuples
  occurring in $R^{\idb}$ such that $v_\br(\bc_i) = \bn$, for each
  $i = 1, \ldots, n$, and $\sum_{i=1}^n \#(\bc_i, v_\br(R^\idb)) =
  k$. From this observation, we can conclude that
  $\phi_{q,D}(\ba, k, \bc)$ holds for $\br, k, \bn$.

  (\underline{Case $R = \sigma_{\$i \circ
      \$j}(q_1)$}). \underline{Only if part}.  Suppose that
  $\phi_{q,\idb}(\br, k, \bn)$ holds for some
  $\br \in \bbR^n, k \in \bbR,\bn \in \bbR^m$. Then
  $\exists k_1. \phi_{q,\idb}(\br, k_1, \bn)$ holds, and, by I.H.,
  there exists $k_1 \ge 0$ such that $\#(\bn, q_1(v_\br\idb)) =
  k_1$. If $k > 0$, then $k = k_1$ and $\bn$ satisfies the selection
  condition $(\$i \circ \$j)$, in turn proving that
  $\#(\bn, q( v_\br \idb)) = k$. If $k = 0$, then either $k_1 = 0$ or
  $\bn$ does not satisfy the selection condition $(\$i \circ \$j)$, in
  turn proving that $\#(\bn, q, v_\br) = 0$.
  \underline{If part}. Suppose that $\#(\bn, q, v_\br(\idb)) = k$, for
  some $\br \in \bbR^n, k \in \bbR,\bn \in \bbR^m$. If $k > 0$ then
  $\#(\bn, q_1, v_\br(\idb)) = k$, and $\bn$ satisfies the selection
  condition $(\$i \circ \$j)$. If this is the case, there exists
  $k_1 \ge 0$ such that $\#(\bn, q_1 (v_\br \idb)) = k_1$, in turn, by
  I.H., proving that $\phi_{q,\idb}(\br, k, \bn)$ holds. Similarly, if
  $k = 0$ then either $\#(\bn, q_1, v_\br(\idb)) = 0$ or
  $\#(\bn, q_1, v_\br(\idb)) > 0$ and $\bn$ does not satisfy the
  selection condition $(\$i \circ \$j)$. Then, by I.H., there exists
  $k_1$ such that $\phi_{q_1,\idb}(\br, k_1, \bn)$ and either
  $k_1 = 0$ or $\bn$ does not satisfy $(i \circ j)$, in turn proving
  that $\phi_{q,\idb}(\br, k, \bn)$ holds.

  % (\underline{Case $R = \varepsilon(q_1)$}).  \underline{Only if part}.  Suppose
  % that $\phi_{q,D}(\ba, k, \bc)$ holds for some
  % $\bc \in \bbR^n, k \in \bbR, \ba \in \bbR^m$. By I.H., there exists
  % $k_1 \ge 0$ such that $\#(\bc, q_1, v_\ba(\idb)) = k_1$. If $k = 1$,
  % then $k_1 > 0$, and we can conclude that
  % $\#(\bc, q, v_\ba(\idb)) = 1$ from the definition of duplicate
  % elimination. If $k = 0$, then $k_1 = 0$ as well, proving
  % $\#(\bc, q, v_\ba(\idb)) = 0$.
  % %
  % \underline{If part}. Suppose that $\#(\bc, q, v_\ba(\idb)) = k$, for some
  % $\bc \in \bbR^n, k \in \bbR, \ba \in \bbR^m$. From the definition of
  % duplicate elimination, $k \in \{0, 1\}$. If $k = 1$, then
  % $\#(\bc, q, v_\ba(\idb)) > 0$, and, by I.H., we can conclude that
  % $\phi_{q,D}(\ba, k, \bc)$ holds. If $k = 0$, then
  % $\#(\bc, q, v_\ba(\idb)) = 0$, and, once again by I.H., we can
  % conclude that $\phi_{q,D}(\ba, k, \bc)$ holds.

  (\underline{Case $R = q_1 \times q_2$}).  \underline{Only if part}.  Suppose
  that $\phi_{q,\idb}(\br, k, \bn)$ holds for some
  $\br \in \bbR^n, k \in \bbR,\bn \in \bbR^m$. By I.H., there exists
  $k_1,k_2 \ge 0$ and tuples $\bn_1, \bn_2$ such that
  $\#(\bn_1, q_1, v_\br(\idb)) = k_1$ and
  $\#(\bn_2, q_1, v_\br(\idb)) = k_1$, and $(\bn_1, \bn_2) =
  \bn$. Moreover, $k_1 \cdot k_2 = k$, proving
  $\#(\bn, q, v_\br(\idb)) = k$.
  \underline{If part}. Suppose that $\#(\bn, q, v_\br(\idb)) = k$, for some
  $\br \in \bbR^n, k \in \bbR, \bn\in \bbR^m$. Then, there exist
  tuples $\bn_1, \bn_2$ such that $\#(\bn_1, q_1, v_\br(\idb)) = k_1$
  and $\#(\bn_2, q_1, v_\br(\idb)) = k_1$, $(\bn_1, \bn_2) = \bn$, and
  $k_1 \cdot k_2 = k$. By I.H then, we can conclude that
  $\phi_{q,\idb}(\br, k, \bn)$ holds.

  (\underline{Case $R = q_1 \cup q_2$}).  \underline{Only if part}.  Suppose that
  $\phi_{q,\idb}(\br, k, \bn)$ holds for some
  $\br \in \bbR^n, k \in \bbR, \bn \in \bbR^m$. By I.H., there exists
  $k_1,k_2 \ge 0$ such that $\#(\bn, q_1, v_\br(\idb)) = k_1$,
  $\#(\bn, q_1, v_\br(\idb)) = k_1$, and $k_1 + k_2 = k$. We can
  conclude that  $\#(\bn, q, v_\br(\idb)) = k$.
  \underline{If part}. Suppose that $\#(\bn, q, v_\br(\idb)) = k$, for some
  $\br \in \bbR^n, k \in \bbR, \bn \in \bbR^m$. Then,
  $\#(\bn, q_1, v_\br(\idb)) = k_1$ and
  $\#(\bn, q_2, v_\br(\idb)) = k_2$, with $k_1 + k_2 = k$. By
  I.H then, we can conclude that $\phi_{q,\idb}(\br, k, \bn)$ holds.

  (\underline{Case $R = q_1 \setminus q_2$}).  \underline{Only if part}.  Suppose
  that $\phi_{q,\idb}(\br, k, \bn)$ holds for some
  $\br \in \bbR^n, k \in \bbR, \bn \in \bbR^m$. By I.H., there exists
  $k_1,k_2 \ge 0$ such that $\#(\bn, q_1, v_\br(\idb)) = k_1$,
  $\#(\bn, q_1, v_\br(\idb)) = k_1$, and $k = k_1 - k_2$. We can
  conclude that $\#(\bn, q, v_\br(\idb)) = k$.
  \underline{If part}. Suppose that $\#(\bn, q, v_\br(\idb)) = k$, for some
  $\br \in \bbR^n, k \in \bbR, \bn \in \bbR^m$. Then,
  $\#(\bn, q_1, v_\br(\idb)) = k_1$ and
  $\#(\bn, q_2, v_\br(\idb)) = k_2$, with $k =k_1 - k_2$. By
  I.H then, we can conclude that $\phi_{q,\idb}(\br, k, \bn)$ holds.
  
  (\underline{Case $q = \App_f(q_1)$}).  \underline{Only if part}.
  Suppose that $\phi_{q,\idb}(\br, k, \bn, r')$ holds for some
  $\br \in \bbR^n, k,r \in \bbR, \bn \in \bbR^m$. By I.H., there
  exists $k_1$ such that $\#(\bn, q_1, v_\br(\idb)) = k_1$. If
  $k > 0$, then $k = k_1$ and $r' = f(\bn)$. If $k = 0$, then $k = 0$
  or $r' \neq f(\bn)$. Thus, we can conclude that
  $\#(\bn, q, v_\br(\idb)) = k$.
  \underline{If part}. Suppose that
  $\#((\bn, r'), q, v_\br(\idb)) = k$, for some
  $\br \in \bbR^n, r,k \in \bbR, \bn \in \bbR^m$. If $k > 0$, then,
  $\#(\bn, q_1, v_\br(\idb)) = k$ and $r' = f(\bn)$. If $k = 0$, then
  either $\#(\bn, q_1, v_\br(\idb)) = 0$ $r' \neq f(\bn)$.  By I.H
  then, we can conclude that $\phi_{q,\idb}(\br, k, \bn, r')$ holds.
 
  (\underline{Case $R = \pi_{\$1, \ldots, \$l} q_1 $}).  To prove this
  part of the claim, we first need to prove a preliminary
  result. Suppose that $q_1$ is $m_1$-ary. Let
  $\{\bc_1, \ldots, \bc_p\} $ be a set of distinct tuples in
  $\bbR^{m_1 - l }$, and let $k_\pi, k_1, \ldots, k_p \ge 0$,
  $\br \in \bbR^n$, and $\bn \in \bbR^l$.
  \begin{claim}
    \label{cl:group-formula}
    The formula
    $\phi^\pi_{(q_1, l)}(\br, \bn, k_\pi, k_1, \bc_1, \ldots, k_p,
    \bc_p)$ holds if and only if each of the  following holds:
    \begin{itemize}
    \item $\#( (\bn, \bc_i), q_1(v_\br(\idb))) = k_i$, for each
      $i = 1, \ldots, p$;
    \item $\sum_{i=1}^p k_i = k_\pi$; and
    \item There exists no $\bc_z$ s.t. $\bc_z \neq \bc_i$, for
      each $i = 1, \ldots, p$, and
      $\#( (\bn, \bc_z), q_1(v_\br(\idb)) ) > 0$.
    \end{itemize}
  \end{claim}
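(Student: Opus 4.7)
The plan is to prove the biconditional by a conjunct-by-conjunct analysis of $\phi^\pi_{(q_1, l)}$, appealing to the inductive hypothesis of Claim~\ref{cl:formula-constants} for the sub-query $q_1$. Recall that $\phi^\pi_{(q_1, l)}(\br, \bn, k_\pi, k_1, \bc_1, \ldots, k_p, \bc_p)$ is a conjunction of four parts: (i) pairwise distinctness of the $\bc_i$'s, (ii) $\bigwedge_{i=1}^p \phi_{q_1, \idb}(\br, k_i, \bn, \bc_i)$, (iii) $\sum_{i=1}^p k_i = k_\pi$, and (iv) a negated existential asserting no further tuple $\bz \neq \bc_i$ satisfies $\phi_{q_1, \idb}(\br, k_z, \bn, \bz)$ with $k_z > 0$. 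Since $q_1$ is a strict sub-expression of $q$, the inductive hypothesis applies: $\phi_{q_1, \idb}(\br, k, \bn')$ is true iff $\#(\bn', q_1(v_\br(\idb))) = k$ for every $k \in \bbR$ and tuple $\bn'$ of the appropriate arity.

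For the forward direction, I would assume $\phi^\pi_{(q_1, l)}$ holds on distinct inputs $\bc_1, \ldots, \bc_p$. Conjunct (ii) and IH immediately yield $\#((\bn, \bc_i), q_1(v_\br(\idb))) = k_i$ for every $i$, giving the first bullet. Conjunct (iii) is exactly the second bullet. Conjunct (iv), combined with IH, says precisely that no tuple $\bc_z$ different from all $\bc_i$ has $\#((\bn, \bc_z), q_1(v_\br(\idb))) > 0$, which is the third bullet (note that IH interpreted at $k=0$ is also needed for tuples outside the active domain, which harmlessly satisfy $\phi_{q_1, \idb}(\br, 0, \bn')$).

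For the backward direction, I would start from the three bullet conditions and verify each conjunct of $\phi^\pi_{(q_1, l)}$ in turn: distinctness of the $\bc_i$'s is part of the hypothesis of the claim; conjunct (ii) follows from the first bullet via IH; conjunct (iii) is the second bullet verbatim; and conjunct (iv) is obtained by contradiction, for if some $\bz \notin \{\bc_1, \ldots, \bc_p\}$ witnessed $\phi_{q_1, \idb}(\br, k_z, \bn, \bz)$ with $k_z > 0$, IH would give $\#((\bn, \bz), q_1(v_\br(\idb))) = k_z > 0$, contradicting the third bullet.

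The only mildly delicate point is the interplay between the claim's parameter $p$ and the formula's fixed parameter $s = b^{\arity(q_1)}$ from Lemma~\ref{lm:answeradombound}. Since $b$ bounds $|\adom(q_1(v(\idb)))|$ uniformly over valuations, $s$ is an upper bound on the number of distinct tuples that can appear with positive multiplicity in $q_1(v_\br(\idb))$; the remaining slots can always be filled with dummy distinct tuples carrying $k_i = 0$ without affecting any conjunct, and this is where the argument would quietly use the infinitude of $\bbR^{\arity(q_1)-l}$. I expect this bookkeeping to be the only nontrivial obstacle; the rest is a direct translation through IH.
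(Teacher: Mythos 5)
Your proof is correct and follows essentially the same route as the paper's: both directions are established by reading off the four conjuncts of $\phi^\pi_{(q_1,l)}$ and translating each through the inductive hypothesis of Claim~\ref{cl:formula-constants} for $q_1$. Your closing remark about padding unused slots with distinct dummy tuples carrying multiplicity $0$ is exactly how the paper handles the $p$ versus $s$ mismatch, though it does so at the point where the claim is applied rather than inside the claim's proof.
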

  \begin{proof}
    We prove Claim~\ref{cl:group-formula} in two parts. \underline{If
      part}. Let $\{\bc_1, \ldots, \bc_p\}$ be a set of distinct
    tuples in $\bbR^{m_1 - l }$, and suppose that
    $\#((\bn, \bc_i), q_1(v_\br(\idb))) = k_i$, for
    $i = 1, \ldots, p$, $\sum_{i=1}^p k_i = k_\pi$ and there exists
    no $\bc_z$ such that $\bc_z \neq \bc_i$, for each
    $i = 1, \ldots, p$, and
    $\#( (\bn, \bc_z), q_1(v_\br(\idb)) ) > 0$. By definition,
    $\{\bc_1, \ldots, \bc_p\}$ is a set of distinct tuples, and
    therefore they satisfy
    $\bigwedge_{i \neq j \in [1, p]} \neg (\bc_i = \bc_j)$. Moreover, by I.H.,
    $\phi_{q_1, \idb}(\br, k_i, \bn, \bc_i)$ holds for each
    $i = 1, \ldots, p$. To conclude the proof, we observe that, since
    there exists no $\bc_z$ such that
    $\#( (\bn, \bc_z), q_1(v_\br(\idb)) ) > 0 $ and
    $\bc_z \not\in \{\bc_1, \ldots, \bc_p\}$, the formula
    $\neg (\exists k_z, \bz. (k_z > 0) \wedge \bigwedge_{i}^{s} \neg
    (\bz = \bc_i) \wedge \phi_{q_1, \idb}(\bx, k_z, \by, \bz))$ is
    satisfied.
    \underline{Only if part}. Suppose that
    $\phi^\pi_{(q_1, l)}(\br, \bn, k_\pi, k_1, \bc_1, \ldots, k_p,
    \bc_p)$ holds for a set of distinct tuples
    $\{\bc_1, \ldots, \bc_p\}$ in $\bbR^{m_1 - l }$. By I.H.,
    $\#( (\bn, \bc_i), q_1(v_\br(\idb)) ) = k_i $, for each
    $i = 1, \ldots, p$, and therefore
    $\sum_{i=1}^p \#( (\bn, \bc_i), q_1(v_\br(\idb)) ) =
    k_\pi$. Moreover, since
    $\neg (\exists k_z, \bz. (k_z > 0) \wedge \bigwedge_{i}^{s} \neg
    (\bz = \bc_i) \wedge \phi_{q_1, \idb}(\bx, k_z, \by, \bz))$ holds,
    again by I.H., we have that there exists no $\bc_z$ such that
    $\bc_z \neq \bc_i$, for each $i = 1, \ldots, p$, and
    $\#( (\bn, \bc_z), q_1(v_\br(\idb)) ) > 0$.
  \end{proof}

  With Claim~\ref{cl:group-formula} in place, we are now ready to
  prove our result. Once again, we discuss the two sides of the claim
  separately. \underline{Only if part}.  Suppose that
  $\phi_{q,\idb}(\br, k, \bn)$ holds for some
  $\br \in \bbR^n, k \in \bbR, \bn \in \bbR^m$. Then, from
  Claim~\ref{cl:group-formula}, we have that,
  $\sum_{i=1}^p \#( (\bn, \bc_i), q_1(v_\br(\idb)) ) = k_\pi$, for
  some set of tuples $\{\bc_1, \ldots, \bc_p\}$, and there exists no
  $\bc_z$ such that $\bc_z \neq \bc_i$, for each $i = 1, \ldots, p$,
  and $\#( (\bn, \bc_z), q_1(v_\br(\idb)) ) > 0$ and $k_\pi =
  k$. We can conclude that $\#( \bn, q(v_\br(\idb)) ) = k$. 
  \underline{If part}. Suppose that $\#(\bn, q(v_\br(\idb))) = k$, for
  some $\br \in \bbR^n, k \in \bbR, \bn \in \bbR^m$. Then,
  $\sum_{\be \in E} \#( (\bn, \be), q_1(v_\br(\idb)) ) = k$, where $E$
  is the set of all the tuples of arity $ar(q_1)-l$. From
  Lemma~\ref{lm:answeradombound}, we know that there are at most
  $s = b^{ar(q_1)-l}$ distinct tuples $\be$ such that
  $\#( (\bn, \be), q_1(v_\br(\idb)) ) > 0$. Suppose that
  $\{\be_1, \ldots, \be_p\}$, with $p\le s$, are those tuples, then
  $\sum_{\be \in E} \#( (\bn, \be), q_1(v_\br(\idb)) ) = \sum_{i=
    1}^p\#( (\bn, \be_i), q_1(v_\br(\idb)) ) = k$. From
  Claim~\ref{cl:group-formula}, then, the following formula holds:
  $\phi^\pi_{(q_1,l)}(\br, \bn, k, k_1, \be_1, \ldots, k_p, \be_p, 0,
  \bc_1, \ldots, 0, \bc_{s-p})$, where $\bc_1, \ldots, \bc_{s-p}$ are
  pairwise distinct from $\be_1, \ldots, \be_p$. We can conclude that
  $\phi_{q,\idb}(\br, k, \bn)$ holds.

  (\underline{Case $R = \sum^{\$j}_{\$1, \ldots, \$l}q_1$}). 
  \underline{Only if part}.  Suppose that
  $\phi_{q,\idb}(\br, k, \bn, r')$ holds for some
  $\br \in \bbR^n, k \in \bbR, \bn \in \bbR^m$.  Then, from
  Claim~\ref{cl:group-formula}, there exists a set of tuples
  $E = \{\be_1, \ldots, \be_p\}$ such that
  $\sum_{i=1}^p \#( (\bn, \be_i), q_1(v_\br(\idb)) ) = k_\pi$, and
  there exists no $\be_z$ such that $\be_z \not\in E$ and
  $\#( (\bn, \bc_z), q_1(v_\br(\idb)) ) > 0$. Let $B$ be the bag such
  that $\#(\be, B) = \# ((\bn, \be), q_1(v_\br(\idb)))$, for each
  $\be \in E$.  If $k = 1$, then $|B| = k_\pi > 0$ and $r'$ is
  equal to the sum of the $(j-l)$-th component of the tuples in $B$, and
  we can conclude that $\#(\bn, r', q(v_\br(\idb)) ) = k$. If $k = 0$,
  thene either $|B| = k_\pi = 0$ or $r'$ is not equal to the sum of
  the $(j-l)$-th component of the tuples in $B$. Again, we can conclude
  that $\#(\bn, r', q(v_\br(\idb)) ) = k = 0$.
  \underline{If part}. Suppose that
  $\#((\bn,r'), q(v_\br(\idb))) = k$, for some
  $\br \in \bbR^n, k \in \bbR, \bn \in \bbR^m$.  Let $B$ be the bag
  such that, for each $ar(q_1)-l$-ary tuple $\be$,
  $\#(\be, B) = \# ((\bn, \be), q_1(v_\br(\idb)))$. From
  Lemma~\ref{lm:answeradombound}, we know that there are at most
  $s = b^{n-m}$ distinct tuples in $B$. Let
  $E = \{e_1, \ldots, e_p\}$, with $p \le s$, be the set of distinct
  tuples occurring in $B$. From Claim~\ref{cl:group-formula}, we can
  conclude that there exist $k_\pi, k_1, \ldots, k_l \ge 0$ and tuples
  $\bc_1, \ldots, \bc_{s-p}$ in $\bbR^{ar(q_1)-l}$ such that the
  following holds:
  $\phi^\pi_{(q_1,l)}(\br, \bn, k, k_1, \be_1, \ldots, k_p, \be_p, 0,
  \bc_1, \ldots, 0, \bc_{s-p})$. We now observe that, by definition,
  $k \in \{0, 1\}$.  If $k = 0$, then either $|B| = k_\pi = 0$ or $r'$
  is not equal to the sum of the $(j-l)$-th component of the tuples in
  $B$.  We can conclude that $\phi_{q,\idb}(\br, k, \bn,c')$ holds. If
  $k = 1$, then $|B| = k_\pi >0$ and $r'$ is equal to the sum of the
  $(j-l)$-th component of the tuples in $B$.  Once again, we can
  conclude that $\phi_{q,\idb}(\br, k, \bn,c')$ holds.
\end{proof}

Next, we prove that also the condition defined by a generalized answer
tuple can be encoded into a formula in $\fore$. Recall that $\ba$ is a
generalized answer tuple.

\begin{claim}
  \label{cl:ans-tup-fo} There exists a formula
  $\psi_\ba(\bx, \by) \in \fore$ such that $\psi_\ba(\br, \bn)$ holds
  if and only if $\bn$ is consistent with $v_\br(\ba)$, for each
  $\br \in \bbR^n$, and $\bn \in \bbR^m$.
\end{claim}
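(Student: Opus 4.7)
The approach is to build $\psi_\ba$ as a conjunction, one conjunct per component of $\ba$, where each conjunct encodes membership in the corresponding interval. Write $\ba \df (a_1, \ldots, a_m)$, and let each $a_i$ be one of the four interval shapes with endpoints $l_i, u_i \in \RAT[\bot_1, \bot_2, \ldots] \cup \{\pm\infty\}$. For the tuple of variables $\by = (y_1, \ldots, y_m)$, I will produce a formula $\chi_i(\bx, y_i)$ expressing ``$y_i \in v_\bx(a_i)$'' and then set
\[
\psi_\ba(\bx, \by) \df \bigwedge_{i=1}^m \chi_i(\bx, y_i).
\]

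The construction of $\chi_i$ reduces to two ingredients. First, each finite endpoint $e \in \RAT[\bot_1, \bot_2, \ldots]$ is a ratio $p(\bot_1, \ldots, \bot_n)/q(\bot_1, \ldots, \bot_n)$ of polynomials with real coefficients, and such an expression is definable in $\fore$ in the usual way: one introduces an existentially quantified variable $z_e$ together with the conjunct $q(\bx) \cdot z_e = p(\bx) \wedge \neg(q(\bx) = 0)$, noting that the rational function is defined precisely on the open set where the denominator does not vanish, and this open set can itself be stated in $\fore$. Second, given these definable endpoints $z_{l_i}, z_{u_i}$, the four interval shapes translate to the obvious Boolean combinations of the strict ordering $<$ and the equality $=$, e.g., $a_i = [l_i, u_i)$ becomes $(z_{l_i} < y_i \vee z_{l_i} = y_i) \wedge y_i < z_{u_i}$, and analogously for the other three. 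The values $\pm\infty$ are handled by simply omitting the corresponding half of the conjunction (or, equivalently, replacing it with $\true$, i.e., $y_i = y_i$), since every real is smaller than $+\infty$ and larger than $-\infty$.

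Putting these pieces together and existentially quantifying the auxiliary variables $z_e$ for each endpoint produces $\chi_i(\bx, y_i) \in \fore$, and therefore $\psi_\ba(\bx, \by) \in \fore$. Correctness is immediate by construction: $\psi_\ba(\br, \bn)$ is true in $\langle \bbR, \Omega\rangle$ precisely when, for each $i$, the real $n_i$ lies in the interval obtained from $a_i$ after substituting each $\bot_j$ by $r_j$, which is exactly the definition of $\bn$ being consistent with $v_\br(\ba)$.

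The only real subtlety is the treatment of the rational endpoints: one must be careful that the auxiliary variables $z_e$ are used only when the denominator $q$ does not vanish at $\bx$. However, since $\bn$ is consistent with $v_\br(\ba)$ only when $v_\br(a_i)$ is itself a well-defined interval (so that all denominators appearing in its endpoints are nonzero at $\br$), no extra work is needed beyond the explicit guard $\neg(q(\bx) = 0)$ in the encoding of each endpoint. This is the only step that requires care; the rest is a straightforward syntactic recursion over the shape of the interval tuple.
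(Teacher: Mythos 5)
Your proof is correct and follows essentially the same route as the paper: a conjunction over the components of $\ba$, each comparing $y_i$ with the endpoint terms obtained by substituting the nulls $\bot_j$ by the variables $x_j$. You merely spell out details the paper glosses over (the $\fore$-encoding of rational-function endpoints via an auxiliary existential variable with a nonvanishing-denominator guard, the four interval shapes, and $\pm\infty$), so no substantive difference.
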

\begin{proof}
  Let $\ba = ( a_1, \ldots, a_m)$. We analize the case where each
  $a_i$ defines an open interval, the other cases follow
  straightforwardly.  For each $a_i = (l_i, u_i)$, let
  $\psi_i(\br, \bn)$ be the formula
  $y_i > l_i^{vars} \wedge y_i < u_i^{vars}$, where $l_i^{vars}$ and
  $u_i^{vars}$ are the terms obtained from $l_i$ and $u_i$,
  respectively, by replacing, for each $i = 1, \ldots, n$, each
  occurrence of $\bot_i$ with $x_i$. We define
  $\psi_\ba(\bx, \by) = \bigwedge_{i=1}^n \psi_i(\bx, \by)$. By
  definition, $\psi_i(\br, \bn)$ is true for $\br \in \bbR^n$ and
  $\bn \in \bbR^m$ if and only if $r_i \le v_\br(u_i)$ and
  $r_i \ge v_\br(l_i)$.  In turn, the latter implies that
  $\psi(\br, \bn)$ is true if and only if $\bn$ is consistent with
  $v_\br(\ba)$.
\end{proof}

We are finally ready to prove Claim~\ref{cl:fo-definability}. Assume
that, $|adom(q(v (\idb)))| \le b$, for each $v$ for $\nulls(\idb)$
(Lemma~\ref{lm:answeradombound}), and let $s = b^{m}$. Let
$\phi_p(\bx)$ be the following formula:

\[ \exists \by_1, k_2 \ldots \by_s, k_s. \bigwedge_{i \neq j \in [1,
    s]} \neg (\by_i = \by_j) \wedge \bigwedge_{i=1}^s
  (\phi_{q,\idb}(\bx, k_i, \by_i) \wedge \psi_\ba(\bx, \by_i)) \wedge
  (\sum_{i=1}^s k_i = p) \wedge \neg ( \exists \bz,
  k_z. \phi_{q,\idb}(\bx, k_z, \bz) \wedge \psi_\ba(\bx, \bz) \wedge
  \bigwedge_{i=1}^s \neg (\bz = \by_i))\]

where $\phi_{q,\idb}(\bx, z, \by)$ is the formula defined in
Claim~\ref{cl:formula-constants}, and the $\psi_\ba(\bx, \by)$ is the
formula defined in Claim~\ref{cl:ans-tup-fo}.

To conclude the prove, we observe the following.  The formula
$\phi_p(\br)$ is true for some $\br \in \bbR^n$ if and only if there
exist distinct $\bn_1, \ldots, \bn_s \in \bbR^m$ such that
$\sum_{i=1}^s \#(\br, q(v_\br(\idb))) = k$
(Claim~\ref{cl:formula-constants}), each $\bn_i$ is consistent with
$v_\bn(\ba)$ (Claim~\ref{cl:ans-tup-fo}), and there exists no
$\bc \neq \bn_i$, for each $i = 1, \ldots s$, such that
$\#(\br, q(v_\br(\idb))) = k$ (Lemma~\ref{lm:answeradombound}).  We
can conclude that $O_p$ is definable in $\fore$ using $\phi_p(\bx)$,
for each $p \in \bbN$.

\subsection{Proof of Proposition~\ref{pr:transc}}
Assume a database schema $\{R_{/1}\}$ and $q = \sigma_{\$1 <
  1}(R)$. For every $k > 0$, $R^{\idb_k}$ consists of $k$ occurrences
of the tuple $(\bot)$, where $\bot$ is defined by the exponential
distribution with parameter $\lambda = 1$, i.e., the density function
of the distribution is $e^{-x}$.

Case $\otldef{q, =, k}$.  The cumulative distribution function of the
distribution that defines $\bot$ is $F(x) = 1 - e^{-x}$. In turn, the
latter implies that $P(\bot < 1) = \otldef{q, =, k}(\idb, ()) = 1 - e^{-1}$. We
now observe that $e^{-1}$ is a trancendental number, and every
non-constant polynomial function of one variable yields a
transcendental value if the value of the argument is
transcendental. We can conclude that $1 - e^{-1}$ is transcendental.

Case $\otldef{q, >, k}$. Let $R_> = R^{\idb_k} \cup \{(\bot)\}$ and
$\idb_>$ be the instance of $\cS$ where $R^{\idb_>} = R_>$. Clearly,
$P(\bot < 1) = \otldef{q, >, k}(\idb_>, ()) = 1 - e^{-x}$. For what we
stated above, we can conclude that $\otldef{q, >, k}(\idb_>, ())$ is
transcendental.

Case $\otldef{q, <, k}$. First, we observe
$\otldef{q, <, k}(\idb_>, ()) = 1 - \otldef{q, >, k}(\idb_>, ()) -
\otldef{q, =, k}(\idb_>, ())$.  Moreover,
$\otldef{q, =, k}(\idb_>, ()) = 0$, since $q(v(\idb_>))$ contains either
$k+1$ tuples or none at all. We can conclude
$\otldef{q, <, k}(\idb_>, ()) = e^{-x}$, which is transcendental.

\subsection{Proof of Theorem~\ref{th:shp-hardness}}

%\repeatresult{theorem}{\ref{th:shp-hardness}}{\thmBQAsharpp}

To prove the claim, for each $\circ \in \{<,=,>\}$ and $k\ge0$, we
show a reduction from $\#DNF$ i.e., the problem of counting the number
of satisfying assignments for a given 3-DNF formula.

We first recall some standard notions. Assume a countable set of
symbols $Vars$ that we call propositional variables.  A literal over
$Vars$ (simply, literal) is either a propositional variable
$x \in Vars$ or its negation $\neg x$, and an $n$-clause is an
$n$-tuple of literals.  We call $n$-formula a finite set of
$n$-clauses. Given a formula $\phi$, we use $Vars(\phi)$ to denote the
set of variables used in $\phi$. An assignment for a $Vars(\phi)$ is a
mapping $\alpha : Vars(\phi) \to \{0,1\}$.

A 3-DNF formula $\phi$ is a $3$-formula. We say that an assignment
$\alpha$ satisfies a 3-DNF formula if there exists at least one clause
$(\ell_1, \ell_2, \ell_3) \in \phi$ such that, for each
$i \in \{1, 2, 3\}$, $\alpha(\ell_i) = 1$, if $\ell_i$ is a
propositional variable, and $\alpha(\ell_i) = 0$, if $\ell_i$ is the
negation of a propositional variable. The $\# DNF$ problem asks, given
a 3-DNF $\phi$, for the number of assignments for $Vars(\phi)$ that
satisfy $\phi$. This problem is known to be $\# P$-complete.

We remind the reader that a problem $A$ is $\# P$-hard if there exists
a polynomial-time Turing reduction from every problem in $\# P$ to
$A$. A polynomial-time Turing reduction from a function problem $A$ to
a function problem $B$ is a Turing machine with an oracle for $B$ that
solves $A$ in polynomial time.  We proceed to show one such reduction
from $\# DNF$ to $\otlprob$, for each $k\ge0$ and
$\circ \in \{<, =, >\}$.
   
To this end, we start by presenting an encoding of 3-DNF formulae that
is common among all reductions.  Let $\phi$ be a 3-DNF formula with
$Vars(\phi) = \{x_1, \ldots, x_n\}$ and clauses $\{c_1, \ldots,
c_m\}$. We assume literals $\ell_{1,j}, \ell_{2,j}, \ell_{3,j}$, for
each clause $j = 1, \ldots, m$. Finally, we will use $\# \phi$ to
denote the number of satisfying assignments for $\phi$.
   We proceed to define the elements of our encoding.

   \begin{itemize}
   \item We fix a database schema $\cS$ containing relation schemas
     $R_{/7}$;

   \item We associate the null symbol $\bot_i$ to the variable
     $x_i \in Vars(\phi)$, for each $i = 1, \ldots, n$. Each of these
     nulls is defined by the uniform distribution with deinsity
     function $F(x) = 0.5$, for $x \in (0, 2)$, $F(x) = 0$, otherwise.
   \item To each literal $l$, we associate the pair
     $t(\ell) = (1, \bot_i)$, if $\ell = x_i$, and
     $t(\ell) = (\bot_i, 1)$, if $\ell = \neg x_i$;
   \item To each clause
     $c_j = \{\ell_{1,j}, \ell_{2,j}, \ell_{3,j}\}$, we associate the
     tuple $t(c_j)$ defined as
     $(t(\ell_{1,j}), t(\ell_{2,j}), t(\ell_{3,j}), j)$, and define
     the instance $R_\phi$ of $R_{/7}$ as the bag $t(c_j)$, for each
     $j = 1, \ldots, m$.
   \end{itemize}

   Clearly, we can obtain $R_\phi$ from $\phi$ in time polynomial in
   the size of $\phi$. Let $R_w$ be the instance of $R_{/7}$
   containing $k$ occurrences of the tuple $(3,4,3,4,3,4,m+1)$, and
   let $\idb_k$ be the incomplete database for $\cS$ with
   $R^{\idb_k} = R_\phi \cup R_w$. Finally, we define the Boolean
   query $q$ as follows:
   \[q = \pi_\emptyset (\sigma_{(\$1 < \$2)}(\sigma_{(\$3 <
       \$4)}(\sigma_{(\$5 < \$6)} (R)))) \]

   From now on, we analyze the three cases of $\circ \in \{<, =, >\}$
   separately.

   \underline{Case $\otldef{q, >, k}$} We present a reduction from
   $\# DNF$ to $\otldef{q,>, k}$, for every $k \ge 0$. 
   
   To show our reduction, we first establish the following claim:

   \begin{claim}
     \label{cl:sh-p-hard-multi}
     $\# \phi = \otldef{q,>, k}(\idb_k, ()) \cdot 2^{|Vars(\phi)|}$
   \end{claim}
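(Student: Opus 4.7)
The plan is to establish a measure-preserving correspondence between valuations of the nulls $\{\bot_1, \ldots, \bot_n\}$ and truth assignments for $Vars(\phi)$, so that the event ``$q(v(\idb_k))$ contains more than $k$ empty tuples'' corresponds precisely to the event ``$\alpha_v$ satisfies $\phi$''.

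First, I would observe that each $\bot_i$ is uniform on $(0,2)$, so with probability $1$ we have $v(\bot_i) \in (0,1) \cup (1,2)$ (the singleton $\{1\}$ having measure zero). For such valuations, define an assignment $\alpha_v : Vars(\phi) \to \{0,1\}$ by $\alpha_v(x_i) = 1$ if $v(\bot_i) \in (1,2)$ and $\alpha_v(x_i) = 0$ if $v(\bot_i) \in (0,1)$. Since the nulls are independent and each of the two intervals has measure $1/2$, for every fixed assignment $\alpha$ the event $\{v : \alpha_v = \alpha\}$ has probability exactly $2^{-n}$. Thus the induced distribution on assignments is uniform.

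Next, I would analyze the selection condition tuple by tuple. For a clause $c_j$ with literals $\ell_{1,j},\ell_{2,j},\ell_{3,j}$, the corresponding row in $R_\phi$ is $(t(\ell_{1,j}), t(\ell_{2,j}), t(\ell_{3,j}), j)$, and the three pair-comparisons $\$1<\$2$, $\$3<\$4$, $\$5<\$6$ in $q$ decouple into one comparison per literal. For $\ell = x_i$, the pair is $(1,\bot_i)$ and the comparison $1 < v(\bot_i)$ holds iff $v(\bot_i) \in (1,2)$, i.e.\ iff $\alpha_v(x_i) = 1$; for $\ell = \neg x_i$, the pair is $(\bot_i, 1)$ and the comparison holds iff $\alpha_v(x_i) = 0$. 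Hence the clause tuple survives the selection under $v$ iff $\alpha_v$ satisfies $c_j$. For the $R_w$ tuples $(3,4,3,4,3,4,m+1)$, all three comparisons hold trivially, so each of the $k$ copies always survives.

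Combining these observations, after the empty projection, $q(v(\idb_k))$ contains exactly $k + N(\alpha_v)$ copies of the empty tuple, where $N(\alpha_v)$ is the number of clauses of $\phi$ satisfied by $\alpha_v$. Therefore the event $\#((),q(v(\idb_k))) > k$ holds iff $N(\alpha_v) \geq 1$ iff $\alpha_v$ satisfies $\phi$. Taking probabilities and using the uniformity of $\alpha_v$ shown above gives
\[
\otldef{q,>,k}(\idb_k,()) \;=\; \frac{\#\phi}{2^{n}} \;=\; \frac{\#\phi}{2^{|Vars(\phi)|}},
\]
which rearranges to the claim. The only subtle point, and the one I would be careful about, is the measure-preservation argument: I need to justify rigorously that the product structure of $P_V$ (Theorem~\ref{th:val-space-unique}) together with the fact that $\{v(\bot_i)=1\}$ has probability zero makes the induced distribution on assignments exactly uniform, so that the DNF counting problem is recovered without any loss.
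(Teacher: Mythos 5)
Your proof is correct and takes essentially the same route as the paper's: what you phrase as the map $v \mapsto \alpha_v$ with the tuple-by-tuple analysis of the selection conditions is exactly what the paper packages as $(1,2)$-partitions, its invariance claim, and the bijection between partitions and assignments, with both arguments concluding that each assignment corresponds to a valuation event of probability $2^{-|Vars(\phi)|}$ and that the event $\#((),q(v(\idb_k)))>k$ occurs precisely when the induced assignment satisfies $\phi$. Your explicit handling of the measure-zero boundary set $\{v(\bot_i)=1\}$ is a point the paper glosses over, but it does not change the argument.
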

   We proceed to prove a relevant property of the query $q$.  A
   $(1,2)$-partition $f$ for $\nulls(\idb_k)$ is a function that
   assigns either the interval $(0,1)$ or the interval $(1, 2)$ to
   each $\bot \in \nulls(\idb_k)$, i.e., $f(\bot) = (0, 1)$ or
   $f(\bot) = (1, 2)$, for each $\bot \in \nulls(\idb_k)$. We say that
   a valuation $v$ is within a $(1,2)$-partition $f$ if
   $v(\bot) \in f(\bot)$, for each $\bot \in \nulls(\idb_k)$. The
   query $q$ is invariant within $(1,2)$-partitions, as the following
   claim shows.

   \begin{claim}
     \label{cl:invariant-partitions}
     Let $f$ be a $(1,2)$-partition for $\null(\idb_k)$. Then, for
     each $v,v'$ within $f$ we have
     $\#( (), q(v(\idb_k))) =\#( (), q(v' (\idb_k)))$
   \end{claim}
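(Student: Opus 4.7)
The plan is to unfold the definition of $\#((), q(v(\idb_k)))$ and show that it reduces to a count whose outcome is decided purely by which of the two intervals $(0,1), (1,2)$ each null falls into, so that any two valuations within the same partition yield the same count.

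First, since $q = \pi_\emptyset(\sigma_{\$1<\$2}(\sigma_{\$3<\$4}(\sigma_{\$5<\$6}(R))))$, the value $\#((), q(v(\idb_k)))$ is exactly the sum, over all bag-occurrences of tuples $\brt$ in $R^{\idb_k}$, of the indicator that $v(\brt)$ satisfies all three inequalities $\$1<\$2, \$3<\$4, \$5<\$6$. I would then split $R^{\idb_k}$ into the $k$ constant ``witness'' tuples $(3,4,3,4,3,4,m+1)$ and the clause tuples $t(c_1),\ldots,t(c_m)$. The witness tuples contain no nulls and trivially satisfy the three conditions ($3<4$), so they contribute exactly $k$ under every valuation, independent of $f$ or $v$.

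For the clause tuples, I would analyze one position at a time. For the clause $c_j$, the tuple $t(c_j)=(t(\ell_{1,j}),t(\ell_{2,j}),t(\ell_{3,j}),j)$ is the concatenation of the three literal-pairs, so each comparison $\$(2r-1)<\$(2r)$ is evaluated on the pair $t(\ell_{r,j})$ independently. By construction of the encoding, if $\ell_{r,j}=x_i$ then $t(\ell_{r,j})=(1,\bot_i)$ and the check becomes $1<v(\bot_i)$, i.e.\ $v(\bot_i)\in(1,2)$; if $\ell_{r,j}=\neg x_i$ then $t(\ell_{r,j})=(\bot_i,1)$ and the check becomes $v(\bot_i)<1$, i.e.\ $v(\bot_i)\in(0,1)$. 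In either case the truth value of the inequality is determined entirely by which of the two open intervals $v(\bot_i)$ lies in.

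Since every $v$ within the $(1,2)$-partition $f$ places each null $\bot_i\in\nulls(\idb_k)$ in the prescribed interval $f(\bot_i)$, the truth value of each per-literal comparison, hence of the full conjunction for each clause tuple, depends only on $f$. Summing the indicator over all clause tuples and adding the constant $k$ from the witness tuples yields the same total for $v$ and any other $v'$ within $f$, which proves the claim. There is no real obstacle here: the argument is just a careful bookkeeping of how the encoding $t(\cdot)$ turns each inequality into a one-null threshold test at the constant $1$, which is precisely the boundary defining the partition.
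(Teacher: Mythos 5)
Your proof is correct and follows essentially the same route as the paper's: the constant witness tuples are selected regardless of the valuation, and each literal pair $(1,\bot_i)$ or $(\bot_i,1)$ passes its comparison exactly when $v(\bot_i)$ lies in $(1,2)$ or $(0,1)$ respectively, so the count depends only on the partition $f$. Your version just makes the bookkeeping (splitting the bag and summing indicators) more explicit than the paper's prose.
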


   \begin{proof}
     The claim is a straightforward consequence of  the fact that
     the selection conditions in $q$ selects tuples that satisfy each
     of the following conditions: $(\$1 < \$2)$, $(\$3 < \$4)$,
     $(\$5 < \$6)$.  Therefore, the elements of $R_w$ are selected
     independently from the valuation. Consider now the tuples in
     $R_\phi$. By definition, each $\bc \in R_\phi$ is of the form
     $(t(\ell_{1,j}), t(\ell_{2,j}), t(\ell_{3,j}), j)$, where each
     $t(\ell_{i,j})$ is a pair of the form $(\bot, 1)$ or $(1,
     \bot)$. It is easy to see that a pair $(1, \bot)$ satisfies the
     selection conditions in $q$ for each $v$ such that $v(\bot)$ is in
     the interval $(1, 2)$, and it does not satisfy the conditions for
     each $v$ such that $v(\bot)$ is in the interval $(0,
     1)$. Similarly, a pair $(\bot, 1)$ satisfies the selection
     conditions in $q$ for each $v$ such that $v(\bot)$ is in the
     interval $(0, 1)$, and it does not satisfy the conditions for
     each $v$ such that $v(\bot)$ is in the interval $(1, 2)$. The
     claim follows straightforwardly.
 \end{proof}

 A direct consequence of Claim~\ref{cl:invariant-partitions} is that
 either $\#( (), q(v(\idb_k)))>k$, for every valuation $v$ within a
 $(1,2)$-partition $f$, or $\#( (), q(v(\idb_k)))=k$, for every
 valuation $v$ within $f$. This is becasue, as we said, the $k$
 occurrences of the tuple $(3,4,3,4,3,4, m+1)$ satisfy the selection
 conditions in $q$ independently from the valuation considered.
 We proceed to show that $\otldef{q, >, k}((), \idb_k)$ can be defined
 in terms $(1,2)$-partitions. Let $\# V$ be the total number of
 $(1, 2)$-partitions for $\nulls(\idb_k)$ and let $\# q$ be the number
 of such partitions $f$ where $\#( (), q(v(\idb_k)))>k$, for each
 valuation $v$ within $f$.

 \begin{claim}
   $\otldef{q, >, k}(\idb_k, ()) = \frac{\# q}{\# V}$
 \end{claim}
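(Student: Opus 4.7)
The plan is to exploit the fact, established in Claim~\ref{cl:invariant-partitions}, that the value of $\#((),q(v(\idb_k)))$ depends on the valuation $v$ only through the associated $(1,2)$-partition of $\nulls(\idb_k)$. This reduces a question about the continuous valuation space to a purely counting question over the finite set of $(1,2)$-partitions.

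First, I would fix $n = |\nulls(\idb_k)| = |\Vars(\phi)|$ and, for each $(1,2)$-partition $f$, define the measurable set $U_f \subseteq \pdom_V$ of valuations within $f$, namely $U_f = \{v \mid v(\bot) \in f(\bot) \text{ for every } \bot \in \nulls(\idb_k)\}$. Since each null $\bot_i$ is uniformly distributed on $(0,2)$ and the nulls are mutually independent (Theorem~\ref{th:val-space-unique}), the product formula (\ref{eq:product}) yields $\prob_V(U_f) = \prod_{i=1}^{n} \tfrac{1}{2} = \tfrac{1}{2^n} = \tfrac{1}{\#V}$. Moreover, the sets $\{U_f\}_f$ are pairwise disjoint and their union has full measure (the complement consists of valuations assigning some null the value $1$, which is a null set for the uniform distribution).

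Next, by Claim~\ref{cl:invariant-partitions}, for each fixed partition $f$ the quantity $\#((),q(v(\idb_k)))$ is constant as $v$ ranges over $U_f$. Hence either $U_f$ is entirely contained in $\pre{\chi}(\out_{q,\idb_k,>}((),k))$ or it is disjoint from it, modulo a set of measure zero. Writing $F$ for the set of partitions $f$ witnessing $\#((),q(v(\idb_k))) > k$, one has, up to a null set, the identity
\[
\pre{\chi}\bigl(\out_{q,\idb_k,>}((),k)\bigr) \;=\; \bigcup_{f \in F} U_f.
\]
Applying $\prob_V$, using disjointness and the fact that $\prob_{q,\idb_k} = \prob_{\idb_k} \circ \pre{q}$ pulls back through $\chi$ to $\prob_V$, we obtain
\[
\otldef{q,>,k}(\idb_k, ()) \;=\; \prob_V\!\left(\bigcup_{f \in F} U_f\right) \;=\; \sum_{f \in F} \frac{1}{\#V} \;=\; \frac{\#q}{\#V},
\]
which is the desired equality.

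The only subtle step is the reduction to the finite partition count, which relies on (i) the uniform-with-rational-endpoints distribution giving each partition probability exactly $1/\#V$, and (ii) the invariance of the query value within a partition provided by Claim~\ref{cl:invariant-partitions}. Both are already in hand, so no further obstacle arises.
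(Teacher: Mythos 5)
Your proposal is correct and takes essentially the same route as the paper: decompose the event according to $(1,2)$-partitions, use the invariance claim to make membership constant on each partition, and give each partition probability $2^{-n} = 1/\#V$ via independence of the uniform nulls. The only cosmetic difference is that you compute entirely in the valuation space and pull back through $\chi$ (handling the measure-zero boundary valuations explicitly), whereas the paper works with the disjoint sets of databases $D_f$, using the observation that distinct valuations yield distinct databases; the two formulations amount to the same calculation.
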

 \begin{proof}
   Since every tuple in $R^{\idb_k}$ is uniquely identified by its
   last attribute, every valuation for $\nulls(\idb_\phi)$ defines a
   distinct database. By definition then, every $(1,2)$-partition $f$
   defines a distinct set of databases
   $D_f = \{ v(\idb_\phi)~\mid~ v \text{ is within } f\}$, and
   $D_f \cap D_{f'} = \emptyset$, for each $f \neq f'$. 

   We now observe that, from Claim~\ref{cl:invariant-partitions}, we
   know that $\#( (), q(v(\idb_k)))>k$ if and only if $v$ is within a
   $(1,2)$-partition $f$ s.t., for at least one tuple $\bc \in R_\phi$
   and one $v'$ within the partition, $v'(\bc)$ satisfies the
   selection conditions in $q$. Let $F$ be the set of such
   $(1,2)$-partitions, then $\otldef{q, >, k} = \sum_{f \in F} P(D_f)$
   since each $D_f$ is a measurable set in the PDB defined by
   $\idb_k$. To conclude the proof, we observe that
   $P(D_f) = P(D_{f'}) =\frac{1}{2^{n}}$, for each $f,f' \in F$, and
   $\# V = 2^n$, where $n = |\nulls(\idb_k)|=|Vars(\phi)|$ is the
   number of null values in $\idb_k$. This is because, by definition,
   the set $V_f$ of valuations within a given $(1, 2)$-partition $f$
   has probability $\prob_V(V_f) = \frac{1}{2^{n}}$, where $\prob_V$
   is the probability function defined in the valuation space of
   $\idb_k$.  We can conclude that
   $\otldef{q, >, k}(\idb_k, ()) = \frac{\# q}{\# V}$.
\end{proof}

We now show that $\#q$ indeed represents $\# \phi$.

\begin{claim}
  $\#q = \# \phi$
\end{claim}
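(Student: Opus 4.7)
The plan is to establish an explicit bijection between $(1,2)$-partitions of $\nulls(\idb_k)$ and truth assignments of $Vars(\phi)$, and then transfer the counting under this bijection. Formally, for each $(1,2)$-partition $f$, I would define the assignment $\alpha_f : Vars(\phi) \to \{0,1\}$ by $\alpha_f(x_i) = 1$ if $f(\bot_i) = (1,2)$ and $\alpha_f(x_i) = 0$ if $f(\bot_i) = (0,1)$. Since there are exactly $2^n$ partitions and $2^n$ assignments (with $n = |Vars(\phi)| = |\nulls(\idb_k)|$), this mapping is a bijection.

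The core of the argument is to show that, for any valuation $v$ within $f$, the partition $f$ contributes to $\#q$ (i.e., $\#((), q(v(\idb_k))) > k$) if and only if $\alpha_f$ satisfies $\phi$. I would first analyze a single literal: for $\ell = x_i$, the pair $t(\ell) = (1, \bot_i)$ satisfies the condition $\$1 < \$2$ under $v$ exactly when $v(\bot_i) > 1$, which in a partition means $f(\bot_i) = (1,2)$, equivalently $\alpha_f(x_i) = 1$; and symmetrically, for $\ell = \neg x_i$, the pair $(\bot_i, 1)$ satisfies $\$1 < \$2$ precisely when $\alpha_f(x_i) = 0$. In both cases, $v(t(\ell))$ satisfies the selection condition iff $\ell$ is true under $\alpha_f$. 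Consequently, the tuple $v(t(c_j)) = (v(t(\ell_{1,j})), v(t(\ell_{2,j})), v(t(\ell_{3,j})), j)$ passes all three selections $\$1<\$2$, $\$3<\$4$, $\$5<\$6$ iff all three literals of $c_j$ evaluate to true under $\alpha_f$, i.e., iff $\alpha_f$ satisfies clause $c_j$.

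Combining with the fact that the $k$ witnessing tuples in $R_w$ have the form $(3,4,3,4,3,4,m+1)$ and thus always satisfy the three selection conditions regardless of $v$, we get $\#((), q(v(\idb_k))) = k + N_v$, where $N_v$ counts the tuples from $R_\phi$ surviving the selections. Hence $\#((), q(v(\idb_k))) > k$ iff $N_v \ge 1$ iff at least one clause $c_j$ of $\phi$ is satisfied by $\alpha_f$, iff $\alpha_f$ satisfies the DNF formula $\phi$. By Claim~\ref{cl:invariant-partitions}, this property depends only on $f$, not on the specific $v$ within $f$, so it is legitimate to count partitions. Passing through the bijection $f \mapsto \alpha_f$ yields $\#q = \#\phi$, completing the proof.

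The main conceptual step is the literal-level correspondence; once that is in place, the rest follows by the DNF semantics. The only subtlety to be mindful of is ensuring that the $k$ tuples from $R_w$ account for exactly the threshold (so that any extra matching tuple from $R_\phi$ strictly pushes the count above $k$), which holds by construction. No calculation beyond the case analysis on literals is needed.
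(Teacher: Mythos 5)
Your proof is correct and follows essentially the same route as the paper: the identical bijection $f \mapsto \alpha_f$ between $(1,2)$-partitions and assignments, the same literal-level case analysis on $(1,\bot_i)$ versus $(\bot_i,1)$, and the same use of Claim~\ref{cl:invariant-partitions} to make the count depend only on the partition. Your explicit decomposition of the count as $k + N_v$ (with the $R_w$ tuples always surviving) is just a slightly cleaner packaging of the argument the paper already makes.
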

\begin{proof}
  To prove the claim we build a bijection $g$ from the set of
  $(1,2)$-partitions for $\nulls(\idb_k)$ to the set of assignments
  for $\phi$ such that $g(f)$ satisfies $\phi$ if and only if
  $\#( (), q(v(\idb_k)))>k$, for each $v$ within $f$.  Let $f$ be a
  $(1, 2)$-partitions, we define the assignment $\mu_f$ for
  $Vars(\phi)$ such that $\mu_f(x_i) = 1$ if $f(\bot_i) \in (1, 2)$,
  and $\mu_f(x_i) = 0$ if $f(\bot_i) \in (0, 1)$. Clearly, the mapping
  $g$ s.t. $g(f) = \mu_f$ is a bijection. We proceed to prove that
  $g(f)$ satisfies $\phi$ if and only if $\#( (), q(v(\idb_k)))>k$, for
  each $v$ within $f$. % In what follows, we will use $\theta$ to denote
  % the selection condition
  % $(\$1 < \$2 \wedge \$3 < \$4 \wedge \$5 < \$6)$ of $q$.
  Clearly, $\#( (), q(v(\idb_k)))>k$ if there exists at least one tuple
  $\bc$ in $R^{\idb_k}$ such that all the following conditions hold
  for $v(\bc)$: $(\$1 < \$2)$, $(\$3 < \$4)$, and $(\$5 < \$6)$.

  (if part) Suppose that, for each $v$ within $f$, we have
  $\#( (), q(v(\idb_k)))>k$. Then, there exists a tuple $\bc$ in
  $R^{\idb_k}$ such that the aforementioned conditions hold for
  $v(\bc)$. By construction, there exists a clause
  $k = \{\ell_1, \ell_2, \ell_3\}$ in $\phi$ such that $t(k) =
  \ba$. Consider the case of $\ell_1 = 1$. If $\ell_1 = x_p$, then
  $t(\ell_1) = ( 1 , \bot_p)$ and, since $v(\bc)$ satisfies
  $(\$1 < \$2)$, we can conclude that $1 < v(\bot_p)$, for each $v$
  within $f$. By definition, then, $\mu_f(x_p) = 1$. Similarly,
  $\ell_1 = \neg x_p$, then $t(\ell_1) = ( \bot_p, 1)$ and, since
  $v(\bc)$ satisfies $(\$1 < \$2)$, we can conclude that
  $v(\bot_p) < 1$, for each $v$ within $f$. By definition, then,
  $\mu_f(x_p) = 0$. The case of $\ell_2$ and $\ell_3$ are identical,
  and therefore we can conclude that $\mu_f$ satisfies the clause $k$
  and therefore $\phi$.

  (only if part) Assume an assignment $\mu_f$ that satisfies
  $\phi$. Then, $\mu_f$ satisfies one of its clauses
  $k = \{\ell_1, \ell_2, \ell_3\}$. By construction, there exists a
  tuple $\bc = t(k)$ in $R^{\idb_k}$.  Consider the case of
  $\ell_1$. If $\ell_1 = x_p$, then $t(\ell_j) = ( 1 , \bot_p)$ and,
  since $\mu_f$ satisfies $k$, we can conclude that $\mu_f(x_p) = 1$,
  and therefore we have $1 < v(\bot_p)$, for each $v$ within
  $f$. Similarly, if $\ell_1 = \neg x_p$, then
  $t(\ell_j) = ( \bot_p, 1)$ and, since $\mu_f$ satisfies $k$, we can
  conclude that $\mu_f(x_p) = 0$, and therefore we have
  $v(\bot_p) < 1$, for each $v$ within $f$. The case of $\ell_2$ and
  $\ell_3$ are identical, and therefore we can conclude that
  $\#( (), q(v(\idb_k)))>k$, for each $v$ within $f$.
 \end{proof}

 To use Claim~\ref{cl:sh-p-hard-multi} in a polynomial-time reduction,
 we need to prove that $\#V \cdot \otldef{q, >, k}$ can be computed in
 time polynomial w.r.t. $\phi$. To this end, we assume that the output
 of the oracle for $\otldef{q, >, k}$ is represented in binary using
 floating point notation. Observe that $\# V = 2^{|Vars(\phi)|}$, and
 therefore $\otldef{q, >, k} = \frac{\# q}{\# V}$ can be represented
 with a number of bits that is logaritmic with
 $\# q = \# \phi \le 2^{|Vars(\phi)|}$. Moreover,
 $\#V \cdot \otldef{q, >, k}$ can be computed by shifting
 $\otldef{q, >, k}$ left $|Vars(\phi)|$ bits which, in turn, can be
 done in time polynomial in $|\phi|$.

 \underline{Case $\otldef{q, <, k}$} We simply observe that
 $\otldef{q, <, k}(\idb, \ba) = 1 - \otldef{q, >, k}(\idb, \ba)$, for
 every $q \in \sqlra$ and $k \ge 0$.  With notation as above, then,
 $ \# \phi = \# V \cdot ( 1 - \otldef{q, <, k}(\idb_k, \ba) ) = \# V -
 \# V \cdot \otldef{q, <, k}(\idb_k, \ba) ) )$. Since
 $\otldef{q, >, k}(\idb_k, \ba) $ can be represented with polyomially
 many bits w.r.t. $\phi$, also $\otldef{q, <, k}(\idb_k, \ba) $ can.
 We now observe that $\# V = 2^{Vars(\phi)}$, and therfeore its binary
 representation can be computed in time polynomial in $|\phi|$. We can
 conclude that $\# V - \# V \cdot \otldef{q, <, k}(\idb_k, \ba) ) ) $
 can be computed in time polinomial in $|\phi|$, and the claim
 follows.

 \underline{Case $\otldef{q, =, k}$} For this case, we observe that
 $\otldef{q, <, k}(\idb, \ba) = \sum_{i=0}^{k-1}\otldef{q, =, i}(\idb,
 \ba) $, since $\{\out_{q, =, \idb}(\idb, i)~\mid~ k>0\}$ are pairwise
 disjoint. Since $k$ is not part of the input, with an oracle for
 $\otldef{q, =, k}$ we can solve $\otldef{q, <, k}$ summing up the
 results of a constant number of oracle calls. With notation as above,
 we have

 \[\# \phi = \# V - \# V \cdot \otldef{q, <, k}(\idb_k, \ba) ) = \# V
   - \# V \cdot (\sum_{i=1}^{k-1} \otldef{q, =, k}(\idb_k, i))\]

 We now observe that
 $\otldef{q, =, k}(\idb_k, i) = k \cdot \frac{1}{2^{|Vars(\phi)|}}$,
 for some integer $k \in [0, 2^{|Vars(\phi)|}]$ due to
 Claim~\ref{cl:invariant-partitions}. In turn, the latter implies that
 $\otldef{q, =, k}(\idb_k, i)$ can be represented with a number of
 bits that is polynomial in $|\phi|$. We can conclude that
 $\sum_{i=1}^{k-1} \otldef{q, =, k}(\idb_k, i)$ can be computed in
 time polynomial w.r.t. $|\phi|$, and the claim follows.

\subsection{Proof of Theorem~\ref{th:thresholdnphardnew}}

To prove the claim, for each $\circ \in \{<,=,>\}$ and $k\ge0$, we
show a reduction from $SAT$ for $3CNF$ formulae, i.e., the problem of
checking satisfiability of propositional formulae in conjunctive
normal form with $3$ literals each clause.

We first recall some standard notions. Assume a countable set of
symbols $Vars$ that we call propositional variables.  A literal over
$Vars$ (simply, literal) is either a propositional variable
$x \in Vars$ or its negation $\neg x$, and an $n$-clause is an
$n$-tuple of literals.  We call $n$-formula a finite set of
$n$-clauses. Given a formula $\phi$, we use $Vars(\phi)$ to denote the
set of variables used in $\phi$. An assignment for a $Vars(\phi)$ is a
mapping $\alpha : Vars(\phi) \to \{0,1\}$.

A 3-CNF formula $\phi$ is a $3$-formula. We say that an assignment
$\alpha$ satisfies a 3-CNF formula if for every clause
$(\ell_1, \ell_2, \ell_3) \in \phi$ there exists at least one
$i \in \{1, 2, 3\}$, such that $\alpha(\ell_i) = 1$, if $\ell_i$ is a
propositional variable, and $\alpha(\ell_i) = 0$, if $\ell_i$ is the
negation of a propositional variable. The $3SAT$ problem asks whether
a given 3-CNF formula $\phi$ is satisfied by at least one of the
assignments for $Vars(\phi)$. This problem is well known to be
NP-complete.

We remind the reader that a decision problem $A$ is NP-hard if there
exists a Karp reduction from every problem in $NP$ to $A$. A Karp
reduction from a decision problem $A$ to a decision problem $B$ is a
polynomial-time algorithm that takes as input an instance $a$ of $A$
and returns an instance $b_a$ of $B$ such that $b_a$ is a positive
instance of $B$ if and only if $a$ is a positive instance of $A$. We
proceed to show one such reduction from $3SAT$ to $\totlprob$, for
each $k\ge0$ and $\circ \in \{<, =, >\}$.
   
To this end, we start by defining a schema
$\cS = \{R_{/7}, W_{/1} \}$. Given a 3-CNF formula
$\phi = \{c_1, \ldots, c_m\}$ with
$Vars(\phi) = \{x_1, \ldots, x_n\}$, we use $R_\phi$ to denote the
instance of $R_{/7}$ that encodes $\phi$ introduced in the proof of
Theorem~\ref{th:shp-hardness}. Observe that, in the two proofs, we use
the same representation for formulae and therefore we can adapt such
construction directly. Moreover, for every $k \ge 0$, we use $W_k$ and
$\idb_k$ to denote the bag containing $k$ occurrences of the tuple
$(0)$, and the instance of $\cS$ that consists of $R_\phi$ and $W_k$,
respectively. Finally, we define the following query:

\[q_{sat} = \pi_{\$2} (COUNT_\emptyset(\varepsilon(\pi_{\$8}
  (\sigma_{(\$1 < \$2)}(R) \cup \sigma_{(\$3 < \$4)}(R) \cup
  \sigma_{(\$5 < \$6)} (R))))) \]

where $\varepsilon$ is a shorthand for duplicate elimination, i.e.,
$\pi_{\$1} (\sum_{\$1}^{\$2} ( Apply_{1}(\ldots)))$.  We proceed to
prove the following property of $q_{sat}$.

\begin{claim}
  \label{cl:invariant-partitions-np}
  Let $f$ be a $(1,2)$-partition for $\nulls(\idb_k)$. Then, for
  each $v,v'$ within $f$ we have
  $q_{sat}(v(\idb_k)) = q_{sat}(v' \idb_k)$
\end{claim}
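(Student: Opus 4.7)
My plan is to argue that $q_{sat}$ factors into a partition-dependent stage followed by deterministic post-processing. First I would invoke the structural fact already used for Claim~\ref{cl:invariant-partitions}: every tuple $\bc \in R^{\idb_k}$ has the form $(t(\ell_1), t(\ell_2), t(\ell_3), j)$, with each pair $t(\ell_i)$ equal to either $(1, \bot_p)$ or $(\bot_p, 1)$. Hence the truth of each of the three selection conditions $\att{1}<\att{2}$, $\att{3}<\att{4}$, $\att{5}<\att{6}$ on $v(\bc)$ depends only on whether $v(\bot_p)$ lies in $(0,1)$ or in $(1,2)$, i.e., only on $f(\bot_p)$. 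It follows that for any $v, v'$ within the same $(1,2)$-partition $f$, each of the three selections evaluated on $R^{v(\idb_k)}$ and on $R^{v'(\idb_k)}$ retains exactly the same underlying tuples of $R^{\idb_k}$, differing only in how the nulls inside those tuples are concretized.

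Next I would push this invariance through the rest of $q_{sat}$. The union of the three selections inherits the tuple-by-tuple invariance, and the projection onto the clause-index attribute drops every null-carrying component, leaving only the constant clause index $j \in \bbN$ for each surviving tuple. Consequently the bags of clause indices obtained from $v(\idb_k)$ and $v'(\idb_k)$ are literally identical as bags over $\bbR$. Every subsequent operator in $q_{sat}$ --- the duplicate elimination $\varepsilon$, the empty-grouping count aggregate, and the outer projection --- is a deterministic function of its input bag, hence is applied to identical inputs in the two cases and produces identical outputs.

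The step I expect to require the most care is the bookkeeping around attribute positions: verifying that the projection after the union really lands on the column carrying the constant clause index rather than on any column still carrying a null, so that all subsequent computation is genuinely $v$-independent. Once this is checked, the conclusion $q_{sat}(v(\idb_k)) = q_{sat}(v'(\idb_k))$ follows directly from Claim~\ref{cl:invariant-partitions} applied componentwise to the three selections, together with the determinism of the downstream bag-level operators.
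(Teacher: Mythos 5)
Your proof is correct and follows essentially the same route as the paper's: the selection conditions on tuples of $R^{\idb_k}$ depend only on which of the intervals $(0,1)$ or $(1,2)$ each null falls into, i.e.\ only on $f$ (the argument of Claim~\ref{cl:invariant-partitions}), and the remainder of $q_{sat}$ is deterministic post-processing of the resulting bag. Your additional care in observing that the projection onto the clause-index column discards all null-carrying components is a useful refinement, since the paper's intermediate assertion that $q_{sel}(v(\idb_k)) = q_{sel}(v'(\idb_k))$ holds literally only after that projection (the selected tuples themselves still contain the differing instantiated null values).
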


\begin{proof}
  First, we observe that the expression
  $q_{sel} = \sigma_{(\$1 < \$2)}(R) \cup \sigma_{(\$3 < \$4)}(R) \cup
  \sigma_{(\$5 < \$6)} (R)$ returns the tuples of $v(\idb_k)$ that
  satisfy at least one of the selection conditions. These conditions
  are preserved inside $(1,2)$-partitions (see proof of
  Claim~\ref{cl:invariant-partitions}), and therefore we can conclude
  that $q_{sel}(v(\idb_k)) = q_{sel}(v'(\idb_k))$. The claim follows
  straightforwardly.
 \end{proof}

 Next, we show that the result of $q_{sat}(v(\idb_k))$ can be used to
 check whether $\phi$ is satisfiable. To this end, given a
 $(1,2)$-partition $f$ for $\nulls(\idb_k)$, we use $\mu_f$ to denote
 the assignment for $Vars(\phi)$ such that $\mu_f(x_i) = 1$, if
 $f(\bot_i) \in (1, 2)$, and $\mu_f(x_i) = 0$ if
 $f(\bot_i) \in (0, 1)$. It is easy to see, for every assignment
 $\alpha$ for $Vars(\phi)$ there exists a $(1,2)$-partition for
 $\nulls(\idb_k)$ such that $\alpha = \mu_f$. Next, we prove

 \begin{claim}
   \label{cl:np-hard-qcount}
   if $\# ((m), q_{sat}(v(\idb_k))) = 1)$, for each $v$ within $f$,
   then $\mu_f$ satisfies $\phi$, $\# ((m), q_{sat}(v(\idb_k))) = 0)$,
   otherwise.
 \end{claim}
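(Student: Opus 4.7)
The plan is to show that the query $q_{sat}$ essentially computes the number of distinct clauses of $\phi$ that have at least one literal satisfied by $\mu_f$, and then to read off the claim from this. First, fix a valuation $v$ within the $(1,2)$-partition $f$, and unfold $q_{sat}(v(\idb_k))$ step by step. The selection $\sigma_{(\$1<\$2)}(R) \cup \sigma_{(\$3<\$4)}(R) \cup \sigma_{(\$5<\$6)}(R)$ returns, under bag semantics, exactly those tuples of $v(R_\phi)$ that satisfy at least one of the three pairwise inequalities, possibly with multiplicity up to three. The projection onto the clause-index attribute (the last coordinate of $t(c_j)$) keeps only the identifier $j$ of each such tuple, and the duplicate-elimination step $\varepsilon$ turns this bag into a set of clause indices. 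Finally $\mathrm{COUNT}_\emptyset$ returns a single tuple whose value is the cardinality of this set, which is then surfaced by the outermost projection.

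Next, I would translate the inequality conditions into truth values under $\mu_f$ using the encoding. Recall that $t(\ell_{i,j}) = (1,\bot_p)$ if $\ell_{i,j} = x_p$, and $t(\ell_{i,j}) = (\bot_p,1)$ if $\ell_{i,j} = \neg x_p$. For $v$ within the $(1,2)$-partition $f$, the condition $\$(2i-1) < \$(2i)$ evaluated on $v(t(c_j))$ reduces to $1 < v(\bot_p)$ in the positive case, which holds iff $f(\bot_p) = (1,2)$, i.e.\ iff $\mu_f(x_p) = 1$; and to $v(\bot_p) < 1$ in the negative case, which holds iff $f(\bot_p) = (0,1)$, i.e.\ iff $\mu_f(x_p) = 0$. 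In both cases, the inequality is equivalent to the literal $\ell_{i,j}$ being true under $\mu_f$. Hence the tuple $t(c_j)$ survives the union of selections iff at least one of the three literals of $c_j$ is true under $\mu_f$, i.e.\ iff $\mu_f$ satisfies the clause $c_j$. (This also re-confirms Claim~\ref{cl:invariant-partitions-np}, since the truth value depends only on $f$, not on the specific $v$ within $f$.)

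Combining these two observations, the value returned by $q_{sat}(v(\idb_k))$ is precisely $|\{j \in \{1,\ldots,m\} \mid \mu_f \text{ satisfies } c_j\}|$, a number between $0$ and $m$, and the result is a singleton bag. Therefore $\#((m), q_{sat}(v(\idb_k))) = 1$ exactly when this count equals $m$, which happens iff $\mu_f$ satisfies every clause of $\phi$, i.e.\ iff $\mu_f$ satisfies $\phi$; in all other cases the value returned is strictly less than $m$ and so $\#((m), q_{sat}(v(\idb_k))) = 0$. Since $f$ was an arbitrary partition and $v$ arbitrary within $f$, this establishes both directions of the claim. The main obstacle is purely bookkeeping: tracking the attribute positions through the union/projection/duplicate-elimination/count pipeline, and confirming that under bag semantics the $\varepsilon$ step collapses the up-to-three multiplicities back down to one per clause index before counting.
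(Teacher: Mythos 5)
Your proposal is correct and follows essentially the same route as the paper's proof: it unfolds $q_{sat}$ to see that it outputs the number of distinct clause indices whose tuple passes at least one of the three selections (the step the paper dismisses as ``easy to see''), and then uses the same literal-by-literal translation between the inequalities on $(1,\bot_p)$ / $(\bot_p,1)$ and the truth value of the literal under $\mu_f$, concluding the output equals $m$ iff $\mu_f$ satisfies every clause. Your explicit tracking of the bag-union multiplicities and the duplicate-elimination step is a slightly more careful rendering of the same argument (note only that the clause index is the seventh attribute, so the paper's $\pi_{\$8}$ is a typo your phrasing silently corrects).
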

 \begin{proof}

   From the definition of $q_{sat}$, it is easy to see that
   $\#( (m), q_{sat}(v(\idb_k))=1$ if at least $m$ distinct tuples in
   $R^{\idb_k}$ satisfy one of the conditions $ (\$1 < \$2)$,
   $\$3 < \$4$, or $\$5 < \$6$, $\# ((m), q_{sat}(v(\idb_k))) = 0)$,
   otherwise. We prove the two sides of the claim separately.

   (if part) Suppose that, for each $v$ within $f$, we have
   $\# ((m), q_{sat}(v(\idb_k))) = 1$. Then, there exist $\bc_i$, for
   each $i = 1, \ldots, m$, in $R^{\idb_k}$ that satisfies one of the
   aforementioned conditions. Assume $\bc_i$, for some
   $i = 1, \ldots m$, such that $v(\bc_i)$ satisfy $(\$1 < \$2)$ (the
   other cases are identical). By construction, there exists a clause
   $k_i = \{\ell_1, \ell_2, \ell_3\}$ in $\phi$ such that
   $t(k_i) = \bc_i$. If $\ell_1$ is the variable $x_p$, then
   $t(\ell_1) = ( 1 , \bot_p)$. Since $v(\bc_i)$ satisfies
   $(\$1 < \$2)$, we can conclude that $1 < v(\bot_p)$, for each $v$
   within $f$. Thus, by definition, we have $\mu_f(x_p) = 1$ which, in
   turn, proves that $\mu_f$ satisfies $k_i$. Similar considerations
   allow us to conclude that, if $\ell_1$ is the negated literal
   $\neg x_p$, then $\mu_f(x_p) = 0$, thus proving that $\mu_f$
   satisfies $k_i$. Therefore, we can conclude that $\mu_f$ satisfies
   at least $m$ clauses of $\phi$, and, since $\phi$ consists of $m$
   clauses, it follows that $\mu_f$ satisfies $\phi$.

   (only if part) Assume an assignment $\mu_f$ that satisfies
   $\phi$. Then, $\mu_f$ satisfies all the clauses
   $k_i = \{\ell_1, \ell_2, \ell_3\}$ of $\phi$, for each
   $i = 1, \ldots, m$. By construction, there exists a tuple
   $\bc_i = t(k_i)$ in $R^{\idb_k}$, for each such $i$. Thus, since
   $\mu_f$ satisfies $k_i$, it satisfies at least one of its literals
   $\ell_i$. Consider the case of $\ell_1$ (the other cases are
   identical).  If $\ell_1$ is the variable $x_p$, then
   $t(\ell_1) = ( 1 , \bot_p)$ and, since $\mu_f$ satisfies $k$, we
   have $\mu_f(x_p) = 1$ and $1 < v(\bot_p)$, for each $v$ within
   $f$. Then $v(\bc_i)$ satisfies $(\$ 1 < \$2)$.  Similar
   considerations allow us to conclude that, if $\ell_i = \neg x_p$,
   then $t(\ell_1) = ( \bot_p, 1)$ and $ v(\bot_p) < 1$, and
   $v(\bc_i)$ satisfies $(\$ 1 < \$2)$.  We can conclude that
   $\# ((m), q_{sat}(v(\idb_k))) = 1)$.
 \end{proof}
 
 Let $q_{count}$ be the query that returns the number of tuples in
 $R$, then, from the definition of $\idb_k$, $q_{count}(v(\idb_k))$
 returns the total number of clauses in $\phi$. From now on, we
 analyze the three cases of $\circ \in \{<, =, >\}$ separately.

\underline{Case $\totlunifdef{q, >, k, 0}$} Let $q$ be the following
$\sqlra$ query:

\[q = \pi_{\emptyset}(\sigma_{\$1 = \$2}(q_{sat} \times q_{count}) \cup
  W) \]

Using Claim~\ref{cl:np-hard-qcount}, we can conclude that
$\#((), q(v(\idb_k))) > k$ if and only if $v$ is within a
$(1, 2)$-partition of $Vars(\idb_k)$ such that $\mu_f$ satisfies
$\phi$. Observe now that the set $V_f$ of valuations within $f$ has
probability $\prob_V(V_f) = 0.5^{|Vars(\phi)|} > 0$. Therefore, we can
conclude that $\totlunifdef{q_>, >, k}((), \idb_k) > 0$ if and only if
$\phi$ is satisfiable. The claim follows from $\idb_k$ being
computable in time polynomial in $|\phi|$.

\underline{Case $\totlunifdef{q, <, k, 0}$} Let $q_<$ be the following
$\sqlra$ query:

\[q_< = \pi_{\emptyset}(W) \setminus \pi_{\emptyset}(\sigma_{\$1 =
      \$2}(R \times q_{count})) \]
  Let
  $q' = \pi_{\emptyset}(\sigma_{\$1 = \$2}(q_{sat} \times
  q_{count}))$. Using Claim~\ref{cl:np-hard-qcount}, we can conclude
  that $\#((), q'(v(\idb_k))) = 1$ if $v$ is within a
  $(1, 2)$-partition of $Vars(\idb_k)$ such that $\mu_f$ satisfies
  $\phi$, $\#((), q'(v(\idb_k))) = 0$, otherwise. The claim follows
  from an argument analogous to the one used for the case
  $\totlunifdef{q, >, k}$.

  \underline{Case $\totlunifdef{q, =, k, 0}$} Let $\idb_{k-1}$ be the
  instance of $\cS$ such that $R^{\idb_{k-1}} = R_\phi$, and
  $W^{\idb_{k-1}} = W_{k-1}$. Using Claim~\ref{cl:np-hard-qcount}, we
  can conclude that $\#((), q_>(v(\idb_{k-1}))) = k$ if $v$ is within
  a $(1, 2)$-partition of $Vars(\idb_k)$ such that $\mu_f$ satisfies
  $\phi$, $\#((), q_>(v(\idb_{k-1}))) = k-1$, otherwise. The claim
  follows from an analogous argument to the one used for the case
  $\totlunifdef{q, >, k}$.

\subsection{Proof of Corollary~\ref{cr:no-fpras}}

If there exists an FPRAS for $\otlprob$, one can show that
$NP \subseteq BPP$, where $BPP$ is the class of of decision problems
that can be solved with bounded-error probabilistic polynomial time
algorithm (see, e.g., \cite{complexity-book}). In turn, it is known
that the latter implies that $RP = NP$ (\cite{jerrum-fpras}).

\section{Proofs of Section~\ref{sec:apx}}

\subsection{Proof of Lemma~\ref{lm:db-sampler}}
In what follows, we assume that
$\nulls(\idb) = \{\bot_1, \ldots, \bot_n\}$ and that $\bot_i$ is
defined by $\ndis_i = (\bbR, \bsalg, P_i)$. Moreover, we use
$\cV_\idb = (\pdom_V, \salg_V, \prob_V)$ to denote the valuation space
of $\idb$.

Let now $X$ be the random variable describing the behavior of
$\vsalg$, i.e., $P(X \in \sigma)$ is equal to the probability that the
output of $\vsalg$ falls in the set $\sigma$, for each
$\sigma \in \salg_V$. Let $V(\sigma_1, \ldots, \sigma_n)$ be the set
of valuations such that $v(\bot_i) = \sigma_i$, for each
$i = 1, \ldots, n$. By assumption, each $\nsampalg{i}$ is an
independent random variable, and therefore,
$P(X \in V(\sigma_1, \ldots, \sigma_n)) = \Pi_i P_i(\sigma_i)$, for
each $n$-tuple of sets $\sigma_1, \ldots, \sigma_n \in
\bsalg$. Therefore, $P$ agrees with $\prob_V$ on the generators $\cG$
of $\salg_V$. Moreover, $\cG$ is a $\pi$-system
(Claim~\ref{cl:vals-algebra}), and therefore, from
Proposition~\ref{pr:unique-pie}, it follows that
$P(X \in \sigma) = \prob_V(\sigma)$, for each $\sigma \in \salg_V$.

\subsection{Proof of Lemma~\ref{lm:direct-apx-err}}
Let $\cV = (\pdom_V, \salg_V, \prob_V)$ be the valuation space of
$\idb$. From Lemma~\ref{lm:db-sampler}, we know that, for every
$\sigma_V \in \salg_V$, we have:
\begin{equation}
  P(\vsalg \in \sigma_V) = P_V(\sigma_V) \label{eq:dir-apx-samp}
\end{equation}

From Theorem~\ref{th:measurability}, we know that the set of
valuations $Q = \{v~\mid~ \#(v(\ba), q(v\idb)) \ge n \}$ is measurable
in $\cV$. Applying Equation~\ref{eq:dir-apx-samp} we obtain

\begin{equation}
P(\vsalg \in Q) = P_V(Q) \label{eq:dir-apx-q}
\end{equation}

Let $S_j$, for each $j = 1, \ldots, \gamma$, be the value that is
added to $\mathtt{count}$ in the $j$-th iteration of the
algorithm. Each such $S_j$ is an independent and identically
distributed Bernoulli random variable with
$P(S_j = 1) = P(\vsalg \in Q)$, and therefore the expected value
$E[S_j]$ of $S_j$ is equal to $P(\vsalg \in Q)$. Applying
Equation~\ref{eq:dir-apx-q}, we obtain $E[S_j] = P(Q)$.

We proceed to show that $\mathtt{count}$ can be used to estimate the
value of $E[S_j]$. To this end, let $S$ be a random variable
identically distributed to $S_j$, for each $j = 1, \ldots,
\gamma$. First, we recall the well-known bound on the sum of
independent random variables due to Hoeffding. Let $X_1, \ldots X_n$
be $n$ independent and identically distributed Bernoulli random
variables, and let $X = \sum_{i=1}^n X_n$. Then,
$P(|X - n E[X_n]| \ge t) \le 2e^{-2 t^2n}$. Solving
$ 2e^{-2 t^2n} \le 0.25$ for $n$ we obtain $n \ge t^{-2}$, and
therefore, $P(|X - n E[X_n]| \le t) \ge 0.75 $, for $n \ge
t^{-2}$. Observe now that, after $\gamma$ iterations of the algorithm,
$\mathtt{count}$ is equivalent to $\sum_{i=1}^\gamma S_i$, and
therefore, $P(|\mathtt{count} - n E[S]| \le \err) \ge 0.75 $. To
conclude the proof, we simply recall that $E[S_j] = P(Q)$, for each
$j = 1, \ldots, \gamma$, in turn proving that
$P(|\frac{\mathtt{count}}{n} - P(Q)| \le \err) \ge 0.75$.

\subsection{Proof of Theorem~\ref{th:afpras}}
The fact that $\daalg$ provides the desired approximation is proved in
Lemma~\ref{lm:direct-apx-err}. What is left to show is that the
algorithm runs in time polynomial w.r.t. the input instance and
$\err^{-1}$. Under the assumption that $\idb$ is E.S. , $\vsalg$ runs
in time polynomial in the size of $\idb$. Similarly,
$\mathtt{k} = \lceil \err^{-2} \rceil$ implies that the algorithm runs
polynomially many tests w.r.t. $\err^{-1}$. Finally, we observe that,
given $v$, we can compute $v(\idb)$ and evaluate $q(v(\idb))$ in time
polynomial w.r.t. $\idb$. The claim follows straightforwardly.

\subsection{Proof of Theorem~\ref{th:tuple-query-apx}}

We prove that $\mathtt{apx}_\varepsilon$ has the desired
properties. Assume $V = \{v_1, \ldots, v_\gamma\}$ to be the set of
$\gamma$ outputs of $\vsalg$. We assume that $\idb$ contains the
relation $\mathtt{Rand}^\idb$ such that, for each
$\bot \in \nulls(\idb)$, $\mathtt{Rand}^\idb$ contains the tuple
$(\bot, v_1(\bot), \ldots, v_\gamma(\bot))$. In practice,
$\mathtt{Rand}^\idb$ could also be embedded inside an $\sqlra$ query
without altering the input database but, to ease the presentation, we
assume to have it in $\idb$.

Using $\mathtt{Rand}^\idb$, we proceed to define $\tilde q_v$.  Let
$R \in \schemdb$ be a relation symbol of airty $m$, and let $R_{i,j}$
be the query
\[
  \err(\sigma_{\$1 = \$2}(\sigma_{Const(\$1)}(\pi_{j}(R) \times
  \pi_{j}(R))) \cup \pi_{1,i}(\mathtt{Rand}))\]
where $\varepsilon$ is a shorthand for duplicate elimination, i.e.,
$\pi_{\$1} (\sum_{\$1}^{\$2} ( Apply_{1}(\ldots)))$, and $Const$ is a
predicate that is true over a symbol $c$ if $c \not\in\nulls(\idb)$,
false otherwise.
Intuitively, $R_{i,j}$ defines the $j$-th attribute of $v_i(R^\idb)$,
and the query $\tilde q_V$ will use $R_{i,j}$ to construct
$v_i(R^\idb)$. Let $\theta_j$ be the selection condition
$ {\$j = \$(m+ 2j-1)}$, for $j = 1, \ldots, m$, we define
$\sigma_\theta$ as the query
$\sigma_{\theta_1}(\ldots(\sigma_{\theta_m}(\ldots)))$. The query
$R_i$ is defined as follows:
\[
  \pi_{\$m+2, \$m+4, \ldots, \$m+2m} (\sigma_{\theta}(R \times
  R_{i, 1} \times R_{i, 2} \times \ldots \times R_{i, m}))
\]

We proceed to prove that $R_i(\idb) = v_i(R^\idb)$. To this end,
observe that $R_{i,j}(\idb)$ returns a bag of tuples of arity $2$ of
one of the following two forms: $(n, n)$, where $n$ is a constant
occurring in the $j$-th attribute of $R^\idb$; or $(\bot_i, n)$, where
$n$ is the $i$-th attribute of the tuple $(\bot_i, s_1, \ldots, s_b)$
in $\mathtt{Rand}^\idb$.
Consider now the subquery
$\sigma_{\theta}(R \times R_{i, 1} \times R_{i, 2} \times \ldots
\times R_{i, m})$ of $R_i$. Since $\theta$ is equivalent to the
condition $\bigwedge_{j = 1}^m {\$j = \$(m+ 2j)}$, this query returns
a bag of tuples of arity $3m$ of the form
$\bc = (c_1, \ldots, c_m, c_1, n_1, \ldots, c_m, n_m)$ where the first
$m$ components define a tuple in $R^\idb$, and each pair $(a_j, n_j)$,
for $j = 1, \ldots, m$ is the result of $R_{i,j}(\idb)$. We can
conclude that $v_i(\ba) = (n_1, \ldots, n_m)$. The claim follows from
the fact that $R_i$ returns the projection $(n_1, \ldots, n_m)$ of
$\bc$. 

From now on, we assume $\tilde q_{v_i}$ to be the query obtained from
$q$ by replacing each occurrence of each $R \in \schemdb$ with
$R_i$. It is easy to see that $\tilde q_{v_i} = q(v_i(\idb))$. Thus,
the output of $q_{v_i}$ is a set of tuples of the form $(\bt, k')$
s.t. $\#(\bt, q(v_i(\idb))) = k'$.

Next, we analyze $q_{v, \ba}$. Let $\ba = (a_1, \ldots, a_n)$, and
assume that the left and right endpoints of each $a_i$ are $l_i$ and
$r_i$, respectively. The condition $\att{i} \in v(a_i)$ can be
expressed as $\att{i} \circ_r v(r_i) \wedge \att{i} \circ_l v(l_i)$,
where $\circ_r$ is $<$ if $a_i$ is right-open, $\le$, otherwise, and
where $\circ_l$ is $>$ if $a_i$ is left-open, $\ge$, otherwise. Such
selection condition can be expressed via suitable queries in
$\sqlra$. Then, $q_{v_i, \ba}(\idb)$ returns a set of tuples of the
form $(\bt, k')$ s.t. $\#(\bt, q(v_i(\idb))) = k'$ and $\bt$ is
consistent with $v(\ba)$. Thereofre,
$\sum_{\emptyset}^{\att{1}} q_{v_i,\bar a}$ returns the number of
tuples in $q_{v_i, \ba}(\idb)$ that are consistent with $v_i(\ba)$,
and we can conclude that $\#((), q_{v,\bar a, \circ k}(\idb)) = 1$, if
that number is $\circ k$, $\#((), q_{v,\bar a, \circ k}(\idb)) = 0$,
otherwise.

To prove the claim, we finally analyze the result of
$\mathtt{apx}_\err$. From the definition of the set of valuations
$V = \{v_1, \ldots, v_\gamma\}$,for each $v_j \in V$ and each
$\bot_i \in \nulls(\idb)$ defined by $(\bbR, \bsalg, \prob_i)$, we
have $P(v_j(\bot_i) \in \sigma) = \prob_i(\sigma)$, for each
$\sigma \in \bsalg$. This is due to Lemma~\ref{lm:db-sampler}.

The result of $\mathtt{apx}_\err(\idb))$, then, is equal to a random
variable $\sum_{i = 1}^\gamma X_i$, where each $X_i$ is an independent
and identically distributed Bernoulli random variable. Observe now
that, using the arguments used in the proof of
Lemma~\ref{lm:db-sampler}, we can conclude that
$|P(X_i = 1) - \otlprob(\idb, \ba)| \le \err$. The claim is now a
consequence of the Hoeffding bound (see proof of
Lemma~\ref{lm:direct-apx-err}).

\subsection{Proof of Proposition~\ref{pr:q-size}}
$\mathtt{apx}_\err$ contains $\gamma = \lceil \err^{-2} \rceil$
disjuncts. Each such disjunct can be constructed in time polynomial
w.r.t. $\idb$ and $q$ applying the construction in the proof of
Theorem~\ref{th:tuple-query-apx}. The claim follows straightforwardly.

\subsection{Proof of Theorem~\ref{th:list-query-apx}}

Due to the definition of $\mathtt{compute}_\err$, every tuple
$(\bc, b, p)$ in $eval(\mathtt{apx}^\err_{q}, \idb)$ is such that
$\bc$ occurs $p \cdot \gamma$ times in
$\bigcup_{j=1}^\gamma (q_{v_j}, \idb)$. Therefore, $p$ is the random
variables $\frac{1}{\gamma} \sum_{i =1}^\gamma X_i$, where each $X_i$
is an independent and identically distributed Bernoulli random
variable such that
$|P(X_i = 1) - \otldef{q, =, n}(\idb, \bc)| \le \err$
(Lemma~\ref{lm:db-sampler}). The claim is now a consequence of the
Hoeffding bound (see proof of Lemma~\ref{lm:direct-apx-err}).

\section{Proofs of Section~\ref{sec:finiterepoutput}}

\subsection{Proof of Lemma~\ref{lem:condworldasprob} }

\repeatresult{lemma}{\ref{lem:condworldasprob}}{\lemcondworldasprob}
 \begin{proof}
 
 Let us recall the definition of the probability interpretation of $\condworld$.
 
 The \emph{probabilistic interpretation $(\dom_{\mathcal C},\salg_{\mathcal C}, P_{\mathcal C})$ of a conditional world $\mathcal C \df \{ (\adb_1,c_1), \ldots, (\adb_m,c_m)\}$} is defined by:
\begin{align*}
    \dom_{\mathcal{C}}
    &\df \bigcup_{i=1}^m \{ v(\adb_i)\,\mid\, 
    \dom(v) = \Nulls(\condworld) \}
    \\
     \salg_{\mathcal C} &\df \{ A\,\mid\,\chi^{-1}(A)\in \salg_V\}\\
    %&\df \sigma \{A \, \mid \, A\cap D_i \in \Sigma_i \text{ for every }1\le i\le m \}\\
    P_{\mathcal C}(A) &\df  
    %P_V(\{v\,\mid \, v(\adb_1)\cap A\ne \emptyset , c_1 = \true\}) 
P_V(\chi^{-1}(A))
\end{align*}
where $(\dom_V, \salg_V, P_V)$ is the valuation space of %$\bigcup_{i=1}^k \Nulls(\adb_i)\cup \bigcup_{i=1}^k \Nulls(c_i)$.
$\Nulls(\condworld)$, 
and $\chi : \dom_V \rightarrow \dom_{\condworld}$ is defined by
\[\chi(v) \df \left\{\begin{matrix}
 v(\adb_1)& \text{if } v(c_1) = \true \\ 
 \vdots& \vdots \\ 
 v(\adb_m)& \text{otherwise }% v(c_k) = \true \\ 
\end{matrix}\right.\]

We first show that $\salg_{\condworld}$ is indeed a $\sigma$-algebra. 
First, notice that $\chi^{-1}(\emptyset ) = \emptyset$ and since $\emptyset \in \salg_V$ the it is also the case that $\emptyset \in \salg_{\condworld}$.
Second, 
we wish to show that $\chi^{-1}(\dom_{\condworld}) = \dom_V$. By definition $\chi^{-1}(\dom_{\condworld}) \subseteq \dom_V$, so it suffices to show that the other direction also holds. Let $v\in V$. Due to Definition~\ref{def:conddisj}, there is a unique $i$ for which $v(c_i) = \true$. Thus, $v\in \chi^{-1}(v(A_i))$.
By defintion $\chi^{-1}(\dom_{\condworld}) = \chi^{-1}\left( 
\bigcup_{i=1}^m \{ v(\adb_i)\,\mid\, 
    \dom(v) = \Nulls(\condworld) \}
\right).
$, which allows us to conclude that $\chi^{-1}(\dom_{\condworld}) =\dom_V$.
Third, let $A,B \in \salg_{\condworld}$. It holds that $\chi^{-1}(A\cup B) =\chi^{-1}(A) \cup \chi^{-1}(B)  $. Since $\chi^{-1}(A), \chi^{-1}(B) \in \salg_V$ and $\salg_V $ is a $\sigma$-algebra, we obtain that $A\cup B \in \salg_{\condworld}$.
Last, we can show that $\salg_{\condworld}$ is closed under intersection similarly, which enables us to conclude that $\salg_{\condworld}$ is a $\sigma$-algebra.

We now show that $P_{\condworld}$ is a probability. 
First, 
$P_{\condworld}(\dom_{\condworld}) = P_V(\chi^{-1}(\dom_{\condworld}))$. Since $\chi^{-1}(\dom_{\condworld}) = \dom_V$, and since $P_V$ is a probability,  we can conclude that $P_{\condworld}(\dom_{\condworld}) =1$.
Second, $P_{\condworld}(\emptyset) = P_V(\chi^{-1}(\emptyset))$. By defintion, $P_V(\chi^{-1}(\emptyset)) = \emptyset$, and hence $P_{\condworld}(\emptyset) = 0$.
Last, assume that $A, B$ are disjoint sets of $\dom_{\condworld}$. It holds, by definition,  that $P_{\condworld}(A\cup B) = P_V(\chi^{-1}(A\cup B))$. Since $\chi^{-1}(A\cup B) = \chi^{-1}(A) \cup \chi^{-1}(B)$, we can conclude that $P_{\condworld}(A\cup B) = P_V(\chi^{-1}(A) \cup \chi^{-1}(B))$. Notice that $\chi^{-1}(A) \cap \chi^{-1}(B) = \emptyset$, and since $P_V$ is a probability, we can conclude that  $P_{\condworld}(A\cup B) = P_V(\chi^{-1}(A) )+P_V( \chi^{-1}(B)) $. This implies, by definition, that  $P_{\condworld}(A\cup B) = P_{\condworld}(A) +P_{\condworld}(B)$.
 \end{proof}

\subsection{Proof of Lemma~\ref{lem:conddisj} }
\repeatresult{lemma}{\ref{lem:conddisj}}{\lemconddisj}

\begin{proof}
Let us first recall the definition of conditional worlds:

\defcondworld

Let us denote $\condworld \df \{(\adb_1,c_1),\ldots, (\adb_m, c_m)\}$.
We prove the claim by induction on the structure of $q$, we distinguish between three cases:

If $q$ is Unary and different than $\sigma_{\att{i} < \att{j}}$, then since in the result of applying $q$ there are exactly the same conditions, the claim holds.

If $q$ is Binary, i.e., $q \in \{\times, \cup, \setminus \}$, then using the notation
$\condworld \df \{ (\adb_1,c_1),\ldots,(\adb_m,c_m)\}$ and $\condworld' \df \{ (\adb'_1,c'_1),\ldots,(\adb'_{m'},c'_{m'})\}$, it suffices to show that 
\begin{itemize}
    \item[(1)]
$\set{v \mid v(c_i \cup c'_j) = \true} \cap  \set{v \mid v(c_{i'} \cup c'_{j'}) = \true} = \emptyset $ whenever $i \ne i'$ or $j\ne j'$ 
\item[(2)]
$P_V\left(\set{v \mid \vee_{1\le i\le m, 1\le j\le m'}v(c_i \cup c'_j )= \true}\right) = 1$ 
\end{itemize}

For (1), be definition of $v(c\cup c') = \true$ we have \[ \set{v \mid v(c_i \cup c'_j) = \true \wedge v(c_{i'} \cup c'_{j'}) = \true 
}  = 
 \set{v \mid v(c_i) = \true \wedge  v(c'_j) = \true \wedge  v(c_{i'}) = \true  \wedge  v(c'_{j'}) = \true 
}.\]
Since $ \set{v \mid v(c_i) = \true} \cap \set{ v\mid   v(c_{i'}) = \true } = \emptyset $, we can conclude the desired claim.  

For (2), notice that by definition, 
$\{v\mid v(c_i) = \true \} = 
\{ v\mid v(c_i) = \true \wedge v(c'_1) = \true \}
\cup \cdots\cup 
\{ v\mid v(c_i) = \true \wedge v(c'_{m'}) = \true \}
$ for every $i$.
Again, by definition of $v(c) = \true$, we have 
$\{v\mid v(c_i) = \true \} = 
\{ v\mid v(c_i\cup c'_1) = \true \}
\cup \cdots\cup 
\{ v\mid v(c_i\cup c'_{m'}) = \true \}$ for every $i$.
Since $\condworld$ is a conditional world, we have 
$P_V\left(\cup_{i=1}^m\left(\{ v\mid v(c_i\cup c'_1) = \true \}
\cup \cdots\cup 
\{ v\mid v(c_i\cup c'_{m'}) = \true \}\right)\right) = 1.
$

Finally, if $q$ is of the form $\sigma_{\att{i}<\att{j}}$ then it we need to show:
\begin{itemize}
\item[(1)]
$\set{v \mid v(c_i \cup A') = \true \wedge  v(c_{i'} \cup A'') = \true 
}= \emptyset $ whenever $i \ne i'$ or $A'\ne A''$
\item[(2)]
$P_V\left(\set{v \mid \vee_{1\le \ell \le m, A'\subseteq C(i,j)} v(c_{\ell} \cup A' )= \true}\right) = 1$
\end{itemize}

For (1), note that
each $A'\ne A''$ has at least one $f$ such that $f\in A'$ and $-f \in A''$. Thus, the intersection $A'\cap A'' = \emptyset$ whenever $A'\ne A''$.
Using a similar proof to the  claim (1) in the Binary case we obtain the desired result. 
\OMIT{
$\set{v \mid \vee_{1\le \ell \le m, A'\subseteq C(i,j)} v(c_{\ell} \cup A' )= \true} \supseteq
\set{v \mid \vee_{1\le \ell \le m} v(c_{\ell}  )= \true} 
$ since $A'$ can be the empty set.
Since $P_V$ is a probability function we have 
 $P_V\left(\set{v \mid \vee_{1\le \ell \le m, A'\subseteq C(i,j)} v(c_{\ell} \cup A' )= \true} \right)\ge
P_V\left(\set{v \mid \vee_{1\le \ell \le m} v(c_{\ell}  )= \true}\right) 
$. It holds that $P_V\left(\set{v \mid \vee_{1\le \ell \le m} v(c_{\ell}  )= \true}\right) =1$, and therefore we can conclude the desired claim. 
}

For (2), the claim is shown similarly to claim (2) for Binary queries.
\end{proof}

 \subsection{Proof of Theorem~\ref{thm:condandoutput}}
 \repeatresult{theorem}{\ref{thm:condandoutput}}{\thmcondandoutput}

 We prove a stronger claim:
 \begin{lemma}
 For every conditional world $\condworld$ 
 and query $q$, if for every $\bot_i$ in $\Nulls(\condworld)$ it holds that $P_i(\bot_i = c) = 0 $ for every constant $c\in\mathbb{R}$ then the answer space of $q$ over the probabilistic interpretation $(\dom_{\condworld}, \salg_{\condworld},P_{\condworld})$ of $\condworld$ is a trivial extension of the probabilistic interpretation of $q(\condworld)$.
 \end{lemma}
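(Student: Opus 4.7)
The plan is to induct on the structure of $q$, comparing the two probability spaces at each step. Both sides admit a unifying description: they are push-forwards of the valuation measure $P_V$ of $\Nulls(\condworld)$ along measurable mappings out of $\pdom_V$. By Lemma~\ref{lem:condworldasprob}, the probabilistic interpretation of $\semcondworld{q}$ is the push-forward along the mapping $\chi'$ built from $\semcondworld{q}$, whereas, by Proposition~\ref{pr:query-measurability}, the answer space of $q$ over the PDB of $\condworld$ is the push-forward along $q \circ \chi$, where $\chi$ is the mapping of Lemma~\ref{lem:condworldasprob} for $\condworld$. Since no operator in Definition~\ref{def:semcondworld} introduces fresh nulls, both mappings share the same source $\pdom_V$, and it suffices to check that they agree up to a $P_V$-null subset; the uniqueness part of Proposition~\ref{pr:unique-pie} applied to the $\pi$-system of Claim~\ref{cl:vals-algebra} then propagates agreement on generators to agreement on the whole sigma-algebra.

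For the base case $q = R$ and for every operator other than inequality selection (projection, $\App_f$, $\sum$, equality selection, product, union, difference), the lifted semantics applies the operator pointwise to each arithmetic database while leaving conditions intact; moreover, the disjointness clause of Definition~\ref{def:conddisj} makes binary combinations $(\adb \circ \adb', c \cup c')$ with $c \ne c'$ vacuous, since $c \cup c'$ is then unsatisfiable and contributes only to a $P_V$-null set. A direct syntactic check then gives $\chi'(v) = q(\chi(v))$ for every $v \in \pdom_V$ at which $\chi$ is meaningfully defined, which yields equality of the push-forwards on generators and closes these cases.

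The main obstacle is the inequality selection $q = \sigma_{\att{i}<\att{j}}$, whose lifted semantics splits each $(\adb_\ell, c_\ell) \in \condworld$ across all subsets $B \subseteq \condworld(i,j)$. For each valuation $v$ with $v(c_\ell) = \true$, define the canonical subset $B_v := \{e \in \condworld(i,j) \mid v(e) < 0\}$. Unfolding the definitions yields two facts: (i) $v(\sigma_{B_v}(\adb_\ell)) = \sigma_{\att{i}<\att{j}}(v(\adb_\ell)) = q(\chi(v))$; and (ii) $v(c_\ell \cup c_{B_v}) = \true$ exactly when $v(e) \ne 0$ for every $e \in \condworld(i,j)$. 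Setting
\[
E := \{v \in \pdom_V \mid v(e) = 0 \text{ for some } e \in \condworld(i,j)\},
\]
the mappings $\chi'$ and $q \circ \chi$ agree on $\pdom_V \setminus E$, while on $E$ the image under $q \circ \chi$ may fall outside $\dom_{\semcondworld{q}}$.

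The singleton-unlikely hypothesis is exactly what forces $P_V(E) = 0$: each $e \in \condworld(i,j)$ is a rational expression over finitely many nulls, and either $e$ is identically zero---in which case removing $e$ from $\condworld(i,j)$ alters neither the lifted query nor $\chi'$---or, by an induction on the arithmetic complexity of $e$ combined with a Fubini-style argument exploiting the atomless marginal distribution of each individual null (a direct consequence of singleton-unlikeliness), the zero set $\{v \mid v(e) = 0\}$ is $P_V$-null. A finite union over $\condworld(i,j)$ preserves this, so $\chi'$ and $q \circ \chi$ agree $P_V$-almost everywhere, and their push-forwards coincide on $\salg_{\semcondworld{q}}$. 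The residual measure-zero discrepancy---namely databases produced by valuations in $E$ that need not lie in $\dom_{\semcondworld{q}}$---is precisely what makes the answer space a \emph{trivial} extension of the probabilistic interpretation of $\semcondworld{q}$ rather than an outright equality, so the statement follows.
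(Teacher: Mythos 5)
Your proposal follows essentially the same route as the paper's proof: a structural induction in which both spaces are realized as push-forwards of the valuation measure $P_V$, pointwise agreement of the two instantiation maps for every operator except $\sigma_{\att{i}<\att{j}}$, and, for inequality selection, isolation of the boundary valuations where some expression in $\condworld(i,j)$ vanishes, which singleton-unlikeliness renders $P_V$-null and which is exactly what accounts for the ``trivial extension'' rather than equality (the paper's set $\mathcal{Z}$ plays the role of your set $E$). You are in fact more explicit than the paper on why the zero set of a non-identically-zero rational expression over independent atomless nulls is $P_V$-null; the paper simply asserts the corresponding fact. One auxiliary remark is incorrect, though not load-bearing: in the binary case the two operands are $q_1(\condworld)$ and $q_2(\condworld)$, whose condition sets are in general \emph{different} refinements of the conditions of $\condworld$ (e.g., when a subquery contains an inequality selection), so distinct conditions $c \ne c'$ need not be disjoint and the pairs $(\adb \circ \adb', c \cup c')$ with $c \ne c'$ are not vacuous; the correct (and still easy) argument, which is the one the paper uses, is that for almost every $v$ exactly one combined condition $c_i \cup c'_j$ holds, and for that pair the commutation $v(\adb_i \circ \adb'_j) = v(\adb_i) \circ v(\adb'_j)$ yields the required agreement. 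Finally, your handling of identically-zero expressions quietly amends the lifted semantics (with such an $e$ present, every condition $c_B$ is unsatisfiable under the paper's literal definition), but the paper ignores this degenerate case altogether, so this is a defect you inherit rather than introduce.
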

 
 \begin{proof}
The proof of this lemma is by induction on the structure of $q$.
We denote the answer space of $q$ over $(\dom_{\condworld}, \salg_{\condworld},P_{\condworld})$ by $(\dom_{q,\condworld}, \salg_{q,\condworld},P_{q,\condworld})$, 
and the probabilistic interpretation of $q(\condworld)$
by $(\dom_{q(\condworld)}, \salg_{q(\condworld)},P_{q(\condworld)})$.
We also denote  $\condworld \df \set{(\adb_1,c_1),\ldots, (\adb_m, c_m) }$. (Notice that, by definition, it holds that $q(\condworld) = 
 \set{ (q(\adb_1),c_1),\ldots, (q(\adb_m), c_m)}$. 
)

 \subsubsection*{$q$ is Unary and  different from $\sigma_{\att{i} < \att{j}}$}
 
 We denote, 

 \begin{tikzcd}
(\dom_V, \salg_V,P_V) 
\arrow[rd, "\tilde{\chi}"] \arrow[r, "\chi"] & (\dom_{\condworld},\salg_{\condworld},P_{\condworld})
\arrow[r, "q"] & (\dom_{q,\condworld},\salg_{q,\condworld},P_{q,\condworld}) \\
& (\dom_{q(\condworld)},\salg_{q(\condworld)},P_{q(\condworld)})&
\end{tikzcd}
where $\chi(v) \df \left\{\begin{matrix}
 v(\adb_1)& \text{if } v(c_1) = \true \\ 
 \vdots& \vdots \\ 
 v(\adb_m)& \text{otherwise }
 \end{matrix}\right.$.

Our goal is to show that $(\dom_{q(\condworld)},\salg_{q(\condworld)},P_{q(\condworld)})$ and $(\dom_{q,\condworld},\salg_{q,\condworld},P_{q,\condworld})$ are similar. Note that $$\tilde\chi(v) \df \left\{\begin{matrix}
 v(q(\adb_1))& \text{if } v(c_1) = \true \\ 
 \vdots& \vdots \\ 
 v(q(\adb_m))& \text{otherwise }
 \end{matrix}\right.$$

 By definition, it holds that the domain of $\dom_{q(\condworld)}$ is similar to that of $\dom_{q,\condworld}$. 
 Due to definition, we denote
  
 We now show that the $\sigma$-algebras are similar. 
 Let $A\in \salg_{q(\condworld)}$. We show that \[v\in \tilde{\chi}^{-1}(A) \text{ if and only if  }v\in 
 \chi^{-1}\left(q^{-1}(A)\right)\]
 Without loss of generality, we assume that $v(c_i) = \true$. Thus, $v\in \tilde{\chi}^{-1}(A)$ iff there is $a\in A $ such that $\tilde\chi(v) = a$. Since we have the assumption on $v$, the former holds if and only if there is $a\in A $ such that $a\in v\left(q(\adb_i)\right)$. By definition, $v$ and $q$ are interchangeable and thus we have that  the former holds if and only if there is $a\in A $ such that $a\in q\left(v(\adb_i)\right) = q\left(\chi(v)\right) $. By definition, this holds iff $v\in \chi^{-1}(q^{-1}(A))$, which completes the claim, and allows us to conclude that $A\in \salg_{q,\condworld}$.
 We can show similarly that $A\in \salg_{q,\condworld}$ implies $A\in \salg_{q(\condworld)}$.
 Notice that this implies that for each set $A$ $\sigma$-algebra, it holds that $ \tilde{\chi}^{-1}(A) =
 \chi^{-1}\left(q^{-1}(A)\right)$ and therefore we can conclude that $P_{q(\condworld)}(A) = P_{q,\condworld}(A)$.

\OMIT{
 Note that $\dom_{q(\condworld)}$ can be written as the disjoint union \[\{v(q(\adb_1)) \mid v(c_1) = \true \} \cup \cdots \cup\{v(q(\adb_m)) \mid v(c_1) = \false \wedge \cdots \wedge v(c_{m-1}) = \false \},\] and thus
 we can write $A$ also as the disjoint union 
 \[(A\cap \{v(q(\adb_1)) \mid v(c_1) = \true \} )\cup \cdots \cup(A \cap \{v(q(\adb_m)) \mid v(c_1) = \false \wedge \cdots \wedge v(c_{m-1}) = \false \} )\]
 Let us denote $B_i \df A\cap \{v(q(\adb_1)) \mid v(c_1) = \true \}$.
 Since $\tilde{\chi}^{-1}(A)$ is in $\salg_V$ and since by its definition $\{v(q(\adb_i)) \mid v(c_i) = \true \}$ is in $\salg_V$, we can conclude that each $B_i$ is in $\salg_V$.
 
 On the other hand, due to the definition of $q$, we have that $q$ and $v$ are interchangeable and thus $B_i=A\cap \{q(v(\adb_i)) \mid v(c_i) = \true \}$ and thus $A = 
 (A\cap \{q(v(\adb_1)) \mid v(c_1) = \true \} )\cup \cdots \cup(A \cap \{q(v(\adb_m)) \mid v(c_m) = \true \} )$.
 Each $A\cap \{q(v(\adb_i)) \mid v(c_i) = \true \}$ is measurable since each $B_i$ is measurable, which completes the proof.
 }

 \subsubsection*{$q$ is Binary}
 In this case we are dealing with the operators $\{\times, \cup, \setminus \}$. For convenience, for every $\circ\in \{\times, \cup, \setminus \}$ we write $q^{\circ}(A, B)$ to denote $A\,\circ\,B$. Since the treatment for all of the operators is similar, in what follows,  we just use the notation $q(A,B)$.
 
Let $\condworld = \{ (\adb_1,c_1),\ldots, (\adb_m,c_m) \}$ and $\condworld' = \{  (\adb'_1,c'_1),\ldots, (\adb'_m,c'_n) \}$. We have

\begin{tikzcd}
(\dom_V, \salg_V,P_V) 
 \arrow[rd, "{\chi'}"] \arrow[r, "\chi"] & (\dom_{\condworld},\salg_{\condworld},P_{\condworld})
 \\
& (\dom_{\condworld'},\salg_{\condworld'},P_{\condworld'})
\end{tikzcd}
where $\chi(v) \df \left\{\begin{matrix}
 v(\adb_1)& \text{if } v(c_1) = \true \\ 
 \vdots& \vdots \\ 
 v(\adb_m)& \text{otherwise }
 \end{matrix}\right.$ and $\chi(v) \df \left\{\begin{matrix}
 v(\adb'_1)& \text{if } v(c'_1) = \true \\ 
 \vdots& \vdots \\ 
 v(\adb'_n)& \text{otherwise }
 \end{matrix}\right.$

\begin{tikzcd}
(\dom_{\condworld},\salg_{\condworld},P_{\condworld}),(\dom_{\condworld'},\salg_{\condworld'},P_{\condworld'})
\arrow[r, "q"] 
& (\dom_{q,\condworld,\condworld'},\salg_{q,\condworld,\condworld'},P_{q,\condworld,\condworld'})
\end{tikzcd}
where $q$ is defined in Definition~\ref{def:semcondworld}.

\begin{tikzcd}
(\dom_V, \salg_V,P_V) 
\arrow[r, "\tilde \chi"] 
& (\dom_{q(\condworld,\condworld')},\salg_{q(\condworld,\condworld')},P_{q(\condworld,\condworld')})
\end{tikzcd}
where $\tilde \chi (v) \df v(q(\adb_i,\adb'_j))$ whenever $v(c_i) = \true , v(c'_j)$.

By definition, it holds that the domains are similar. 

Very similarly to Unary $q$, we show
 We now show that the $\sigma$-algebras are similar. 
 Let $A\in \salg_{q(\condworld,\condworld')}$. We show that \[v\in \tilde{\chi}^{-1}(A) \text{ if and only if  }v\in (\chi^{-1}(B_1),\chi^{'-1}(B_2))\text{ where  }q^{-1}(A) = (B_1,B_2) \]
 Without loss of generality, we assume that $v(c_i) = \true$ and $v(c'_j) = \true$ (which implies that there are no other $i',j'$ for which $v(c_i) = \true$ and $v(c'_j) = \true$). 
 Thus, $v\in \tilde{\chi}^{-1}(A)$ iff there is $a\in A $ such that $\tilde\chi(v) = a$. Since we have the assumption on $v$, the former holds if and only if there is $a\in A $ such that $a\in v\left(q(\adb_i,\adb'_j)\right)$. 
 By definition, $v$ and $q$ are interchangeable and thus we have that  the former holds if and only if there is $a\in A $ such that $a\in q\left(v(\adb_i), v(\adb_j)\right) = q\left(\chi(v),\chi'(v)\right) $. 
 By definition, this holds iff $v\in (\chi^{-1}(B_1),\chi^{'-1}(B_2))$ where $q^{-1}(A) = (B_1,B_2)$ which completes the claim, and allows us to conclude that $A\in \salg_{q,\condworld}$.
 We can show similarly that $A\in \salg_{q,\condworld}$ implies $A\in \salg_{q(\condworld)}$.
 Notice that this implies that for each set $A$ $\sigma$-algebra, it holds that $ \tilde{\chi}^{-1}(A) =
 (\chi^{-1}(B_1),\chi^{'-1}(B_2))$ where $q^{-1}(A) = (B_1,B_2)$ and therefore we can conclude that $P_{q(\condworld)}(A) = P_{q,\condworld}(A)$.
 
 \subsubsection*{$q$ is of the form $\sigma_{\att{i}<\att{j}}$:}
 In this case, we use a proof that is very similar to the Unary case but we have to take care of the valuations that do not satisfy any of the conditions. Luckily, their measure is zero so they do not affect dramatically the situation. We now formalize the proof. 
 
  We denote, 

 \begin{tikzcd}
(\dom_V, \salg_V,P_V) 
\arrow[rd, "\tilde{\chi}"] \arrow[r, "\chi"] & (\dom_{\condworld},\salg_{\condworld},P_{\condworld})
\arrow[r, "q"] & (\dom_{q,\condworld},\salg_{q,\condworld},P_{q,\condworld}) \\
& (\dom_{q(\condworld)},\salg_{q(\condworld)},P_{q(\condworld)})&
\end{tikzcd}
where $\chi(v) \df \left\{\begin{matrix}
 v(\adb_1)& \text{if } v(c_1) = \true \\ 
 \vdots& \vdots \\ 
 v(\adb_m)& \text{otherwise }
 \end{matrix}\right.$.

Our goal is to show that $(\dom_{q(\condworld)},\salg_{q(\condworld)},P_{q(\condworld)})$ and $(\dom_{q,\condworld},\salg_{q,\condworld},P_{q,\condworld})$ are similar. 

Recall that $
\semcondworld{\sigma_{\att{i}<\att{j}}} \df 
\{(\sigma_{B}(\adb), c \cup c_B \,\mid\,
(\adb,c)\in \condworld,\, 
B\subseteq \condworld(i,j) \} $ and therefore
$\tilde\chi(v) \df \sigma_B(\adb)$ whenever $v(c) = \true, v(c_B) = \true$.

Note that, by definition, $\dom_{q,\condworld} \supseteq  \dom_{q(\condworld)}$. Moreover, the difference $\dom_{q,\condworld} \setminus  \dom_{q(\condworld)}$ consists of selections of the form  $\sigma_\theta(\adb)$ such that there is $c$ where $(\adb,c)\in\condworld$, and $\theta$ can be written as $f = 0 \wedge \theta'$ for $f\in \condworld(i,j)$. We call the union of all possible such sets $\mathcal Z$.

 Recall that we assumed that each null in $\Nulls(\idb)$ has a singleton-unlikely distribution. Thus, for each set  $Z\subset \mathcal{Z}$, it holds that $P_V\left(\chi^{-1} \left(q^{-1} (Z)\right)\right) = 0$.
 Therefore, each set $A\in \salg_{q,\condworld}$ can be written as a union of $((A \setminus \mathcal Z )\cup  (A \cap \mathcal Z ))$.
 Using similar arguments as before, we can show that $A \setminus \mathcal Z \in \salg_{q(\condworld)}$. On the other hand, if $A\in \salg_{q(\condworld)}$ then for every $B\subseteq \mathcal Z$ it holds that $A\cup B\in \salg_{q,\condworld}$. 
 
 This allows us to conclude that indeed $\salg_{q,\condworld} \supseteq \salg_{q(\condworld)}$, and that the probabilities behave as stated.
  \end{proof}
  
  The proof of the theorem is a direct consequence of the above lemma based on the fact that the probabilistic interpretation of $\{(\idb,-1) \}$ equals the PDB of $\idb$.
\end{document}